\documentclass[a4paper,twocolumn,11pt,unpublished]{quantumarticle}
\pdfoutput=1
\usepackage[utf8]{inputenc}
\usepackage[english]{babel}
\usepackage[T1]{fontenc}
\usepackage{amsmath}
\usepackage{hyperref}
\usepackage[capitalise]{cleveref}
\usepackage{braket}
\usepackage{amssymb}
\usepackage{quantikz}
\usepackage{xcolor}
\usepackage{overpic}
\usepackage{booktabs}
\usepackage{amsthm}
\usepackage{amsmath}
\usepackage{bm}
\usepackage{comment}

\definecolor{softred}{HTML}{FF765F} 
\definecolor{strongred}{HTML}{B51700}
\definecolor{gold}{HTML}{DDC13F}
\definecolor{softblue}{HTML}{71BFFF}
\definecolor{strongblue}{HTML}{0076BA}
\definecolor{mygrey}{HTML}{5E5E5E}
\definecolor{quantum}{HTML}{53257F}

\hypersetup{
    colorlinks = true,
    linkcolor = quantum,
    citecolor = quantum
}

\newcommand{\norm}[1]{\left\lVert #1 \right\rVert}

\usepackage{tikz}
\usetikzlibrary{positioning}
\usepackage{lipsum}

\newtheorem{theorem}{Theorem}
\newtheorem{corollary}{Corollary}

\newtheorem{lemma}{Lemma}
\newtheorem{definition}{Definition}

\newcommand{\Tr}[1]{\text{Tr}\left( #1 \right)}

\let\vec\bm

\begin{document}
\title{Lower Bounds on Coherent State Rank}
\author{Florian Cottier}
\affiliation{DIENS, \'Ecole Normale Supérieure, PSL University, CNRS, INRIA, 45 rue d'Ulm, 75005 Paris, France}
\affiliation{\'Ecole Polytechnique Fédérale de Lausanne (EPFL), CH-1015 Lausanne, Switzerland}
\author{Ulysse Chabaud}
\affiliation{DIENS, \'Ecole Normale Supérieure, PSL University, CNRS, INRIA, 45 rue d'Ulm, 75005 Paris, France}

\begin{abstract} 
\onecolumn

The approximate coherent state rank is the minimal number of (classical) coherent states required to approximate a continuous-variable bosonic quantum state and directly relates to the classical complexity of simulating bosonic computations. Despite its importance, little is known about lower bounds on this quantity, even for basic families of states.
In this work, we initiate a systematic study of lower bounds on the approximate coherent state rank. Our contributions are as follows.

(i) We introduce a technique based on low-rank approximation theory yielding generic lower bounds on the approximate coherent state rank of arbitrary single-mode states.

(ii) Using this technique, we find a complete characterization of all single-mode states of finite approximate coherent state rank, and we obtain in particular analytical expressions for the approximate coherent state rank of squeezed states and of finite superpositions of Fock states.

(iii) We further show that our single-mode lower bounds can be lifted to multimode lower bounds for finite superpositions of multimode Fock states.

(iv) Finally, we prove a super-polynomial lower bound on the approximate coherent state rank of the $n$-mode Fock state $\ket{1}^{\otimes n}$, by exploiting a connection to the permanent. To do so, we show that the algebraic complexity of approximate multi-linear formulas for the permanent is super-polynomial, building upon the proof of a lower bound for exact formulas due to Raz \cite{raz2009multi}.

Our results establish an unconditional barrier to efficient classical simulation of Boson Sampling via coherent state decompositions and connect non-classicality of bosonic quantum systems to central questions in algebraic complexity.

\end{abstract}

\tableofcontents

\onecolumn
\section{Introduction}

The problem of identifying the features that fundamentally distinguish quantum systems from their classical counterparts dates back to the origins of quantum mechanics. In the modern language of quantum information theory, these distinctive features are formalized as resources enabling quantum advantage for information processing tasks \cite{chitambar2019quantum}. Characterizing, quantifying, and leveraging these resources remains a central challenge across quantum science \cite{preskill2018quantum,eisert2025mind}.

\smallskip

The main obstacle to harnessing quantum resources for information processing is that the mere presence of quantum effects does not guarantee a quantum advantage; indeed, some quantum systems can be efficiently simulated by classical computers. Prominently, \textit{coherent states}, describing for instance the quantum state of the light coming out of a laser, closely resemble classical electromagnetic waves and can be regarded as classical \cite{mandel1986non}. In contrast, non-classical states such as Schrödinger cat states (superpositions of coherent states) exhibit genuinely quantum features and play a key role in applications ranging from quantum sensing \cite{munro2002weak,joo2011quantum,facon2016sensitive} to quantum error correction \cite{mirrahimi2014dynamically,ofek2016extending,grimm2020stabilization}.

Historically, non-classicality in quantum physics has been characterized using phase-space concepts, whereby a state is deemed non-classical if its associated Glauber--Sudarshan phase-space distribution fails to behave as a genuine probability distribution \cite{glauber_1963,sudarshan_1963}.
This qualitative distinction was later refined into a quantitative hierarchy through the notion of \textit{degree of non-classicality} \cite{gherke_2012, Sperling_2015}. In this framework, coherent states are treated as the most classical states, and any pure state $\ket{\psi}$ is expressed as a superposition of coherent states. The degree of non-classicality is then defined as the minimal number of coherent states required in such a decomposition, namely the smallest integer $k$ such that $\ket{\psi} = \sum_{j=1}^k c_j \ket{\alpha_j}$ where $\ket{\alpha_j}$ are distinct coherent states and $c_j \in \mathbb{C}$, and thus ranks quantum states based on their non-classicality.

\smallskip

Beyond its fundamental relevance in quantum physics, the theory of non-classicality was leveraged more recently in the context of quantum computing with bosonic systems \cite{Marshall_2023}. These systems are ubiquitous in nature and have attracted growing attention for scalable, fault-tolerant quantum computing due to their remarkable quantum error correction properties \cite{Gottesman_2001,mirrahimi2014dynamically,michael2016new}. This potential has led to a series of experimental demonstrations of bosonic encodings across a wide range of architectures, including trapped ions, superconducting circuits, and photonic systems \cite{fluhmann2019encoding, konno2024, larsen2025integrated}.
In \cite{Marshall_2023}, Marshall and Anand establish a direct link between the non-classicality of a bosonic state in a computation, in the form of its so-called \textit{approximate coherent state rank}, and the classical computational overhead required for its simulation. The approximate coherent state rank generalizes the degree of non-classicality and is defined as the minimal size of superpositions of coherent states necessary to approximate a quantum state to arbitrary precision in fidelity. 

The relevance of this non-classicality measure stems from the fact that coherent states evolving under so-called Gaussian bosonic circuits can be efficiently simulated classically \cite{Bartlett_2002}. Supplemented with appropriate input states, Gaussian bosonic circuits implement universal quantum computations \cite{baragiola2019all}. As a consequence, the complexity of simulating a bosonic quantum computation is directly related to the approximate coherent state rank of the input states involved in the computation.
This connection raises the question of how large the approximate coherent state rank must be for physically relevant states, and whether it can serve as evidence of the classical intractability of bosonic computations.

\smallskip

This question mirrors a related line of research on the \textit{stabilizer rank} in qubit systems. For qubit computations, the Gottesman--Knill theorem \cite{Gottesman_1998} establishes that a quantum circuit taking computational basis states as input and performing only unitary gates from the Clifford group and Pauli measurements can be simulated efficiently by a classical computer, because the states throughout the computation are limited to \textit{stabilizer states}, which admit an efficient classical description. On the other hand, universal quantum computing can be achieved by supplementing such circuits with superpositions of stabilizer states via magic state injection \cite{Bravyi_2005}. By linearity, the size of these superpositions (their stabilizer rank) directly relates to the running time of classical simulation algorithms for the corresponding quantum computation. In particular, an efficient polynomial-size stabilizer rank decomposition of the magic state $\ket H^{\otimes n}$ would directly imply that classical computers can simulate their quantum counterparts, i.e.\ $\mathsf{BPP}=\mathsf{BQP}$. While separating the complexity classes $\mathsf{BPP}$ and $\mathsf{BQP}$ is a daunting challenge, researchers have focused on refining upper and lower bounds on the stabilizer rank of $\ket H^{\otimes n}$. To date, the best upper bounds are exponential both in the exact \cite{Bravyi_2016,kocia2020improved,qassim2021improved} and approximate case \cite{bravyi2016improved}, while lower bounds have also attracted significant attention \cite{Bravyi_2016,Peleg_2022,Lovitz2022newtechniques,mehraban2024}, the best known lower bounds being quadratic (up to logarithmic factors) for both the exact and approximate stabilizer rank \cite{mehraban2024}. Proving a super-polynomial lower bound on the stabilizer rank of $\ket H^{\otimes n}$ is an outstanding open question.

\smallskip

Akin to the stabilizer rank, the approximate coherent state rank has strong ties with complexity theory. Indeed, it was noticed by Troyansky and Tishby \cite{troyansky1996quantum} and Scheel \cite{scheel2004} that quantum amplitudes of bosonic systems are closely related to matrix permanents, a combinatorial quantity that is $\#\mathsf P$-hard to compute \cite{Valiant_1979}. Specifically, Scheel showed that, when $n$ identical bosons in the Fock state $\ket1^{\otimes n}$ are sent through a linear network, the probability amplitude of a specific output configuration is proportional to the permanent of a matrix describing the network.
A consequence of this result is that an efficient polynomial-size approximate coherent state rank decomposition of the bosonic state $\ket1^{\otimes n}$ would lead to $\mathsf P=\mathsf P^{\#\mathsf P}$ and in particular to a collapse of the whole polynomial hierarchy by Toda's theorem \cite{toda1991pp}, i.e.\ $\mathsf P=\mathsf{NP}$\footnote{Aaronson and Arkhipov famously exploited the $\#\mathsf P$-hardness of the permanent to prove the classical hardness of the sampling version of this problem, known as Boson Sampling \cite{aaronson_2010}, unless the polynomial hierarchy collapses to its third level.}. The permanent is also a central object in the study of algebraic complexity, being $\mathsf{VNP}$-complete \cite{valiant1979completeness}. The smallest known formulas for the permanent are of exponential size and attributed to Glynn \cite{glynn_2010} and Ryser \cite{Ryser_1991}, leading to the fastest known algorithms for computing the permanent. Recently, it was noticed in \cite{Chabaud_2022_permanent} that Glynn’s formula for the permanent can be obtained via an approximate coherent state decomposition of size $2^n$ of the state $\ket{1}^{\otimes n}$. As a result, showing that the approximate coherent state rank of $\ket{1}^{\otimes n}$ is less than $2^n$ would likely yield a faster algorithm to compute the permanent than via Glynn's (or Ryser’s) formula, while a polynomial approximate coherent state rank is unlikely unless $\mathsf{VP}=\mathsf{VNP}$.

\smallskip

These connections motivate the importance of bounding the approximate coherent state rank of bosonic quantum states: on the one hand, upper bounds lead to improved classical algorithms for the permanent and for simulating bosonic quantum computations; on the other hand, lower bounds provide quantitative certificates of non-classicality and unconditional results towards separating existing complexity classes. While some upper bounds on the approximate coherent state rank are available for specific states \cite{Marshall_2023,upreti2025bounding}, little is known about lower bounds.

\smallskip

In this work, we initiate the search for lower bounds on the approximate coherent state rank of bosonic quantum states. 
We first consider single-mode states and prove a generic lower bound on the approximate coherent state rank using low-rank approximation theory, which leads to tight lower bounds for specific families of bosonic quantum states. In particular, we find a complete characterization of single-mode states of finite approximate coherent state rank, leading to analytical expressions for the approximate coherent state rank of single-mode Fock states, finite superpositions of Fock states, and squeezed states. 
Next, we turn to multimode states and show that our single-mode results can be lifted to this setting, obtaining a non-trivial lower bound on the approximate coherent state rank of finite superpositions of multimode Fock states. 
Finally, we prove a super-polynomial lower bound on the approximate coherent state rank of the Fock state $\ket1^{\otimes n}$, by leveraging its connection with the algebraic complexity of the permanent.

\subsection{Technical overview of results}

In this section, we provide a technical overview of our results, including proof techniques.

\subsubsection{Definitions}

Let us start by introducing a few preliminary definitions. We refer the reader to \cref{sec:prelim} for further preliminary material and background.

\smallskip

Bosonic quantum states belong to an infinite-dimensional Hilbert space equipped with a countable basis. For single-mode systems, such an orthonormal basis is the so-called Fock basis denoted $\{\ket n\}_{n\in\mathbb N}$, while for $m$-mode systems ($m$ computational registers), the Fock basis is given by $\{\ket{\bm n}\}_{\bm n=(n_1,\dots,n_m)\in\mathbb N^m}$ where $\ket{\bm n}=\ket{n_1}\otimes\dots\otimes\ket{n_m}$. Canonical bosonic operators known as creation and annihilation operators, respectively, are defined by their action on the Fock basis as $\hat a^\dag\ket n=\sqrt{n+1}\ket{n+1}$ and $\hat a\ket n=\sqrt n\ket{n-1}$, with $\hat a\ket0=0$.

We introduce a few standard bosonic quantum states used in this work. \textit{Fock states} are elements of the Fock basis, with $\ket0$ being the \textit{vacuum state}, while finite superpositions of Fock states are also known as \textit{core states}.
Single-mode \textit{coherent states} are defined in the Fock basis as
\begin{equation}
    \ket{\alpha}=e^{-\frac12|\alpha|^2}\sum_{n=0}^{+\infty}\frac{\alpha^n}{\sqrt{n!}}\ket n,
\end{equation}
for $\alpha\in\mathbb C$, while $m$-mode coherent states are obtained via $\ket{\bm\alpha}=\ket{\alpha_1}\otimes\dots\otimes\ket{\alpha_m}$, for $\bm\alpha\in\mathbb C^m$. Single-mode coherent states are obtained from the vacuum state as $\ket\alpha=\hat D(\alpha)\ket0$, where $\hat D(\alpha)=e^{\alpha\hat a^\dag-\alpha^*\hat a}$ is the so-called displacement operator of amplitude $\alpha$.
Single-mode \textit{squeezed states} are defined in the Fock basis as
\begin{equation}
    \ket{\xi}=\frac1{\sqrt{\cosh(r)}}\sum_{n=0}^\infty\left(-e^{i\phi}\tanh(r)\right)^n\frac{\sqrt{(2n)!}}{2^nn!}\ket{2n},
\end{equation}
for $\xi=re^{i\phi}\in\mathbb C$.

\smallskip

The main focus of this paper, the $\varepsilon$-approximate coherent state rank, is defined as follows:
 
\begin{definition}[$\varepsilon$-approximate coherent state rank \cite{Marshall_2023,upreti2025bounding}]
Let $\varepsilon>0$. The $\varepsilon$-approximate coherent state rank of an $m$-mode state $\ket{\bm\psi}$, denoted $\kappa_\varepsilon({\bm\psi})$, is the smallest $k$ such that there exists a state $\ket{\bm\phi}=\sum_{j=1}^kc_j\ket{\bm\alpha_j}$, where $c_j\in\mathbb C$ and $\ket{\bm\alpha_j}$ are $m$-mode coherent states, such that $F(\bm\psi,\bm\phi)\geq1-\varepsilon$.
\end{definition}
 
\noindent Here $F$ denotes the fidelity. Note that the $\varepsilon$-approximate coherent state rank is either an integer or infinite.
In the limit where $\varepsilon$ goes to $0$, we refer to the $\varepsilon$-approximate coherent state rank simply as the approximate coherent state rank, defined by $\kappa=\sup_{\varepsilon>0}\kappa_\varepsilon=\lim_{\varepsilon\to0^+}\kappa_\varepsilon$.

\smallskip

With these preliminary definitions in place, we now turn to our results.

\subsubsection{Single-mode results}

Our first result is a generic lower bound on the $\varepsilon$-approximate coherent state rank of an arbitrary single-mode state $\ket{\psi}$, based on low-rank approximation theory. The result is stated informally hereafter, and we refer to \cref{thm:lower_bound_appox_coherent_stellar} for a formal statement and proof.

\begin{theorem}[Generic lower bounds on the $\varepsilon$-approximate coherent state rank (informal)]
Let $\ket{\psi}$ be a single-mode quantum state. There is a family of non-negative, efficiently computable functions $(f_N)_{N\in\mathbb N}$ such that
\begin{equation}
    \varepsilon < f_N(\psi,r)\quad\Rightarrow\quad\kappa_\varepsilon(\psi)>r,
\end{equation}
for all $r\le N\in\mathbb N$.
\label{thm:lower_bound_csr_informal} 
\end{theorem}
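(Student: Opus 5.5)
We reduce the question to a truncated low-rank matrix approximation problem in the Fock basis. Write $\ket{\phi}=\sum_{j=1}^r c_j\ket{\alpha_j}$ and expand each coherent state in the Fock basis. The amplitude $\braket{n|\phi}=\sum_j c_j e^{-|\alpha_j|^2/2}\alpha_j^n/\sqrt{n!}$ is, up to the factor $1/\sqrt{n!}$, a linear combination of $r$ geometric sequences $(\alpha_j^n)_n$. This is the stellar/Hankel structure we will exploit.

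The key observation is that the sequences $(\alpha_j^n)_n$ have rank-one Hankel matrices. For a state $\ket\psi$ with amplitudes $\psi_n=\braket{n|\psi}$, define the rescaled sequence $u_n=\sqrt{n!}\,\psi_n$ and the $(N+1)\times(N+1)$ Hankel matrix
\begin{equation}
H_N(\psi)_{kl}=\sqrt{\frac{(k+l)!}{\,w_{k,l}\,}}\,\psi_{k+l},\qquad 0\le k,l\le N,
\end{equation}
with a fixed positive weight normalisation $w_{k,l}$ (for instance $w_{k,l}=(k+l)!$, which gives the plain Hankel matrix of $u$, or a diagonal rescaling chosen to keep the operator norms controlled). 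For $\ket\phi$ as above, $H_N(\phi)=\sum_j c_j e^{-|\alpha_j|^2/2}\,v_jv_j^{T}$ with $v_j=(\alpha_j^k)_k$, so $\mathrm{rank}\,H_N(\phi)\le r$. Note that $N$ must be at least $r$ for this to be informative, which accounts for the constraint $r\le N$.

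To turn this into the bound, assume $F(\psi,\phi)\ge 1-\varepsilon$. After fixing the global phase and normalising $\phi$, this gives $\|\psi-\phi\|_2^2\le 2(1-\sqrt{1-\varepsilon})\le 2\varepsilon$. Each entry of $H_N(\psi)-H_N(\phi)$ is $\sqrt{(k+l)!}$ times a coefficient difference, and coefficient index $m$ appears at most $N+1$ times. Hence
\begin{equation}
\|H_N(\psi)-H_N(\phi)\|_F^2\le (N+1)\,(2N)!\,\|\psi-\phi\|^2\le 2(N+1)(2N)!\,\varepsilon.
\end{equation}
A diagonal rescaling $D H D$ improves these constants but does not change the rank, so we keep the scaling freedom in the definition. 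By Eckart--Young--Mirsky, any rank-$\le r$ matrix $M$ satisfies $\|H_N(\psi)-M\|_F^2\ge\sum_{i>r}\sigma_i(H_N(\psi))^2$. This suggests defining
\begin{equation}
f_N(\psi,r)=\frac{\sum_{i>r}\sigma_i^2\!\left(H_N(\psi)\right)}{2(N+1)(2N)!}.
\end{equation}
This function is non-negative and computable from the first $2N+1$ Fock amplitudes by an SVD. If $\varepsilon<f_N(\psi,r)$, no $\phi$ with at most $r$ coherent terms reaches fidelity $1-\varepsilon$, so $\kappa_\varepsilon(\psi)>r$.

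The main obstacle is the fidelity-to-norm step. The approximant $\phi$ may have huge $|c_j|$ with near-cancelling terms, since coherent states are non-orthogonal. We avoid this entirely by comparing only the normalised vectors $\psi$ and $\phi$, never the individual coefficients $c_j$, so rank-$\le r$ is all we use. We must also handle the global phase, by replacing $\phi$ with $e^{i\theta}\phi/\|\phi\|$, which is still a sum of $r$ coherent states, and the degenerate case of coinciding $\alpha_j$, where the rank can only drop. The remaining care is in choosing the weighting so that $f_N$ is non-trivial for physically relevant states. We would take the Hankel matrix in the normalised basis so that the factorials do not swamp the bound, but this choice only affects constants, not validity.
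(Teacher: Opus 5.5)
Your proposal is correct and follows the paper's proof: the same $f_N(\psi,r)=\frac{1}{2(N+1)(2N)!}\sum_{l>r}\sigma_l(H_N(\psi))^2$, obtained from the rank-$\le r$ Vandermonde factorization of the Hankel matrix of $\sqrt{n!}\,\phi_n$, Eckart--Young--Mirsky, the entry-counting bound $\|H_N(\psi)-H_N(\phi)\|_F^2\le (N+1)(2N)!\,\|\psi-\phi\|_2^2$, and $\|\psi-\phi\|_2^2\le 2\varepsilon$. One correction to your side remarks: the weight must be $w_{k,l}\equiv 1$, or a rank-preserving diagonal rescaling such as the paper's $b^{k+l}$ in its optimized appendix bound, because $w_{k,l}=(k+l)!$ (the Hankel matrix of the raw amplitudes ``in the normalised basis'') is not of rank $\le r$ for superpositions of $r$ coherent states (already for a single $\ket{\alpha}$ with $\alpha\neq0$ the leading $2\times 2$ minor is proportional to $\alpha^2(1/\sqrt2-1)\neq0$), so that choice breaks validity rather than merely changing constants.
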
 

\begin{proof}[Proof sketch]
    To prove the result, our main insight is that, up to a rescaling factor, the Fock basis amplitudes $(\sqrt{n!}\,\phi_n)_{n\in\mathbb N}$ of a superposition of $r$ coherent states $\ket\phi =\sum_{j=1}^r c_j \ket{\alpha_j}$ follow a linear recurrence relation of order $r$. This implies that the Hankel matrix $H_N(\phi)$ with entries $(\sqrt{(i+j)!}\,\phi_{i+j})_{0\le i,j\le N}$ associated to these amplitudes has rank at most $r$ \cite{gantmacher1959theory}, for any truncation parameter $N$.

    On the other hand, for a generic state $\ket\psi=\sum_{n=0}^{+\infty}\psi_n\ket n$, the Hankel matrix $H_N(\psi)$ with entries $(\sqrt{(i+j)!}\,\psi_{i+j})_{0\le i,j\le N}$ associated to its rescaled Fock basis amplitudes typically has higher rank. By the Young--Eckart--Mirsky theorem \cite{Eckart_Young_1936, mirsky1960symmetric} (see \cref{thm:YEM}), a fundamental result in low-rank approximation theory, this implies a minimal distance between the two matrices $H_N(\phi)$ and $H_N(\psi)$ in Frobenius norm, which only depends on the singular values of $H_N(\psi)$. We conclude by relating the distance between the Hankel matrices $H_N(\phi)$ and $H_N(\psi)$ to the fidelity between the quantum states $\ket\phi$ and $\ket\psi$. As a consequence of the proof, the function $f_N(\psi,r)$ in \cref{thm:lower_bound_csr_informal} can be directly computed from the singular values of $H_N(\psi)$.
\end{proof}

\cref{thm:lower_bound_csr_informal} provides a lower bound on the $\varepsilon$-approximate coherent state rank of any state $\ket{\psi}$ for each $f_N$, which can be refined by optimizing over the choice of the free parameter $N$. Note that the functions $f_N(\psi,r)$ can be zero for specific choices of parameters $N,r$ and for some states $\ket{\psi}$ (for instance when $r\ge\kappa(\psi)$), yielding, in these cases, a trivial statement. Importantly, however, \cref{thm:lower_bound_csr_informal} does provide powerful lower bounds on the $\varepsilon$-approximate coherent state rank for most choices of states $\ket{\psi}$, as we illustrate in \cref{app:optib}. 

\smallskip

We also obtain the following remarkable consequence of \cref{thm:lower_bound_csr_informal}, which provides a complete characterization of states of finite approximate coherent state rank (see \cref{thm:LLR} for a detailed proof):

\begin{theorem}[Characterization of single-mode states of finite approximate coherent state rank]
Let $\ket\psi=\sum_{n=0}^{+\infty}\psi_n\ket n$ be a single-mode quantum state and let $r>0$. Then, $\kappa(\psi)=r$ if and only if the sequence $(\sqrt{n!}\,\psi_n)_{n\in\mathbb N}$ follows a linear recurrence relation of order $r$; equivalently, if and only if $\ket\psi=\sum_{j=1}^kc_j\hat D(\alpha_j)\ket{C_j}$ is a superposition of $k\le r$ displaced core states with distinct $\alpha_j$ and where the core states have highest Fock number $n_j$ such that $\sum_{j=1}^k(n_j+1)=r$.
\label{thm:LLR1}
\end{theorem}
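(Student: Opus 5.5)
Write $u_n=\sqrt{n!}\,\psi_n$ and consider the stellar function $F_\psi(z)=e^{-|z|^2/2}\sum_n \psi_n z^n/\sqrt{n!}$, up to normalisation. The plan is to prove three equivalences: (a) the sequence $(u_n)$ satisfies a minimal linear recurrence of order $r$; (b) $\psi$ is a superposition of displaced core states with $\sum_j(n_j+1)=r$; (c) $\kappa(\psi)=r$.

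\emph{Step 1: (a)$\Leftrightarrow$(b).} This is the classical theory of linear recurrences. A sequence satisfies a linear recurrence of minimal order $r$ if and only if $u_n=\sum_{j=1}^k P_j(n)\alpha_j^n$ for distinct $\alpha_j$ and polynomials $P_j$ with $\sum_j(\deg P_j+1)=r$. Equivalently, the generating function $\sum_n u_n z^n/n!=\sum_j Q_j(z)e^{\alpha_j z}$ with $\deg Q_j=\deg P_j$. Up to the factor $e^{-|\alpha|^2/2}$ and a phase, $\hat D(\alpha)\ket{C}$ has Fock-amplitude generating function $C(z-\alpha)e^{\alpha z}$. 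Here $C$ is the polynomial of the core state, so its degree is the highest Fock number. This identifies $Q_j$ with a displaced core polynomial of degree $n_j$. Distinctness of the $\alpha_j$ and linear independence of the functions $z^ae^{\alpha z}$ make this correspondence exact.

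\emph{Step 2: (b)$\Rightarrow\kappa\le r$.} Each $\hat D(\alpha)\ket{C}$ with $\deg C=n$ is a limit of superpositions of $n+1$ coherent states. The function $z^m e^{\alpha z}$ is the $m$-th derivative in $\beta$ of $e^{\beta z}$ at $\beta=\alpha$, so finite differences over $n+1$ nearby points $\alpha+t\omega^\ell$ approximate it in norm. One must check convergence in $\ell^2$ of the Fock amplitudes, which follows by dominated convergence. Summing over $j$ gives $\kappa_\varepsilon\le r$ for every $\varepsilon$.

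\emph{Step 3: $\kappa\le r\Rightarrow$ recurrence of order $\le r$.} This uses \cref{thm:lower_bound_csr_informal} in its formal version \cref{thm:lower_bound_appox_coherent_stellar}. Suppose $(u_n)$ satisfies no recurrence of order $\le r$. Then for some $N$ the Hankel matrix $H_N(\psi)$ has rank $>r$; this is a Kronecker-type theorem on finite-rank infinite Hankel matrices, valid since the rank stabilizes exactly when a recurrence holds. Its $(r+1)$-th singular value $\sigma_{r+1}>0$, so $f_N(\psi,r)>0$, and taking $\varepsilon<f_N(\psi,r)$ gives $\kappa_\varepsilon(\psi)>r$. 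Contrapositively, $\kappa(\psi)\le r$ forces every $H_N(\psi)$ to have rank $\le r$, which gives a recurrence of order $\le r$.

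Combining the steps: if the minimal order is $r$, Step 2 gives $\kappa\le r$. Step 3 applied with $r-1$ gives $\kappa>r-1$, because the minimal recurrence has order $r$ and so some $H_N$ has rank $\ge r$. Hence $\kappa=r$. Conversely, if $\kappa=r$, Step 3 gives a minimal order $r'\le r$, and the forward direction then gives $\kappa=r'$, so $r'=r$.

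The main obstacle is Step 3: the infinite Hankel rank statement. We must show that rank $H_N\le r$ for all $N$ implies a genuine recurrence of order $\le r$. Subtleties arise when leading principal minors vanish, so I would cite Gantmacher's theorem on the rank of infinite Hankel matrices rather than reprove it. We must also check that $f_N(\psi,r)$ is strictly positive whenever $\sigma_{r+1}(H_N(\psi))>0$; I expect this to be immediate from the formal statement.
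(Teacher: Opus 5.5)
Your proposal is correct. Its analytic core is the paper's: \cref{thm:lower_bound_appox_coherent_stellar} forces $\operatorname{rank}H_N(\psi)\le r$ for all $N$, and a Kronecker/Salem-type theorem turns this into a recurrence of order at most $r$. The matching upper bound is the Marshall--Anand construction for core states, which you reprove by finite differences instead of citing.

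The difference is in how the pieces are assembled. The paper converts the recurrence into displaced core states by normal-ordering $P_j(\hat n)$. It then obtains $\sum_j(n_j+1)=r$ by squeezing against the Marshall--Anand upper bound. It proves the reverse direction by contradiction, re-running the forward implication and using a growth argument along rays to show that the expansion $\sum_j\tilde P_j(z)e^{\alpha_jz}$ is unique. You instead settle (a)$\Leftrightarrow$(b) once, algebraically. The exponential generating function of $u_n$ is the stellar function. The minimal polynomial annihilating $\sum_jQ_j(z)e^{\alpha_jz}$ under $d/dz$ is $\prod_j(x-\alpha_j)^{\deg Q_j+1}$. The linear independence of the $z^ae^{\alpha z}$ that you invoke is exactly what the paper's growth argument establishes. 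After that, $\kappa\ge r$ follows directly from the Hankel bound at $r-1$, with no contradiction argument.

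The paper's route is shorter given its citations. Yours makes ``order $r$'' unambiguously the minimal order. Through the EGF/ODE dictionary it also correctly handles the characteristic root $0$, i.e.\ a finitely supported part of $(u_n)$ coming from an undisplaced core component. The pointwise form $\sqrt{n!}\,\psi_n=\sum_jP_j(n)\alpha_j^n$ cannot produce, e.g., $u_n=\delta_{n,1}$ for $\ket1$. That form appears in the paper's proof and in your first sentence of Step 1, so drop that sentence and rely on the EGF statement.

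Two harmless slips remain. First, the stellar function carries no $e^{-|z|^2/2}$ factor; it is exactly $\sum_nu_nz^n/n!$. Second, $\hat D(\alpha)\ket C$ corresponds to $e^{-|\alpha|^2/2}e^{\alpha z}C(z-\alpha^*)$ rather than $C(z-\alpha)e^{\alpha z}$. Neither affects the degree count.
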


\begin{proof}[Proof sketch]
    To prove the forward implication, we show using \cref{thm:lower_bound_csr_informal} that $\kappa(\psi)=r$ implies that the Hankel matrix $H_N(\psi)$ with entries $(\sqrt{(i+j)!}\,\psi_{i+j})_{0\le i,j\le N}$ has rank at most $r$, for all $N\in\mathbb N$. As a consequence, the sequence $(\sqrt{n!}\,\psi_n)_{n\in\mathbb N}$ follows a linear recurrence relation of order $r$ \cite{salem1963algebraic} and $\sqrt{n!}\psi_n$ can be expressed as a finite sum of terms of the form $P(n)\alpha^n$, where $P$ is a polynomial and $\alpha\in\mathbb C$, for all $n\in\mathbb N$ \cite{rosen1999discrete}. From this expression, we deduce that $\ket\psi$ can be expressed as a finite superposition of displaced core states as in the theorem.

    To prove the reverse implication, we use the upper bound on the approximate coherent state rank of core states from \cite[Theorem 1]{Marshall_2023} to show that $\ket\psi=\sum_{j=1}^kc_j\hat D(\alpha_j)\ket{C_j}$ has approximate coherent state rank at most $r=\sum_{j=1}^k(n_j+1)$, where $n_j$ is the highest Fock number of the core state $\ket{C_j}$. Finally, we prove using a growth argument that the approximate coherent state rank cannot be smaller than $r$ when the displacement amplitudes $\alpha_j$ are distinct.
\end{proof}

\smallskip

We now present notable state-specific consequences of \cref{thm:LLR1}.
We first consider core states:

\begin{corollary}[Approximate coherent state rank of single-mode core states]
   Let $\psi$ be a finite superposition of Fock states with highest Fock number $n$. Then, its approximate coherent state rank is given by $\kappa(\psi)=n+1$.
   \label{cor:fock_lowerbound_csr}
\end{corollary}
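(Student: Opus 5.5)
The plan is to apply \cref{thm:LLR1} directly, with $k=1$ and $\alpha_1=0$. A core state $\ket\psi=\sum_{m=0}^n\psi_m\ket m$ with $\psi_n\neq0$ is trivially a single displaced core state, since $\hat D(0)=\mathbb{1}$. Its highest Fock number is $n_1=n$, so the reverse implication of \cref{thm:LLR1} gives $\kappa(\psi)=n_1+1=n+1$. The only care needed is in reading the ``if and only if'' statement. We must make sure that the decomposition $\ket\psi=c_1\hat D(0)\ket{C_1}$, with $C_1$ the normalized core state, meets the hypotheses: distinct amplitudes holds vacuously for $k=1$, and the highest Fock number is exactly $n$ because $\psi_n\neq 0$.

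To make the argument independent of the exact bookkeeping in \cref{thm:LLR1}, I would also check the two inequalities directly, in the spirit of its proof.

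For the upper bound, $\kappa(\psi)\le n+1$, I would invoke \cite[Theorem 1]{Marshall_2023} (used in the proof of \cref{thm:LLR1}). Alternatively, one can see it from the explicit family
\begin{equation}
\sum_{j=0}^{n} c_j(\epsilon)\ket{\epsilon\omega^j},\qquad \omega=e^{2\pi i/(n+1)} .
\end{equation}
Choosing the coefficients $c_j(\epsilon)$ by discrete Fourier inversion cancels all Fock components with index $\le n$ except the prescribed ones. The leftover components have index $\ge n+1$, and their weight relative to the target vanishes as $\epsilon\to0$, so the fidelity tends to $1$.

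For the lower bound, $\kappa(\psi)\ge n+1$, I would use the Hankel criterion behind \cref{thm:lower_bound_csr_informal}. The rescaled amplitudes $s_m=\sqrt{m!}\,\psi_m$ vanish for $m>n$ and satisfy $s_n\neq0$. Take $N=n$ and the Hankel matrix $(s_{i+j})_{0\le i,j\le n}$. It is zero strictly below the anti-diagonal and has the constant nonzero entry $s_n$ on the anti-diagonal, so its determinant is $\pm s_n^{n+1}\neq0$ and its rank is $n+1$. Its $(n+1)$-th singular value is therefore strictly positive, so $f_n(\psi,n)>0$. Then for $\varepsilon<f_n(\psi,n)$ we get $\kappa_\varepsilon(\psi)>n$, and hence $\kappa(\psi)\ge n+1$.

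The main obstacle is mild: in the reverse direction I need to confirm that \cref{thm:LLR1} (or $f_N$) yields exactly $n+1$ rather than just some finite value. The explicit rank computation of the anti-triangular Hankel matrix settles this, so no growth argument is needed in the single-core case.
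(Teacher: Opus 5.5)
Your proposal is correct, and its main line is exactly the paper's proof: apply \cref{thm:LLR1} with $k=1$, $\alpha_1=0$, $\ket{C_1}=\ket\psi$ and $n_1=n$. Your supplementary direct check is also sound. The Hankel matrix $H_n(\psi)$ vanishes below its anti-diagonal and carries the constant nonzero entry $\sqrt{n!}\,\psi_n$ on it, so it has full rank $n+1$. This extends the paper's explicit computation for $\ket n$ in \cref{coro:coreFockACSR} to all core states, and it gives $\kappa(\psi)\ge n+1$ directly from \cref{thm:lower_bound_appox_coherent_stellar}, without the growth argument used in the reverse direction of \cref{thm:LLR}.
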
 

\begin{proof}
The result follows from a direct application of \cref{thm:LLR1} for $k=1$, $\alpha_j=0$ and $\ket{C_j}=\ket\psi$, with $n_j=n$.
\end{proof}

\noindent In particular, \cref{cor:fock_lowerbound_csr} implies that the $n^\text{th}$ Fock state requires a superposition of exactly $n+1$ coherent states to be approximated up to arbitrary precision, i.e.\ $\kappa(\ket n)=n+1$. We further obtain an analytical bound on how small $\varepsilon$ must be to ensure $\kappa_\varepsilon(\ket n)=n+1$ (see \cref{coro:coreFockACSR}).

\smallskip

Next, we prove that the approximate coherent state rank of a squeezed state is infinite: 

\begin{corollary}[Approximate coherent state rank of single-mode squeezed states]
    The $\varepsilon$-approximate coherent state rank of a squeezed state $\kappa_\varepsilon(\ket{\xi})$ with $\xi\neq0$ goes to infinity as $\varepsilon$ goes to $0$. Equivalently, \mbox{$\kappa(\ket{\xi})=+\infty$} for all $\xi\neq0$.
    \label{coro:no_finite_decomp_squeezed}
\end{corollary}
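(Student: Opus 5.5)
The plan is to argue by contradiction using \cref{thm:LLR1}. Suppose $\kappa(\ket\xi)=r<+\infty$ for some $\xi\neq0$. By \cref{thm:LLR1}, the rescaled sequence $u_n:=\sqrt{n!}\,\xi_n$, where $\xi_n$ denotes the Fock amplitudes of $\ket\xi$, then satisfies a linear recurrence of order $r$. Equivalently, by the characterization already invoked in the proof sketch of \cref{thm:LLR1}, there are finitely many distinct $\alpha_j\in\mathbb C$ and polynomials $P_j$ such that $u_n=\sum_j P_j(n)\alpha_j^n$ for all $n\in\mathbb N$. We will show that the explicit form of $u_n$ is incompatible with this. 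Since $\kappa_\varepsilon$ is nondecreasing as $\varepsilon\to0$ and $\kappa=\lim_{\varepsilon\to0^+}\kappa_\varepsilon$, showing $\kappa(\ket\xi)=+\infty$ is the same as showing that $\kappa_\varepsilon(\ket\xi)\to\infty$.

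From the definition of squeezed states, $u_{2n+1}=0$ and, writing $t=-e^{i\phi}\tanh r\neq0$,
\begin{equation}
u_{2n}=\frac{1}{\sqrt{\cosh r}}\,t^n\frac{(2n)!}{2^n n!}=\frac{1}{\sqrt{\cosh r}}\,t^n(2n-1)!!.
\end{equation}
The key step is a growth argument. Any sequence of the form $\sum_j P_j(n)\alpha_j^n$ satisfies $|u_n|\le C(1+n)^d A^n$ for some constants $C,d,A$, so it grows at most exponentially. In contrast, $|u_{2n}|=c\,|t|^n(2n-1)!!$ grows like $(2n/e)^n$ up to lower-order factors, by Stirling's formula. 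This is super-exponential, because $|t|>0$ whenever $\xi\neq0$. So for large $n$ we get $|u_{2n}|>C(1+2n)^dA^{2n}$, which is a contradiction.

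Note that the vanishing odd terms cause no trouble: the growth bound holds along all $n$, so in particular along even $n$. Alternatively, one could argue directly with Hankel matrices. A sequence satisfying a recurrence of order $r$ has a generating function that is rational with finite radius-of-convergence structure. But $\sum_n u_n z^n$ has radius of convergence $0$, so it cannot be rational. Either route gives the contradiction.

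The only step requiring care is the passage from $\kappa(\xi)=r$ to the closed form. This rests entirely on the forward direction of \cref{thm:LLR1} and the standard solution theory of linear recurrences, both of which I take as given. Beyond that, I expect no real obstacle: the proof is essentially the observation that $\sqrt{n!}\,\xi_n$ grows factorially, while sequences obeying linear recurrences with constant coefficients grow at most exponentially.
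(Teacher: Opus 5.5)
Your proof is correct, but the key step differs from the paper's. Both arguments start from the forward direction of \cref{thm:LLR1}. The paper then argues algebraically. It squares the even-indexed amplitudes, using closure of linearly recurrent sequences under Hadamard products. It notes that $h_{n+1}/h_n=\frac{\lambda^2}{2}(2n+1)$ is a nonconstant polynomial in $n$. Dividing a putative order-$k$ recurrence by $h_n\neq0$ then gives a degree-$k$ polynomial in $n$ with leading coefficient $\lambda^{2k}$ vanishing on all of $\mathbb N$, which forces $\lambda=0$.

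You instead compare growth rates. A constant-coefficient recurrence forces at most exponential growth, whereas $|\sqrt{(2n)!}\,\xi_{2n}|\propto|t|^n(2n-1)!!$ grows like $|t|^n(2n/e)^n$. Your route has three advantages:
\begin{itemize}
\item It handles the vanishing odd amplitudes for free. The exponential bound holds at every index, so you never need that the even subsequence of a recurrent sequence is again recurrent; the paper uses this implicitly.
\item It needs no squaring, since $\sqrt{(2n)!}\,\xi_{2n}$ contains no square roots. The paper's $s_n$ is $\sqrt{n!}\,\xi_{2n}$, so its $h_n$ equals your $u_{2n}$ only up to the geometric factor $(\lambda/2)^n$, which is harmless but less direct.
\item It yields a general criterion: any state with $\limsup_n|\sqrt{n!}\,\psi_n|^{1/n}=\infty$ has infinite approximate coherent state rank.
\end{itemize}
The paper's argument, in exchange, is exact and free of asymptotics.

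Two small remarks. First, the bound $|u_n|\le CB^n$ with $B=\max(1,\sum_i|c_i|)$ follows directly from the recurrence $u_{n+r}=\sum_{i<r}c_iu_{n+i}$. So you do not need the closed form $\sum_j P_j(n)\alpha_j^n$, whose validity for all $n$ requires care when $0$ is a characteristic root. Second, in your generating-function variant, the property to invoke is that a rational power series has \emph{positive} radius of convergence, which is what contradicts radius $0$.
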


\begin{proof}[Proof sketch]
    To prove the result, we give a combinatorial proof that the sequence of rescaled Fock basis amplitudes $(\sqrt{n!}\,\xi_n)_{n\in\mathbb N}$ does not satisfy any finite linear recurrence relation with constant coefficients (see \cref{lem:sn_no_LRR}), and we conclude using \cref{thm:LLR1}.
\end{proof}

\subsubsection{Multimode results}

We now consider multimode quantum states, and we show that our single-mode lower bounds can be lifted to multimode ones for core states: 

\begin{theorem}[Lower bound on the approximate coherent state rank of multimode core states]\label{thm:multimode_core_state_lowerbound_csr}
   Let $\ket{\bm\psi}$ be a finite superposition of $m$-mode Fock states with total number of bosons at most $n$. That is \begin{equation}
    \ket{\bm\psi} = \sum_{|\bm{n}| \leq n} c_{\bm n} \ket{\bm{n}}, \quad \text{with } |\bm{n}| =  \sum_{i=1}^{m} \bm n_i,
\end{equation} 
where there exists at least one $\bm {n}$ with $|\bm{n}| = n$ such that $c_{\bm{n}} \neq 0$. Then,
\begin{equation}
    \kappa(\bm\psi)\ge n+1.
\end{equation}
\end{theorem}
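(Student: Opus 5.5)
The plan is to reduce to the single-mode case by projecting onto a single mode with a passive linear-optical transformation followed by a partial projection, so that \cref{cor:fock_lowerbound_csr} applies.

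The key structural fact is that passive linear optics (a unitary $\hat U_V$ implementing an $m\times m$ unitary $V$ on the creation operators) maps coherent states to coherent states, $\hat U_V\ket{\bm\alpha}=\ket{V\bm\alpha}$. It also preserves the total boson number $|\bm n|$. Likewise, projecting modes $2,\dots,m$ onto the vacuum, i.e.\ applying $\bra{0}^{\otimes(m-1)}$ on those modes, sends a product coherent state $\ket{\bm\beta}$ to $e^{-\frac12\sum_{i\ge2}|\beta_i|^2}\ket{\beta_1}$, which is a multiple of a single-mode coherent state.

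We carry out the reduction as follows.

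\textbf{Step 1 (pick the projection).} Write $\ket{\bm\psi}$ as a polynomial in creation operators acting on the vacuum, $P(\hat a_1^\dagger,\dots,\hat a_m^\dagger)\ket{0}$. Its top-degree homogeneous part $P_n$ is nonzero by assumption. Choose a unit vector $\bm u\in\mathbb C^m$ with $P_n(\bm u)\neq0$, which exists since a nonzero polynomial does not vanish on the sphere. Let $V$ be a unitary whose action maps the mode $\sum_i u_i\hat a_i^\dagger$ to $\hat a_1^\dagger$.

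\textbf{Step 2 (compute the projected state).} Apply $\hat U_V$ and then $\bra0^{\otimes(m-1)}$ on modes $2,\dots,m$. Only terms involving $\hat a_1^\dagger$ alone survive. The component with $n$ bosons in mode 1 has amplitude proportional to $P_n(\bm u)\neq0$, possibly up to conjugation conventions, which I would check carefully. Components with more than $n$ bosons do not exist, because passive optics preserves $|\bm n|\le n$. Hence the resulting unnormalized single-mode vector $\ket{\chi}$ is a core state with highest Fock number exactly $n$.

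\textbf{Step 3 (transfer approximations).} Suppose $\kappa(\bm\psi)=k$. Then for every $\varepsilon>0$ there is $\ket{\bm\phi_\varepsilon}=\sum_{j=1}^k c_j\ket{\bm\alpha_j}$ with fidelity at least $1-\varepsilon$. We may take $\bm\phi_\varepsilon$ normalized, so that $\|\bm\psi-e^{i\theta}\bm\phi_\varepsilon\|\to0$ for a suitable phase $\theta$. The map $\Pi=(\mathbb 1\otimes\bra0^{\otimes(m-1)})\hat U_V$ is a contraction, so $\|\Pi\bm\psi-e^{i\theta}\Pi\bm\phi_\varepsilon\|\to0$. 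By the structural fact above, $\Pi\bm\phi_\varepsilon$ is a superposition of at most $k$ single-mode coherent states. Since $\Pi\bm\psi=\ket\chi\neq0$, normalizing both vectors keeps them close, and so the fidelity of the normalized vectors tends to $1$. It follows that $\kappa(\chi/\|\chi\|)\le k$. \Cref{cor:fock_lowerbound_csr} gives $\kappa(\chi/\|\chi\|)=n+1$, hence $k\ge n+1$. If $\kappa(\bm\psi)$ is infinite there is nothing to prove.

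\textbf{Main obstacle.} The delicate point is Step 2. One must show that the highest-Fock component of the projected state is genuinely nonzero, and this is where the choice of $\bm u$ via non-vanishing of $P_n$ is essential. A naive projection onto the vacuum of modes $2,\dots,m$ without first applying $\hat U_V$ could kill every top-degree term, for instance for $\ket{0,n}$. The remaining steps are straightforward bookkeeping: making the transformation of $\hat a_i^\dagger$ under $\hat U_V$ explicit, and handling the norm-to-fidelity conversion.
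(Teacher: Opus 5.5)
Your proposal is correct and follows essentially the paper's argument. The paper's \cref{lem:unitary_existence} picks a passive linear unitary whose first row is a unit vector on which the top-degree part $P_n$ does not vanish; the proof then projects modes $2,\dots,m$ onto the vacuum, uses that this map is a contraction sending $r$-term coherent superpositions to at most $r$-term ones, and concludes with \cref{cor:fock_lowerbound_csr}.

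On the conjugation point you flagged: with the paper's convention $\hat U\hat a_j^\dagger\hat U^\dagger=\sum_i U_{ij}\hat a_i^\dagger$, the bunched amplitude is $\sqrt{n!}\,P_n(U_{11},\dots,U_{1m})$. So you should take the first row of $V$ equal to $\bm u$; equivalently, $V$ sends the mode $\sum_i\bar u_i\hat a_i^\dagger$, not $\sum_i u_i\hat a_i^\dagger$, to $\hat a_1^\dagger$. Your explicit handling of the global phase and of renormalizing $\chi$ is, if anything, more careful than the paper, which goes directly from the norm inequality to $\kappa(\bm\psi)\ge\kappa(\ket C)$ with $\ket C$ unnormalized.
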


\begin{proof}[Proof sketch]
The main idea behind the proof of the result is to show that any $m$-mode approximate coherent state decomposition for the state $\ket{\bm\psi}$ can be compressed to a single-mode approximate coherent state decomposition of the same size for a single-mode superposition of Fock states with highest Fock number $n$, in order to use the lower bound from \cref{cor:fock_lowerbound_csr}. 

We sketch the proof in the case where $\ket{\bm\psi}=\ket1^{\otimes n}$, the general case being analogous. We consider the following procedure: apply a specific passive linear unitary operation (see \cref{sec:linearoptics}), which does not change the total number of bosons, maps coherent states to coherent states, and compresses all the bosons in the first mode with a given non-zero probability; then, post-select on having no bosons in the other modes. Under this procedure, the state $\ket1^{\otimes n}$ is projected onto the single-mode Fock state $\ket n$, while any $n$-mode superposition of coherent states of size $r$ is mapped to a single-mode superposition of coherent states of size $r$. Finally, we carefully bound the evolution of the fidelity under this procedure, to show that an approximate coherent state decomposition for the state $\ket1^{\otimes n}$ leads to another approximate coherent state decomposition of the same size for the single-mode Fock state $\ket n$. We conclude using \cref{cor:fock_lowerbound_csr}.
\end{proof}

This result directly implies a linear lower bound on the approximate coherent state rank of the state $\ket1^{\otimes n}$, i.e.\ $\kappa(\ket1^{\otimes n})\ge n+1$.
The following result significantly improves the lower bound for the specific case of $\ket1^{\otimes n}$ by exploiting the connection between this state and the permanent (see \cref{thm:superpoly} for a detailed proof):

\begin{theorem}[Super-polynomial lower bound on the approximate coherent state rank of $\ket1^{\otimes n}$]\label{thm:tensor_fock_lowerbound_csr}
Let $\ket1^{\otimes n}$ denote the tensor product of $n$ Fock states $\ket{1}$. Then,
\begin{equation}
    \kappa(\ket1^{\otimes n})=n^{\Omega(\log n)}.
\end{equation}
\end{theorem}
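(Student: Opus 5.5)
The plan is to turn an approximate coherent state decomposition of $\ket1^{\otimes n}$ into a small approximate multilinear formula for the permanent. Then we apply a super-polynomial lower bound for such approximate formulas, which we obtain by extending Raz's exact multilinear formula lower bound \cite{raz2009multi}.

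\textbf{Step 1: a sum of products of linear forms.} Suppose $\ket{\bm\phi}=\sum_{j=1}^r c_j\ket{\bm\alpha_j}$ satisfies $F(\bm\phi,\ket1^{\otimes n})\ge1-\varepsilon$. Send both states through an $n$-mode passive linear interferometer $U$ and project onto the output $\ket1^{\otimes n}$ (equivalently, take the overlap $\bra{1}^{\otimes n}\hat U\ket{\cdot}$).
\begin{itemize}
\item For the Fock input, Scheel's formula gives $\mathrm{Per}(U)$.
\item Each coherent state maps to the coherent state $\ket{U\bm\alpha_j}$. Its overlap with $\bra1^{\otimes n}$ is $e^{-\|\bm\alpha_j\|^2/2}\prod_{i=1}^n (U\bm\alpha_j)_i$, a product of $n$ linear forms in the entries of $U$.
\end{itemize}
Hence $\mathrm{Per}(U)\approx\sum_j c_j' \prod_{i=1}^n\langle u_i,\bm\alpha_j\rangle$, where $u_i$ is the $i$-th row of $U$. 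The left side is multilinear in the rows, and so is each product $\prod_i\langle u_i,\bm\alpha_j\rangle$. This is a depth-3 $\Sigma\Pi\Sigma$ set-multilinear formula of size $O(rn^2)$. The Glynn construction shows exactly this shape.

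\textbf{Step 2: from fidelity to uniform coefficient-wise closeness.} The approximation above holds only for unitary $U$, and only up to an additive error of order $\sqrt{\varepsilon}$ times norms. I would extend it to all matrices $A$ with small operator norm: rescale any such $A$ and embed it as a block of a $2n$-mode unitary, using ancilla modes prepared in vacuum. Both sides are homogeneous polynomials of degree $n$ in the entries of $A$, so the pointwise error bound on a ball converts into a bound on the coefficient vector of the difference polynomial via Cauchy-type estimates.

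The result is a polynomial $g$, computed by a multilinear formula of size $\mathrm{poly}(r,n)$, with $\|g-\mathrm{Per}\|\le\delta(\varepsilon)$ in coefficient norm. Because $\kappa$ is defined as $\varepsilon\to0$, the error $\delta$ can be taken arbitrarily small at fixed size $r=\kappa(\ket1^{\otimes n})$. By compactness, one might even pass to a border or limit formula.

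\textbf{Step 3: Raz's argument for approximate formulas (the main obstacle).} Raz's proof has three ingredients:
\begin{itemize}
\item A random partition of the variables into $y$- and $z$-sets, through a random restriction of the $n\times n$ variable matrix to a $2m$-variable subset.
\item A structural lemma: any multilinear formula of size $s$ is a sum of $s^{O(1)}$ log-product polynomials, each of which has partial-derivative (coefficient) matrix rank at most $2^{m-\Omega(m^{1/8})}$-ish with high probability.
\item Full rank of the permanent (or determinant) under the random restriction.
\end{itemize}

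For approximate formulas, the structural lemma is unchanged, since $g$ is exactly computed. So the restricted $g$ has partition-matrix rank at most $s^{O(1)}2^{m-m^{\Omega(1)}}$.

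What must change is the last step. We need that the restricted permanent's partial-derivative matrix is not merely full rank but robustly full rank: its smallest singular value is bounded below, so that a perturbation of Frobenius norm $\delta$ cannot drop its rank. I would aim to show that after restriction this matrix is, up to scaling, a permutation or identity-like matrix with entries $\pm1$. Its singular values are then all $\ge1$, while the restriction maps at most scale $\delta$ by $\mathrm{poly}$ factors. Applying Eckart--Young--Mirsky (\cref{thm:YEM}) then gives that, for $\delta$ small enough, the rank of $g$'s matrix is at least $2^m$.

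Comparing the two bounds yields $s\ge n^{\Omega(\log n)}$, and therefore $r\ge n^{\Omega(\log n)}/\mathrm{poly}(n)=n^{\Omega(\log n)}$. The matching upper bound direction is unnecessary for the statement's $\Omega$.

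The delicate point is controlling the singular values through Raz's restriction, where variables are set to constants. I would choose restrictions with values in $\{0,\pm1\}$ so that the norms stay polynomially bounded.
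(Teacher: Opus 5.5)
Your proposal is correct and has the same three-stage architecture as the paper. A coherent decomposition gives a syntactically multilinear $\Sigma\Pi\Sigma$ formula $\sum_j\gamma_j\prod_i(\sum_k\alpha_{jk}x_{ik})$ of size $rn^2$ with error $\sqrt{2\delta}$ on unitaries; this is upgraded to coefficient-wise closeness; and a border version of Raz finishes. Your Steps~2 and~3, however, rest on different lemmas.

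\emph{Step~2.} The paper stays on $\mathcal U(n)$. It proves that a polynomial vanishing on the unitary group is identically zero. Hence $Q\mapsto\sup_{U\in\mathcal U(n)}|Q(U)|$ is a norm on the finite-dimensional space of polynomials of degree at most $n^2$, and equivalence of norms gives coefficient convergence, with no rate. Your dilation is also valid. For $\|A\|\le1$, the $2n$-mode unitary $W$ with top-left block $A$ gives $\bra{1^n0^n}\hat W\ket{1^n0^n}=\mathrm{Per}(A)$. The coherent side reproduces the same polynomial at $A$ because $\|W(\bm\alpha,\bm{0})\|=\|\bm\alpha\|$. Cauchy estimates on the polydisc $|a_{ij}|\le1/n$ then give an explicit coefficient bound of order $n^{n}\sqrt{2\delta}$.

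\emph{Step~3.} The paper first extracts, by pigeonhole, a subsequence of formulas with one fixed syntactic structure. It argues that Raz's low-rank event depends only on that structure, so a single restriction $A^*$ works along the whole subsequence. It then concludes by lower semicontinuity of rank for $M_{A^*}(F_\varepsilon)\to M_{A^*}(\mathrm{Per})$. You instead use one sufficiently accurate approximant together with a uniform rank margin for the permanent. This needs neither the extraction nor the structure-only claim. Once constants are tracked, it also yields an explicit threshold $\varepsilon_n$ below which $\kappa_\varepsilon(\ket1^{\otimes n})=n^{\Omega(\log n)}$. That is a statement about $\kappa_\varepsilon$ which the paper's purely qualitative limit argument does not give.

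Three details.

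First, the margin needs neither the explicit $\pm1$ permutation structure nor a polynomial bound on how the $\{0,1\}$ substitution amplifies coefficient errors. The latter is not obvious, since many monomials can collapse to one. It suffices that the random restriction has finite support and $M_A(\mathrm{Per})$ has rank $2^m$ for every $A$ in it. Then $\min_A\sigma_{\min}(M_A(\mathrm{Per}))>0$, and since $n$ is fixed while $\delta\to0$, any finite amplification constant is enough.

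Second, drop the compactness aside. The parameters $\gamma_j(\delta),\alpha_{jk}(\delta)$ generally diverge, as in the vanishing-amplitude cat decomposition of $\ket1$, and this is exactly why a border version of Raz is needed.

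Third, both you and the paper discard the structure you found in Step~1. Write $\ell_j(x_i)=\sum_k\alpha_{jk}x_{ik}$, which depends only on row $i$; then each term $\gamma_j\prod_i\ell_j(x_i)$ factors across any bipartition of the rows. Take the coefficient matrix indexed by monomials in the first $\lfloor n/2\rfloor$ rows against monomials in the remaining rows. It has rank at most $r$ for $F_\varepsilon$ and rank $\binom{n}{\lfloor n/2\rfloor}$ for the permanent. Lower semicontinuity alone then gives $\kappa(\ket1^{\otimes n})\ge\binom{n}{\lfloor n/2\rfloor}$. This bound is exponential and bypasses Raz entirely.
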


\begin{proof}[Proof sketch]
    The proof strategy is as follows. We exploit the connection between the permanent and the state $\ket1^{\otimes n}$ in the context of bosonic computations. Namely, given an $n$-mode linear network $\hat U$ described by an $n\times n$ unitary matrix $U$, we have $\mathrm{Per}(U)=\bra1^{\otimes n}\hat U\ket1^{\otimes n}$ \cite{scheel2004,aaronson_2010}.
    
    First, we show that any $\varepsilon$-approximate coherent state decomposition of size $r$ for the state $\ket1^{\otimes n}$ leads to a polynomial formula $F_\varepsilon$ of size $rn^2$ which is multi-linear (that is, the power of every input variable is at most one in each of its monomials), and such that $F_\varepsilon$ uniformly approximates the permanent when evaluated on the entries of a unitary matrix $U$, namely \ $|F_\varepsilon(U)-\mathrm{Per}(U)|\le\sqrt{2\varepsilon}$. Letting $\varepsilon$ go to zero, this gives a sequence of multi-linear formulas of size $rn^2$ converging uniformly over the unitary group to the permanent. 
    
    Next, we prove that the uniform convergence over the unitary group implies a convergence coefficient-by-coefficient. In algebraic complexity terms, this implies that the border complexity of multi-linear formulas for the permanent is upper bounded by $rn^2$.
    
    Finally, we conclude by proving a super-polynomial ($n^{\Omega(\log n)}$) lower bound on the border complexity of multi-linear formulas for the permanent (see \cref{thm:bordercompperinformal} below).
\end{proof}

Crucial to the proof of \cref{thm:tensor_fock_lowerbound_csr} is the following algebraic complexity result, which extends a result due to Raz \cite{raz2009multi} on the size of multi-linear formulas for the permanent to approximate formulas (see \cref{thm:bordercompper} for a formal statement and proof):

\begin{theorem}[Border complexity of multi-linear formulas for the permanent (informal)]\label{thm:bordercompperinformal}
    Any sequence of multi-linear formulas converging to the permanent must contain only formulas of super-polynomial size beyond a certain index.
\end{theorem}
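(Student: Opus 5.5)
We will adapt Raz's partial-derivative-matrix argument to border (approximate) multi-linear formulas. Throughout, ``converging'' means that a sequence of multi-linear formulas $F_k$ of size at most $s$ in the $n^2$ variables $x_{ij}$ has coefficients converging to those of $\mathrm{Per}_n$; this is how the proof of \cref{thm:tensor_fock_lowerbound_csr} uses the result, after converting uniform convergence on the unitary group. The goal is to show $s\ge n^{\Omega(\log n)}$.

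\textbf{Step 1 (Raz's structure theorem).} Recall Raz's structural result. Every multi-linear formula of size $s$ can be written as a sum of at most $s+1$ polynomials $f = \sum_{t} g_t$. Each $g_t$ is a product of $\Omega(\log n)$-ish ``log-product'' factors with disjoint variable sets, arranged in the $\log$-product form. Fix a random partition of the variables into two sets $Y,Z$ of equal size, together with a random restriction to a subset of the variables (Raz's $\Pi$ and restriction $A$). Then, with probability at least $1-s\cdot n^{-\Omega(\log n)}$ over the choice, every $g_t$ has partial-derivative matrix $M_{Y,Z}(g_t)$ of rank at most $2^{|Y|}\cdot n^{-\Omega(\log n)}$. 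Here $M_{Y,Z}(g)$ is the matrix indexed by multilinear monomials in $Y$ and $Z$ whose entries are the coefficients of $g$. By subadditivity of rank, $\mathrm{rank}\,M_{Y,Z}(F)\le (s+1)2^{|Y|}n^{-\Omega(\log n)}$. On the other hand, Raz shows that the permanent restricted by the same random restriction has full rank $2^{|Y|}$ with non-negligible probability. I will take this restriction and partition procedure verbatim.

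\textbf{Step 2 (uniformity over the sequence).} If $s<n^{c\log n}$ for a small constant $c$, a union bound over the finitely many partitions and restrictions, together with the probability estimates, yields a single choice $(Y,Z,A)$ with the following two properties. First, the restricted permanent has full-rank matrix. Second, the same $(Y,Z,A)$ is good for all $F_k$ simultaneously. This second property is the main obstacle: the good event for a formula depends on its structure, so it is not automatically shared across the sequence. To handle it, I observe that there are only finitely many formula graph shapes of size $\le s$ (only the field constants vary), and the rank bound in Raz's argument depends only on the shape. I therefore pass to a subsequence with a fixed shape. For that fixed shape, one good $(Y,Z,A)$ works for every $F_k$ in the subsequence. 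The restriction maps (substitutions of $0$ or $1$ and variable identifications) are linear in the coefficients, so they commute with coefficientwise limits.

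\textbf{Step 3 (lower semicontinuity of rank).} The entries of $M_{Y,Z}(F_k|_A)$ converge to those of $M_{Y,Z}(\mathrm{Per}_n|_A)$. Rank is lower semicontinuous, since the set of matrices of rank at most $R$ is closed. Hence
\begin{equation}
2^{|Y|}=\mathrm{rank}\,M_{Y,Z}(\mathrm{Per}_n|_A)\le\liminf_k \mathrm{rank}\,M_{Y,Z}(F_k|_A)\le (s+1)2^{|Y|}n^{-\Omega(\log n)}.
\end{equation}
This forces $s\ge n^{\Omega(\log n)}$, contradicting the assumption $s<n^{c\log n}$. The contradiction holds for every tail of the sequence, which gives the ``beyond a certain index'' formulation.
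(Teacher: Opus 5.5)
Your proposal is correct and follows essentially the same route as the paper. You pigeonhole to a subsequence of fixed formula shape so that a single Raz restriction is good for every member, use that the restricted partial-derivatives matrices depend linearly (hence continuously) on the coefficients, and conclude by lower semicontinuity of rank against the full-rank matrix of the restricted permanent; invoking Raz through the sum-of-log-products form rather than the paper's Lemmas~4.1/5.1 is immaterial. Two small fixes are needed: reduce to syntactic multilinear formulas first (Raz's Proposition~2.1), as the paper does before pigeonholing on shapes, and get the ``beyond a certain index'' form by applying the argument to the subsequence of formulas of size below $n^{c\log n}$ (which still converges to the permanent), since your tail argument only shows that infinitely many members are large.
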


\begin{proof}[Proof sketch]
    The proof that multi-linear formulas for the permanent have super-polynomial size in \cite{raz2009multi} is based on a randomized construction of a so-called partial-derivatives matrix (PDM) from a random selection of the coefficients of a polynomial formula. In particular, Raz shows that the PDM built from the permanent polynomial is full-rank, while the PDM built from a multi-linear formula of size $n^{O(\log n)}$ has small rank with probability $1-o(1)$. Taken together, these statements imply a super-polynomial lower bound on the size of multi-linear formulas for the permanent.

    To prove the result, we follow a similar strategy. We first show that given a sequence of multi-linear formulas $F_\varepsilon$ converging to the permanent, the corresponding sequence of PDMs converges to the PDM of the permanent. Since the rank is lower semi-continuous, this implies that the PDMs built from $F_\varepsilon$ must have full rank for $\varepsilon$ small enough. Assuming that these multi-linear formulas $F_\varepsilon$ have size $n^{O(\log n)}$, we can use the results of Raz to argue that the PDMs built from $F_\varepsilon$ have small rank, each with probability $1-o(1)$. However, to conclude we need to show that with high probability, an infinite sequence of these PDMs has a small rank. To do so, we refine the argument and show that one can extract a subsequence $F_{\varphi(\varepsilon)}$ such that with probability $1-o(1)$, all the PDMs built from $F_{\varphi(\varepsilon)}$ have small rank. Taken together, these statements imply that any sequence of multi-linear formulas $F_\varepsilon$ converging to the permanent must contain only formulas of super-polynomial size for $\varepsilon$ small enough.
\end{proof}

\subsection{Discussion and open questions}

While some early numerical studies \cite{kenfack2004optimal} and witnesses \cite{mraz2014witnessing,kuhn2018quantum} are available, our work provides (to our knowledge) the first non-trivial lower bounds on the approximate coherent state rank. Regarding upper bounds, it is well-known that the Fock state $\ket1$ can be obtained as the limit of a superposition of two coherent states (an odd cat state of vanishing amplitude), i.e.\ $\kappa(\ket 1)\le2$. More generally, it was shown in \cite{janszky1993coherent,vogel_2014} that $\kappa(\ket n)\le n+1$. This bound was later extended in \cite{Marshall_2023} to all core states with highest Fock number at most $n$. \cref{cor:fock_lowerbound_csr} gives matching lower bounds and resolves the question of the approximate coherent state rank of single-mode Fock states and core states. 

In the context of the resource theory of non-classicality (see \cite{yadin2018operational,regula2021operational,lami2021framework} and references therein), \cref{cor:fock_lowerbound_csr} also demonstrates that Fock states form a strict hierarchy, i.e.\ the non-classicality of $\ket m$ is higher than that of $\ket n$ when $m>n$, as measured by the approximate coherent state rank. Moreover, \cref{coro:no_finite_decomp_squeezed} implies that squeezed states $\ket\xi$ sit above this infinite hierarchy of non-classicality for all $\xi\neq0$.
More generally, \cref{thm:LLR1} provides a complete characterization of single-mode quantum states of finite approximate coherent state rank as finite superpositions of displaced Fock states. This shows that most quantum states have infinite approximate coherent state rank and motivates the analysis of the more fine-grained $\varepsilon$-approximate coherent state rank.

By truncating a state $\ket\psi$ of bounded mean Fock number to a core state, the upper bound on the approximate coherent state rank of core states from \cite[Theorem 1]{Marshall_2023} directly yields an upper bound on the $\varepsilon$-approximate coherent state rank of $\ket\psi$. Using this technique, an upper bound on the $\varepsilon$-approximate coherent state rank of specific bosonic quantum states was obtained in \cite[Lemma 2]{upreti2025bounding}. In contrast, \cref{thm:lower_bound_csr_informal} provides computable lower bounds on the $\varepsilon$-approximate coherent state rank of arbitrary single-mode quantum states. It would be interesting to investigate how tight these bounds are; we take a first step in this direction in \cref{app:optib}.

Beyond non-classicality, other measures of quantumness have been introduced in the context of bosonic quantum computing \cite{kenfack2004negativity,mari2012positive,Chabaud_2020,Chabaud2023,Dias_2024,hahn2025classicalsimulationquantumresource}. In particular, the (approximate) Gaussian rank \cite{Dias_2024,hahn2025classicalsimulationquantumresource} is defined as the minimal size of a superposition of Gaussian states required to express (approximate) a bosonic quantum state. Gaussian states form a class of states which includes coherent states and squeezed states, and they are also easy to simulate classically when evolving under Gaussian bosonic circuits \cite{Bartlett_2002}. Hence, a natural question is whether our proof techniques can be modified to obtain lower bounds on the approximate Gaussian rank of bosonic quantum states.
For instance, one could expect that the low-rank approximation technique used for deriving \cref{thm:lower_bound_csr_informal} may be adapted to the Gaussian rank. The high-level intuition comes from signal analysis and is as follows. In the context of signal analysis, given a signal with fixed frequencies $f(t)=\sum_{k=1}^rc_ke^{\lambda_k t}$, Prony's method \cite{prony1795essai} allows one to recover the frequencies $\lambda_k$ and the coefficients $c_k$. On the other hand, the Fock basis amplitudes of a superposition of $r$ coherent states take the form $f(n)=\sum_{k=1}^rc_k(\alpha_k)^n$, up to a factor $\sqrt{n!}$. Hence, for a single-mode quantum state $\ket\psi$, looking for a coherent state decomposition amounts to retrieving the corresponding frequencies and coefficients by treating its (rescaled) Fock basis amplitudes as a signal, which then leads to the proof technique of \cref{thm:lower_bound_csr_informal}. The analogous problem for Gaussian rank would be to extract the features of a chirp \cite{kundu2021chirp}, i.e.\ a signal with varying frequencies.

Another natural avenue for generalizing \cref{thm:lower_bound_csr_informal} is to extend its proof technique to multimode quantum states, replacing Hankel matrices by Hankel tensors \cite{qi2013hankel}. However, such a generalization may turn out to be challenging, since there is no general equivalent of the Young--Eckart--Mirsky theorem for tensors \cite{kolda2003counterexample}.

In the case of the $n$-mode Fock state $\ket1^{\otimes n}$, we were able to show that $\kappa(\ket1^{\otimes n})=n^{\Omega(\log n)}$ by exploiting a connection to algebraic complexity (see \cref{thm:tensor_fock_lowerbound_csr}), and more precisely to the border complexity of multi-linear formulas for the permanent (see \cref{thm:bordercompperinformal}), generalizing a seminal result of Raz \cite{raz2009multi}. We note that a different generalization was obtained in \cite[Corollary 10]{aaronson2004multilinear}. Since the permanent governs the probabilities of Boson Sampling \cite{aaronson_2010}, \cref{thm:tensor_fock_lowerbound_csr} rules out efficient classical algorithms for simulating Boson Sampling based on coherent state decompositions \cite{Marshall_2023}. For multimode states, only trivial upper bounds on the approximate coherent state rank are known, obtained by taking tensor products of single-mode approximate coherent state decompositions, such as $\kappa(\ket1^{\otimes n})\le2^n$. We conjecture that $\kappa(\ket1^{\otimes n})=2^n$.

Interestingly, both Raz's lower bound and \cref{thm:bordercompperinformal} are also valid for the determinant, i.e.\ (approximate) multi-linear formulas for the determinant must have super-polynomial size. While the permanent relates to bosonic statistics, the determinant appears naturally in fermionic statistics \cite{oszmaniec2022fermion}. We leave as an open question the consequences of our results for fermionic systems.

On a different note, super-polynomial lower bounds on the stabilizer rank in the qubit setting have remained elusive \cite{Bravyi_2016,Peleg_2022,Lovitz2022newtechniques,mehraban2024}. In particular, \cite{mehraban2024} shows that the stabilizer rank of $\ket{T}^{\otimes n}$ is super-polynomial assuming that no polynomial-size circuits for the permanent exists. Here we leverage instead a super-polynomial lower bound on multilinear formula size for the permanent. Given the similarities between the two problems, it is natural to wonder whether the proof techniques of \cref{thm:tensor_fock_lowerbound_csr} for bounding the approximate coherent state rank can be adapted to bound the approximate stabilizer rank. For instance, can we directly compile Boson Sampling using qubits to leverage the algebraic complexity of the permanent to lower bound the stabilizer rank? We also leave this question as an interesting open problem.

\subsection{Acknowledgements}

The authors thank T.\ Vidick for interesting discussions.
U.C.\ thanks J.\ Davis, J.\ Slote, A.\ Deshpande, K.\ A.\ Okoudjou and S.\ Mehraban for inspiring discussions. U.C.\ acknowledges funding from the European Union’s Horizon Europe Framework Programme (EIC Pathfinder Challenge project Veriqub) under Grant Agreement No.\ 101114899.

\section{Preliminaries}
\label{sec:prelim}

In this section, we introduce the necessary preliminary material for our results and proofs. We refer the reader to \cref{appendix:theory} for further background, as well as \cite{Nielsen_Chuang_2010,serafini2023quantum}.

\subsection{Norms, distances, and similarity measures \label{sec:norms-distance}}

We denote the \emph{2-norm} of a vector $\vec{x}=(x_0, x_1, \dots, x_n),\;n\in\mathbb{N}$, as
\begin{equation}
    \norm{\vec{x}}_2 = \left(\sum_{k=0}^n|x_k|^2\right)^{\frac12}.
    \label{equ:2-norm}
\end{equation} 

Let $A$ be an $m$ by $n$ matrix, its Frobenius norm is given by \cite{horn2013matrix}: 
\begin{equation}
    \norm{A}_F = \sqrt{\sum_i^m\sum_j^n|a_{ij}|^2} = \sqrt{\Tr{A^*A}}=\sqrt{\sum_i^{\min\{m,n\}}\sigma_i^2(A)},
\end{equation} with $\sigma_i(A)$ being the singular values of $A$.

We will also employ measures quantifying the similarity between two quantum states. The \emph{trace distance} between two quantum states $\rho$ and $\sigma$ is given by \cite{Nielsen_Chuang_2010}
\begin{equation}
    D(\rho, \sigma) := \frac12 \norm{\rho-\sigma}_1 = \frac{1}{2}\Tr{\sqrt{(\rho-\sigma)^\dagger(\rho-\sigma})},
    \label{equ:trace-distance}
\end{equation} where $\norm{A}_1 = \Tr{\sqrt{A^\dagger A}}$.
The \emph{fidelity} is defined as \cite{Nielsen_Chuang_2010}
\begin{equation}
    F(\rho, \sigma) = \left(\Tr{\sqrt{\sqrt{\rho}\sigma\sqrt{\rho}}}\right)^2.
\end{equation} 
In the specific case when $\rho$ and $\sigma$ are pure states, that is $\rho = \ket{\psi_\rho}\bra{\psi_\rho}$ and $\sigma = \ket{\psi_\sigma}\bra{\psi_\sigma}$, the fidelity simplifies to 
\begin{equation}
    F(\rho, \sigma) = \left|\braket{\psi_\rho|\psi_\sigma}\right|^2.
    \label{equ:fidelity}
\end{equation} 
The \emph{fidelity} and the \emph{trace distance} are closely related. In the case of mixed quantum states, they are qualitatively equivalent measures of closeness due to the following inequalities \cite{Nielsen_Chuang_2010}:
\begin{equation}
    1-\sqrt{F(\rho, \sigma)}\leq D(\rho, \sigma) \leq \sqrt{1-F(\rho, \sigma)}.
\end{equation} 
In the case of pure states $\ket{\psi_\rho}$ and $\ket{\psi_\sigma}$, the two measures become completely equivalent, following the relation \cite{Nielsen_Chuang_2010}
\begin{equation}
    D({\psi_\rho},{\psi_\sigma}) = \sqrt{1-F({\psi_\rho},{\psi_\sigma})}.
    \label{equ:trace-fidelity-equiv}
\end{equation}

\subsection{Approximate coherent state rank}

In this section we recall the definitions of approximate coherent state rank and $\varepsilon$-approximate coherent state rank.

\smallskip

The degree of non-classicality \cite{gherke_2012, Sperling_2015} measures the minimal number of coherent states necessary to express as a superposition a given pure bosonic quantum state. In the context of quantum information theory, it is convenient to introduce the following approximate version:

\begin{definition}[Approximate coherent state rank \cite{Marshall_2023}]
The approximate coherent state rank of an $m$-mode state $\ket{\psi}$, denoted $\kappa({\psi})$, is the smallest $k$ such that for all $\varepsilon>0$, there exists a state $\ket{\phi_\varepsilon}=\sum_{j=1}^kc_j\ket{\bm\alpha_j}$, where $c_j\in\mathbb C$ and $\ket{\bm\alpha_j}$ are $m$-mode coherent states, such that $F(\psi,\phi_\varepsilon)\geq1-\varepsilon$.
\label{def:approx-coherent-rank}
\end{definition}
 
We also use a fine-grained version of the approximate coherent state rank:
 
\begin{definition}[$\varepsilon$-approximate coherent state rank \cite{Marshall_2023,upreti2025bounding}]
Let $\varepsilon>0$. The $\varepsilon$-approximate coherent state rank of a state $\ket{\psi}$, denoted $\kappa_\varepsilon({\psi})$, is the smallest $k$ such that there exists a state $\ket{\phi}=\sum_{j=1}^kc_j\ket{\alpha_j}$, where $c_j\in\mathbb C$ and $\ket{\alpha_j}$ are coherent states, such that $F(\psi,\phi)\geq1-\varepsilon$.
\label{def:eps-approx-coherent-rank}
\end{definition}

The approximate coherent state rank is directly obtained from the $\varepsilon$-approximate coherent state rank via $\kappa=\sup_{\varepsilon>0}\kappa_\varepsilon=\lim_{\varepsilon\to0^+}\kappa_\varepsilon$.

\subsection{Passive linear unitaries and vacuum projection}
\label{sec:linearoptics}

In this section we introduce concepts stemming from quantum optics.

\smallskip

A passive linear unitary (or linear-optical network or unitary interferometer in the context of quantum optics) on $m$ modes is described by an $m \times m$ unitary matrix $U$. The corresponding unitary operator $\hat{U}$ in Fock space transforms the creation operators (see \cref{appendix:theory}) according to the relation:
\begin{equation}
    \hat{U} \hat{a}_k^\dagger \hat{U}^\dagger = \sum_{l=1}^m U_{lk} \hat{a}_l^\dagger.
\end{equation}
Fundamental properties of these transformations are that they conserve the total number of bosons, they map the vacuum state $\ket{\bm0}$ to itself, and more generally they map a tensor product of coherent states $\ket{\bm{\alpha}} = \ket{\alpha_1} \otimes \dots \otimes \ket{\alpha_m}$ to another tensor product of coherent states $\ket{\bm{\alpha}'} = \ket{U \bm{\alpha}}$ \cite{ferraro2005}. 

Now, we consider the projection onto the vacuum state for all modes except the first, represented by $(\mathbb{I} \otimes \bra{0}^{\otimes(m-1)})$. Applying this to a multimode coherent state $\ket{\bm{\alpha}}$ yields a single-mode coherent state $\ket{\alpha_1}$ scaled by a sub-normalization factor:
\begin{equation}
    (\mathbb{I} \otimes \bra{0}^{\otimes (m-1)}) \ket{\alpha_1, \dots, \alpha_m} = \exp\left(-\frac{1}{2} \sum_{j=2}^m |\alpha_j|^2\right) \ket{\alpha_1}.
\end{equation}
By linearity, if a state $\ket{\bm{\phi}}$ is a superposition of $r$ multimode coherent states, its reduction to the first mode remains a superposition of at most $r$ single-mode coherent states.

Finally, as alluded to in the introduction, Fock basis amplitudes of passive linear unitaries relate to the permanent \cite{scheel2004,aaronson_2010}. In particular:
\begin{equation}\label{eq:linktoper}
    \bra1^{\otimes m}\hat U\ket1^{\otimes m}=\mathrm{Per}(U).
\end{equation}

\subsection{Low-rank approximation theory}

In this section, we formally introduce tools from low-rank approximation theory.
 
The Vandermonde and Hankel matrices are two classes of structured matrices arising in approximation theory. A Vandermonde matrix $V$ is generated by geometric progressions of some vector $\vec{x}=(x_0, x_1, \dots, x_{2N})$: 
\begin{equation}
    V(\vec{x}) =(x_i^j)_{0\le i,j\le N}.
    \label{equ:vandermonde}
\end{equation}
A Hankel matrix of the same vector is given by
\begin{equation}
    H(\vec{x}) = (x_{i+j})_{0\le i,j\le N},
    \label{equ:hankel}
\end{equation}
and is in particular constant along its anti-diagonals. In the specific case where the vector $\vec{x}=(x_0, x_1, \dots, x_{2N})$ is a sum of $r$ exponentials, i.e.,
\begin{equation}
    x_k = \sum_{i=1}^rd_i\lambda_i^k,
\end{equation}
these two forms are fundamentally connected through a factorization of the form
\begin{equation}\label{eq:HVDV}
    H(\vec{x}) = V(\bm\lambda)^T D(\bm d) V(\bm\lambda),
\end{equation} where $V(\bm\lambda)$ is a $r\times(N+1)$ Vandermonde matrix and $D=\mathrm{Diag}(d_1,\dots,d_r)$. Moreover, a Hankel matrix $H(\vec{x})$ has rank $r$ if and only if such a factorization exists, with all $\lambda_i$'s distinct and $d_i\neq0$ \cite{boley11998vandermonde}. This structure underlies Prony's method \cite{prony1795essai} and low-rank approximation theory \cite{knirsch2021,Shen2022,gillard2022hankel}.

Low-rank approximation refers to the process of approximating a given matrix by another matrix of lower rank while preserving as much information as possible. We can express the problem as minimizing over $\tilde{M}$ the Frobenius norm $\norm{M-\tilde{M}}_F$ with $\text{rank}(\tilde{M})\leq r$. Low-rank approximations are typically used in signal processing or data compression tasks where the rank is related to the model fitting the data. Some techniques from signal analysis will be adapted to our specific problem.

A fundamental theorem in low-rank approximation due to C.\ Eckart, G.\ Young \cite{Eckart_Young_1936}, and L.\ Mirsky \cite{mirsky1960symmetric} provides the optimal method for reducing the rank of a matrix while minimizing the error in the Frobenius norm.
 
\begin{theorem}[Young--Eckart--Mirsky]
\label{thm:YEM}
Let $A\in\mathbb{C}^{m\times n}$ be a complex matrix with singular value decomposition
\[
A = U\Sigma V^*,
\]
where $U\in\mathbb{C}^{m\times m}$ and $V\in\mathbb{C}^{n\times n}$ are unitary, and
\[
\Sigma = \operatorname{diag}(\sigma_1,\sigma_2,\dots,\sigma_R,0,\dots,0),
\]
padded with zeros, with singular values $\sigma_1\ge\sigma_2\ge\cdots\ge\sigma_R>0$ and $R=\operatorname{rank}(A)$. For each $r$ with $0<r\le R$, let
\[
\Sigma_{(r)} = \operatorname{diag}(\sigma_1,\dots,\sigma_r,0,\dots,0),
\]
and define the truncated SVD approximation
\[
A_r := U\,\Sigma_{(r)}\,V^*.
\]

Then, for all $r<R$, the following holds:

\begin{equation}
\|A - A_r\|_F = \min_{\operatorname{rank}(B)\,\le r} \|A-B\|_F = \sqrt{\sigma_{r+1}^2+\cdots+\sigma_R^2} = \sqrt{\sum_{l>r}^R\sigma_l^2},
\end{equation}
where $\|\cdot\|_F$ denotes the Frobenius norm, and where the right hand side is zero for $r\ge R$. The matrix $A_r$ achieving this minimum is unique if and only if $\sigma_r>\sigma_{r+1}$. For all $r\ge R$, the minimum error is zero, achieved by $A$ itself.
\end{theorem}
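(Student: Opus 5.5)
The plan is to reduce the statement to the Courant--Fischer min-max characterization of singular values, using the unitary invariance of the Frobenius norm. First I will check that $A_r$ attains the claimed error. Since $\|\cdot\|_F$ is invariant under left and right multiplication by unitaries, we have $\|A-A_r\|_F=\|\Sigma-\Sigma_{(r)}\|_F=\sqrt{\sum_{l>r}^R\sigma_l^2}$. Moreover $\operatorname{rank}(A_r)=r$, so the minimum is at most this value. For $r\ge R$, the choice $B=A$ gives error zero, which settles that case.

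For the reverse inequality I will use Weyl's inequality for singular values: $\sigma_{i+j-1}(X+Y)\le\sigma_i(X)+\sigma_j(Y)$. I will derive it from the min-max formula
\[
\sigma_k(X)=\min_{\dim W\le k-1}\ \max_{x\perp W,\ \|x\|=1}\|Xx\|.
\]
Let $B$ have rank at most $r$, and apply Weyl with $X=A-B$, $Y=B$ and $j=r+1$. Since $\sigma_{r+1}(B)=0$, this gives $\sigma_{i+r}(A)\le\sigma_i(A-B)$ for every $i$. Squaring and summing over $i\ge1$ then yields
\[
\|A-B\|_F^2=\sum_i\sigma_i(A-B)^2\ge\sum_{l>r}\sigma_l(A)^2 .
\]
This proves that the minimum equals $\sqrt{\sum_{l>r}^R\sigma_l^2}$.

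Uniqueness is the delicate part, and I expect it to be the main obstacle. First suppose $\sigma_r=\sigma_{r+1}$. Then a different optimizer can be built by swapping the $r$-th and $(r+1)$-th singular pairs, that is, by keeping $\sigma_{r+1}u_{r+1}v_{r+1}^*$ in place of $\sigma_r u_r v_r^*$. This matrix has rank $r$, the same error, and differs from $A_r$ because $u_r\neq u_{r+1}$. Hence the optimizer is not unique.

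Conversely, suppose $\sigma_r>\sigma_{r+1}$. In that case equality must hold in every Weyl inequality used above, so $\sigma_i(A-B)=\sigma_{i+r}(A)$ for all $i$, and additionally $\langle A-B,B\rangle=0$. Here I plan to use the von Neumann trace inequality $|\operatorname{Tr}(X^*Y)|\le\sum_i\sigma_i(X)\sigma_i(Y)$, together with its equality case (simultaneous SVD). Expanding
\[
\|A\|_F^2=\|A-B\|_F^2+\|B\|_F^2+2\,\mathrm{Re}\langle A-B,B\rangle,
\]
this should force $B$ and $A-B$ to have a common singular basis, with $B$ carrying the top $r$ singular values. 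The gap $\sigma_r>\sigma_{r+1}$ then makes the top-$r$ singular subspaces uniquely determined, which pins down $B=A_r$. If the equality analysis becomes messy, I will fall back on a cleaner alternative. Any optimal $B$ must be the orthogonal projection of $A$ onto $\{M:\operatorname{col}(M)\subseteq S\}$, where $S=\operatorname{col}(B)$, so $B=P_SA$. Minimizing $\|A-P_SA\|_F^2=\|A\|_F^2-\operatorname{Tr}(P_SAA^*)$ over $r$-dimensional subspaces $S$ is the Ky Fan maximum problem. Its maximizer is unique exactly when the $r$-th and $(r+1)$-th eigenvalues of $AA^*$ differ, i.e.\ when $\sigma_r>\sigma_{r+1}$, and this yields uniqueness directly.
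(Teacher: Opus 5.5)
Your proof is correct. There is no proof in the paper to compare it with: the Young--Eckart--Mirsky theorem is quoted there as a classical result. The only part the paper ever uses is the lower bound $\|A-B\|_F^2\ge\sum_{l>r}\sigma_l^2$ for $\operatorname{rank}(B)\le r$ (in \cref{eq:usingYEM} and its rescaled analogue); the uniqueness clause plays no role. Your Weyl step, $\sigma_{i+r}(A)\le\sigma_i(A-B)+\sigma_{r+1}(B)=\sigma_i(A-B)$, is essentially Mirsky's original argument, and by itself it supplies everything the paper needs.

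For uniqueness, keep the projection/Ky Fan route. It is complete, and it even re-derives the optimal value independently of Weyl, since $\min_B\|A-B\|_F^2=\|A\|_F^2-\max_{\dim S\le r}\operatorname{Tr}(P_SAA^*)$. Your first route is only a sketch as written. The identity $\langle A-B,B\rangle=0$ needs a reason, for instance first-order optimality along $B\mapsto(1+t)B$, or directly $B=P_SA$. The phrase ``this should force a common singular basis'' also has to be made precise. One way is to combine $\langle A,B\rangle=\|B\|_F^2=\sum_{i\le r}\sigma_i^2$ with von Neumann's inequality, Cauchy--Schwarz, and the equality case of von Neumann; that is a heavier tool than necessary.

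There are three small points to tidy.
\begin{enumerate}
\item $\dim S=\operatorname{rank}(B)$ may be some $s<r$. Such a $B$ is strictly suboptimal, because $\operatorname{Tr}(P_SAA^*)\le\sum_{i\le s}\sigma_i^2<\sum_{i\le r}\sigma_i^2$, using $\sigma_r>0$ for $r<R$.
\item The Ky Fan equality case you invoke follows from $\sum_{i\le r}\lambda_i-\sum_i\lambda_iw_i\ge(\lambda_r-\lambda_{r+1})\sum_{i>r}w_i$. Here $\lambda_i=\sigma_i^2$ (padded with zeros) are the eigenvalues of $AA^*$ with orthonormal eigenvectors $e_i$, and $w_i=\|P_Se_i\|^2\in[0,1]$ with $\sum_iw_i=r$. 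This is exactly where the gap $\sigma_r>\sigma_{r+1}$ forces $S=\operatorname{span}(e_1,\dots,e_r)$, hence $B=P_SA=A_r$.
\item In the non-uniqueness construction, the precise reason your swapped matrix differs from $A_r$ is that $u_rv_r^*$ and $u_{r+1}v_{r+1}^*$ are Frobenius-orthogonal and $\sigma_r>0$. Its rank is exactly $r$ because $\sigma_{r+1}>0$ when $r<R$.
\end{enumerate}
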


\section{Single-mode lower bounds on the approximate coherent state rank}

In this section, we obtain lower bounds for the approximate coherent state rank of single-mode bosonic quantum states.

\subsection{Generic lower bounds}

In this section we prove the formal version of \cref{thm:lower_bound_csr_informal}:

\begin{theorem}[Generic lower bounds on the $\varepsilon$-approximate coherent state rank]
Let $\ket{\psi} = \sum_{n=0}^\infty \psi_n\ket{n}$ be a single-mode quantum state, let $N\in\mathbb N$ be a truncation parameter, and let $H_N(\psi)$ be the Hankel matrix built from the first $2N+1$ rescaled coefficients $\{\sqrt{n!}\psi_n\}_{n=0}^{2N}$ of $\ket{\psi}$, i.e.\ $(H_N(\psi))_{ij}=\sqrt{(i+j)!}\,\psi_{i+j}$, for $0\leq i ,j \leq N$. Then, 
\begin{equation}
\label{eq:genlow}
    \varepsilon < \frac{1}{2(N+1)(2N)!}\sum_{l=r+1}^{N+1}\sigma_l(H_N(\psi))^2\quad\Rightarrow\quad\kappa_\varepsilon(\psi)>r,
\end{equation}
for all $r\le N\in\mathbb N$.
\label{thm:lower_bound_appox_coherent_stellar}
\end{theorem}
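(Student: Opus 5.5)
The plan is to prove the contrapositive. Assume $\kappa_\varepsilon(\psi)\le r$, and show that
\[
\varepsilon\ \ge\ \frac{1}{2(N+1)(2N)!}\sum_{l=r+1}^{N+1}\sigma_l(H_N(\psi))^2 .
\]
The argument compares two Hankel matrices. The one built from an approximating superposition of coherent states has rank at most $r$. By \cref{thm:YEM} it therefore cannot be closer than $\sqrt{\sum_{l>r}\sigma_l^2}$ to $H_N(\psi)$ in Frobenius norm. We then bound this Frobenius distance from above by the Hilbert-space distance between the states, and hence by $\varepsilon$.

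\textbf{Step 1 (low rank of the approximant).} Take $\ket\phi=\sum_{j=1}^{k}c_j\ket{\alpha_j}$ with $k\le r$ and $F(\psi,\phi)\ge 1-\varepsilon$. Normalize $\ket\phi$ by rescaling the $c_j$; this changes neither the fidelity nor the number of terms. Multiply $\ket\phi$ by a global phase so that $\braket{\psi|\phi}\ge 0$. From the Fock expansion of coherent states,
\[
\sqrt{n!}\,\phi_n=\sum_{j=1}^k d_j\alpha_j^n,\qquad d_j=c_je^{-|\alpha_j|^2/2}.
\]
This is a sum of at most $r$ exponentials. By the factorization \eqref{eq:HVDV}, $H_N(\phi)=V^TDV$ with inner dimension $k$, so $\operatorname{rank}H_N(\phi)\le r$ for every $N$. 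Coinciding $\alpha_j$ can simply be merged.

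\textbf{Step 2 (Frobenius distance versus state distance).} The entry $(i,j)$ of $H_N(\psi)-H_N(\phi)$ equals $\sqrt{n!}(\psi_n-\phi_n)$ with $n=i+j\le 2N$. For fixed $n$, at most $N+1$ pairs $(i,j)$ satisfy $i+j=n$. Hence
\[
\|H_N(\psi)-H_N(\phi)\|_F^2=\sum_{n=0}^{2N}\#\{i+j=n\}\,n!\,|\psi_n-\phi_n|^2\le (N+1)(2N)!\,\|\psi-\phi\|_2^2 .
\]
With the phase chosen in Step 1,
\[
\|\psi-\phi\|_2^2=2-2|\braket{\psi|\phi}|\le 2\bigl(1-\sqrt{1-\varepsilon}\bigr)\le 2\varepsilon .
\]

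\textbf{Step 3 (conclusion).} \cref{thm:YEM} applied to $A=H_N(\psi)$ and $B=H_N(\phi)$, which has rank at most $r$, gives
\[
\sum_{l=r+1}^{N+1}\sigma_l(H_N(\psi))^2\le \|H_N(\psi)-H_N(\phi)\|_F^2\le 2(N+1)(2N)!\,\varepsilon .
\]
This is the contrapositive of \eqref{eq:genlow}. If $r\ge\operatorname{rank}H_N(\psi)$, the sum is zero and the statement is vacuous.

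\textbf{Expected obstacle.} No step is deep. The delicate points are the bookkeeping in Step 2: the global-phase choice, so that fidelity controls the $2$-norm distance, and the crude bounds on the anti-diagonal multiplicity ($\le N+1$) and on the factorial weights ($\le(2N)!$). These are what produce the constant $\tfrac{1}{2(N+1)(2N)!}$. One should also confirm that normalizing $\ket\phi$ does not increase its coherent-state count, which holds because rescaling the $c_j$ leaves the number of terms unchanged.
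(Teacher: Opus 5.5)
Your proposal is correct and follows essentially the same route as the paper. Both arguments use the Vandermonde factorization to get $\operatorname{rank} H_N(\phi)\le r$, then the Young--Eckart--Mirsky bound, then the anti-diagonal counting estimate $\|H_N(\psi)-H_N(\phi)\|_F^2\le (N+1)(2N)!\,\|\ket\psi-\ket\phi\|_2^2$ (the paper's \cref{lemma:norm_H}), and finally $\|\ket\psi-\ket\phi\|_2^2\le 2\varepsilon$. Your explicit normalization and global-phase choice for $\ket\phi$ is a useful detail that the paper leaves implicit when it invokes $\|\ket\psi-\ket\phi\|_2\le\sqrt{2(1-F(\psi,\phi))}$.
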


\begin{proof}
\label{proof:lower_bound_appox_coherent_stellar}

Let
\begin{equation}
\ket{\phi} = \sum_{k=1}^r c_k \ket{\alpha_k}
\end{equation}
be an arbitrary superposition of $r$ distinct coherent states. We have
\begin{equation}\label{eq:relnorm}
    \|\ket\psi-\ket\phi\|_2\le\sqrt{2(1-F(\psi,\phi))}.
\end{equation}
Taking the contrapositive in \cref{eq:genlow}, we aim to show that the fidelity between the state $\ket\psi$ and any superposition of $r$ coherent states is upper bounded by
$1-\frac{1}{2(N+1)(2N)!}\sum_{l>r}\sigma_l(H_N(\psi))^2$, where $H_N(\psi)=(\sqrt{(i+j)!}\,\psi_{i+j})_{0\le i,j\le N}$ is the Hankel matrix associated to the rescaled Fock amplitudes of the state $\ket\psi$. By \cref{eq:relnorm}, it is enough to show
\begin{equation}\label{eq:simplerbound}
    \|\ket\psi-\ket\phi\|_2^2\ge\frac{1}{(N+1)(2N)!}\sum_{l=r+1}^{N+1}\sigma_l(H_N(\psi))^2.
\end{equation}
To prove this bound, we examine the representation of $\ket\phi$ in the Fock basis: we define
\begin{equation}
    \phi_n \coloneqq \braket{n|\phi} = \sum_{k=1}^r c_k e^{-|\alpha_k|^2/2} \frac{\alpha_k^n}{\sqrt{n!}}.
\end{equation}
Rescaling these coefficients by $\sqrt{n!}$ yields 
\begin{equation}
    \sqrt{n!}\,\phi_n=\sum_{k=1}^rd_k\alpha_k^n,
\end{equation}
where we have set $d_k\coloneqq c_k e^{-|\alpha_k|^2/2}$. By \cref{eq:HVDV}, the associated Hankel matrix $H_N(\phi)=(\sqrt{(i+j)!}\,\phi_{i+j})_{0\le i,j\le N}$ takes the form
\begin{equation}
    H_N(\phi)=V(\bm\alpha)^TD(\bm d)V(\bm\alpha),
\end{equation}
for all $N\ge r$, where $V(\bm\alpha)$ is an $r\times(N+1)$ Vandermonde matrix and $D(\bm d)=\mathrm{Diag}(d_1,\dots,d_r)$. In particular, the rank of $H_N(\phi)$ is at most $r$, so by the Young--Eckart--Mirsky theorem (\cref{thm:YEM}),
\begin{equation}\label{eq:usingYEM}
    \|H_N(\psi)-H_N(\phi)\|_F^2\ge\sum_{l=r+1}^{N+1}\sigma_l(H_N(\psi))^2,
\end{equation}
where $\|\cdot\|_F$ denotes the Frobenius norm.

Finally, we relate the Euclidean distance of states to the Frobenius distance of the associated Hankel matrices via the following technical result, whose proof is deferred to \cref{appendix:proofs}:

\begin{lemma}
    Let $N\ge0$ and let $\ket\psi= \sum_{n=0}^\infty \psi_n\ket{n}$ and $\ket\phi= \sum_{n=0}^\infty \phi_n\ket{n}$ be two single-mode pure states, with associated Hankel matrices $H_N(\psi)=(\sqrt{(i+j)!}\,\psi_{i+j})_{0\le i,j\le N}$ and $H_N(\phi)=(\sqrt{(i+j)!}\,\phi_{i+j})_{0\le i,j\le N}$, respectively. Then, 
    \begin{equation}
        \|\ket\psi-\ket\phi\|_2^2\ge\frac{1}{(N+1)(2N)!}\|H_N(\psi)-H_N(\phi)\|_F^2.
    \end{equation}
    \label{lemma:norm_H}
\end{lemma} 

Combining \cref{lemma:norm_H} with \cref{eq:usingYEM} yields \cref{eq:simplerbound}, and concludes the proof.

\end{proof}

We note that the Young--Eckart--Mirsky theorem used in \cref{eq:usingYEM} in the proof of \cref{thm:lower_bound_appox_coherent_stellar} also shows the existence of a low-rank matrix saturating the bound. However, the optimal matrix is not necessarily a Hankel matrix, indicating that the bound on $\varepsilon$ in \cref{thm:lower_bound_appox_coherent_stellar} is not tight in general.
Moreover, the factorial term $(2N)!$ in the bound grows very rapidly, which results in a loose bound for high Fock numbers. To sharpen the result, we introduce in \cref{app:optib} an additional free rescaling parameter $b$, which aims to counterbalance the factorial scaling by re-weighting the importance of higher Fock states, leading to a tighter family of lower bounds (see \cref{thm:optimised_b}) which may be optimized numerically.

\subsection{Characterization of single-mode states of finite approximate coherent state rank}
\label{sec:statespec}

A remarkable consequence of \cref{thm:lower_bound_appox_coherent_stellar} is the following result:

\begin{theorem}[Characterization of single-mode states of finite approximate coherent state rank]
Let $\ket\psi=\sum_{n=0}^{+\infty}\psi_n\ket n$ be a single-mode quantum state and let $r>0$. Then, $\kappa(\psi)=r$ if and only if the sequence $(\sqrt{n!}\,\psi_n)_{n\in\mathbb N}$ follows a linear recurrence relation of order $r$. Equivalently, $\ket\psi$ can be written as a superposition of $k$ displaced core states
\begin{equation}
    \ket\psi=\sum_{j=1}^kc_j\hat D(\alpha_j)\ket{C_j},
\end{equation}
where all displacement amplitudes $\alpha_j$ are distinct and where each core state $\ket{C_j}$ has highest Fock number equal to $n_j$ such that $\sum_{j=1}^k(n_j+1)=r$.
\label{thm:LLR}
\end{theorem}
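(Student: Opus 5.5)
The argument goes in three steps. First, $\kappa(\psi)\le r$ forces all Hankel matrices $H_N(\psi)$ to have rank at most $r$. Second, the classical theory of finite-rank Hankel matrices and linear recurrences turns this into the displaced-core-state form. Third, a direct construction and a growth argument pin $\kappa(\psi)$ to exactly $r=\sum_j(n_j+1)$. Throughout, "order $r$" means minimal order, equivalently $\sup_N\operatorname{rank}H_N(\psi)=r$.

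\emph{Forward direction.} Suppose $\kappa(\psi)=r$. Then for every $\varepsilon>0$ we have $\kappa_\varepsilon(\psi)\le r$. Fix $N\ge r$ and apply the contrapositive of \cref{thm:lower_bound_appox_coherent_stellar}. For every $\varepsilon>0$ this gives $\varepsilon\ge\frac{1}{2(N+1)(2N)!}\sum_{l>r}\sigma_l(H_N(\psi))^2$. Letting $\varepsilon\to0$ yields $\sigma_l(H_N(\psi))=0$ for all $l>r$, so $\operatorname{rank}H_N(\psi)\le r$ for every $N$. An infinite Hankel matrix has finite rank $\le r$ if and only if its sequence $u_n=\sqrt{n!}\,\psi_n$ satisfies a linear recurrence with constant coefficients of order $\le r$; this is Kronecker's theorem \cite{salem1963algebraic}. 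The standard solution of such recurrences \cite{rosen1999discrete} then gives $u_n=\sum_{j=1}^kP_j(n)\alpha_j^n$ with distinct $\alpha_j$ and polynomials $P_j$ of degree $n_j$. The minimal order of the recurrence equals $\sum_j(n_j+1)$, which is also the rank of the infinite Hankel matrix.

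\emph{Identifying the displaced core states.} Any polynomial $P_j$ of degree $n_j$ can be written in the falling-factorial basis, $P_j(n)=\sum_{m\le n_j}a_{jm}\,n(n-1)\cdots(n-m+1)$. The sequence $n(n-1)\cdots(n-m+1)\alpha^n$ is, up to the constant $\alpha^m$, the rescaled Fock sequence of $\partial_\alpha^m$ applied to $e^{|\alpha|^2/2}\ket\alpha$ (holomorphic derivative). That vector is proportional to $(\hat a^\dagger)^m\ket\alpha$. Using $(\hat a^\dagger)^m\ket\alpha=\hat D(\alpha)(\hat a^\dagger+\alpha^*)^m\ket0$, each block $P_j(n)\alpha_j^n$ corresponds to $\hat D(\alpha_j)\ket{C_j}$, where $\ket{C_j}$ is a core state whose highest Fock number is exactly $n_j$ (the top coefficient is nonzero). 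The case $\alpha_j=0$ should be handled separately, since then the sequence has finite support and directly gives a core state. Conversely, expanding any $\hat D(\alpha)\ket{C}$ in this way shows its rescaled amplitudes have the form $P(n)\alpha^n$ with $\deg P$ equal to the top Fock number of $C$. This establishes the equivalence of the two characterizations.

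\emph{Reverse direction and exactness.} Suppose $\ket\psi=\sum_jc_j\hat D(\alpha_j)\ket{C_j}$. By \cite[Theorem 1]{Marshall_2023}, each $\ket{C_j}$ is approximated by $n_j+1$ coherent states. Displacement maps coherent states to coherent states up to phase, so summing these approximations gives $\kappa(\psi)\le\sum_j(n_j+1)=r$, with the error controlled by the triangle inequality. For the matching lower bound, suppose $\kappa(\psi)=r'<r$. The forward argument then gives $\operatorname{rank}H_N(\psi)\le r'$ for all $N$. But by the uniqueness of the exponential-polynomial representation, the minimal recurrence order is exactly $\sum_j(n_j+1)=r$, because the sequences $n^m\alpha_j^n$ with distinct $\alpha_j$ are linearly independent. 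Hence some $H_N(\psi)$ has rank $r$, a contradiction. This uniqueness fact is the growth argument the paper alludes to: distinct exponential rates and polynomial degrees cannot cancel asymptotically. I expect this step to be the main obstacle, because it requires carefully relating the minimal recurrence order to the rank of finite truncations $H_N$ for $N$ large.
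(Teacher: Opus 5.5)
Your proposal is correct and follows essentially the paper's proof. The forward direction uses the generic Hankel lower bound, then Kronecker's theorem and the standard solution of recurrences. The upper bound $\kappa(\psi)\le\sum_j(n_j+1)$ comes from Marshall--Anand. The matching lower bound rests on uniqueness of the exponential--polynomial representation: you phrase this as linear independence of the sequences $n^m\alpha_j^n$, while the paper runs a growth argument on the stellar function $\sum_j\tilde P_j(z)e^{\alpha_j z}$, which is just the exponential generating function of your sequence $u_n$.

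Three small remarks follow. First, your separate treatment of a zero root $\alpha_j=0$ patches a point the paper glosses over, since its normal-ordering step sends $P_j(\hat n)\ket{0}$ to $P_j(0)\ket{0}$. Second, the step you flag as the main obstacle is not one. If $\operatorname{rank}H_N(\psi)\le r'$ for all $N$, the same Kronecker theorem you used in the forward direction already gives a recurrence of order at most $r'<r$, which contradicts minimality. Third, you should state explicitly that applying this exactness result to the decomposition obtained in the forward direction is what upgrades ``order $\le r$'' to ``order exactly $r$''.
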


\begin{proof}
We first show the forward implication. Let $H_N(\psi)=(\sqrt{(i+j)!}\,\psi_{i+j})_{0\le i,j\le N}$, for all $N\in\mathbb N$. The condition $\kappa(\psi)=r$ implies $\kappa_\varepsilon(\psi)\le r$ for all $\varepsilon>0$, so taking the contrapositive in \cref{eq:genlow} we obtain
\begin{equation}
    \frac{1}{2(N+1)(2N)!}\sum_{l=r+1}^{N+1}\sigma_l(H_N(\psi))^2\le\varepsilon,
\end{equation}
 for all $\varepsilon>0$. Taking the limit when $\varepsilon\to0$ yields $\sigma_l(H_N(\psi))=0$ for all $l>r$, so the rank of $H_N(\psi)$ is at most $r$, for all $N\in\mathbb N$.
 This implies that the sequence $(\sqrt{n!}\,\psi_n)_{n\in\mathbb N}$ follows a linear recurrence relation of order at most $r$ \cite[Section I.3, Lemma III]{salem1963algebraic}. Hence, there exist $k\le r$, polynomials $P_1,\dots,P_k$ of degree $n_1,\dots,n_k$ with $\sum_{j=1}^k(n_j+1)\le r$ and distinct complex numbers $\alpha_1,\dots,\alpha_k$ such that \cite{rosen1999discrete}:
\begin{equation}
    \sqrt{n!}\,\psi_n=\sum_{j=1}^kP_j(n)\alpha_j^n.
\end{equation}
Equivalently using $\ket\alpha=e^{-|\alpha|^2/2}\sum_{n=0}^{+\infty}\frac{\alpha^n}{\sqrt{n!}}\ket n$, we have $\ket\psi=\sum_{j=1}^ke^{|\alpha_j|^2/2}P_j(\hat n)\ket{\alpha_j}$, where $\hat n=\hat a^\dag\hat a$ is the number operator. Writing each polynomial $P_j(\hat n)=Q_j(\hat a^\dag,\hat a)$ in normal order (with all creation operators to the left of annihilation operators) and using $\hat a\ket\alpha=\alpha\ket\alpha$, we obtain
 \begin{align}
    \ket\psi&=\sum_{j=1}^ke^{|\alpha_j|^2/2}Q_j(\hat a^\dag,\hat a)\ket{\alpha_j}\\
    &=\sum_{j=1}^ke^{|\alpha_j|^2/2}Q_j(\hat a^\dag,\alpha_j)\ket{\alpha_j}\\
    &=\sum_{j=1}^ke^{|\alpha_j|^2/2}Q_j(\hat a^\dag,\alpha_j)\hat D(\alpha_j)\ket{0}\\
    &=\sum_{j=1}^k\hat D(\alpha_j)\left[e^{|\alpha_j|^2/2}Q_j(\hat a^\dag+\alpha_j^*,\alpha_j)\right]\ket{0},
\end{align}
 where we used $\hat a^\dag\hat D(\alpha)=\hat D(\alpha)(\hat a^\dag+\alpha^*)$ with $\hat D$ the displacement operator. Setting for all $j\in\{1,\dots,k\}$
 \begin{equation}
 c_j\coloneqq\|e^{|\alpha_j|^2/2}Q_j(\hat a^\dag+\alpha_j^*,\alpha_j)\ket{0}\|\qquad\text{and}\qquad\ket{C_j}\coloneqq\frac1{c_j}e^{|\alpha_j|^2/2}Q_j(\hat a^\dag+\alpha_j^*,\alpha_j)\ket{0},
 \end{equation}
 we obtain $\ket\psi=\sum_{j=1}^kc_j\hat D(\alpha_j)\ket{C_j}$. Since $Q_j(\hat a^\dag+\alpha_j^*,\alpha_j)$ is a polynomial in $\hat a^\dag$ of degree at most $n_j$, the state $\ket{C_j}$ is a core state with highest Fock number at most $d_j\le n_j$, for all $j\in\{1,\dots,k\}$.

Now, a core state with highest Fock number $n$ has an approximate coherent state rank upper bounded by $n+1$ \cite[Theorem 1]{Marshall_2023}. Hence, a superposition of $k$ displaced core states $\hat D(\alpha_j)\ket{C_j}$ with highest Fock number $d_j$ has approximate coherent state rank at most $\sum_{j=1}^k(d_j+1)$. Since the approximate coherent state rank of $\ket\psi=\sum_{j=1}^kc_j\hat D(\alpha_j)\ket{C_j}$ is equal to $r$, this implies $r\le\sum_{j=1}^k(d_j+1)\le\sum_{j=1}^k(n_j+1)\le r$, so $d_j=n_j$ for all $j\in\{1,\dots,k\}$ and $\sum_{j=1}^k(n_j+1)=r$.

\smallskip

We now show the reverse implication. Reciprocally, if $\ket\psi=\sum_{j=1}^kc_j\hat D(\alpha_j)\ket{C_j}$ is a finite superposition of displaced core states with distinct $\alpha_j$ and $\sum_{j=1}^k(n_j+1)=r$, the same reasoning based on \cite[Theorem 1]{Marshall_2023} shows that $\kappa(\psi)\le r$. Let us assume for the sake of contradiction that $\kappa(\psi)<r$. Then, by the forward implication above, there exists a decomposition $\ket\psi=\sum_{j=1}^{k'}c_j'\hat D(\alpha_j')\ket{C_j'}=\sum_{j=1}^kc_j\hat D(\alpha_j)\ket{C_j}$ with $k'<r$ and $\sum_{j=1}^{k'}(n_j'+1)<r$. We consider the function $F_\psi^\star:z\mapsto e^{\frac12|z|^2}\langle z^*|\psi\rangle$ (the stellar function of $\ket\psi$, see \cref{appendix:theory}). We obtain
\begin{equation}
F_\psi^\star(z)=\sum_{j=1}^k\tilde P_j(z)e^{\alpha_jz}=\sum_{j=1}^{k'}\tilde P_j'(z)e^{\alpha_j'z},
\end{equation}
for all $z\in\mathbb C$, where $\tilde P_j$ and $\tilde P_j'$ are non-zero polynomials of degree $n_j$ and $n_j'$, respectively, which can be directly computed from the two decompositions of $\ket\psi$. Taking the difference and grouping the exponential terms, we obtain an identity of the form $\sum_mR_m(z)e^{\gamma_mz}=0$, where $\gamma_m$ ranges over the union of all distinct exponents $\{\alpha_j\}\cup\{\alpha_j'\}$ and where $R_m$ are polynomials (either of the form $\tilde P_i$, $\tilde P_j'$ or $\tilde P_i-\tilde P_j'$, depending on whether the corresponding exponent is common to both decompositions). The growth when $z\to\infty$ along rays in the complex plane given by $-\arg(\gamma_m)$ then implies that all the polynomials $R_m$ must vanish. Hence, $k=k'$ and (up to a permutation) $\alpha_j=\alpha_j'$ for all $j\in\{1,\dots,k\}$. We thus obtain $\sum_{j=1}^k(n_j'+1)<\sum_{j=1}^k(n_j+1)$. This implies in turn that there exists an index $j$ such that $n_j'<n_j$ so $\tilde P_j-\tilde P_j'$ is not identically zero, which contradicts the fact that $R_m$ vanishes for all $m$. This shows that $\kappa(\psi)=r$.
\end{proof}

\cref{thm:LLR} provides a complete characterization of single-mode states of finite approximate coherent state rank and indicates that most states have infinite approximate coherent state rank. 

\smallskip

More generally, lower bounds on the $\varepsilon$-approximate coherent state rank of a state $\ket\psi$ can be obtained by studying its associated Hankel matrices $H_N(\psi)$, for $N\in\mathbb N$. We illustrate this technique for specific quantum states, starting with core states:

\begin{corollary}[Approximate coherent state rank of core states]\label{coro:coreFockACSR}
   Let $\psi$ be a finite superposition of Fock states with highest Fock number $n$. Then, its approximate coherent state rank is given by $\kappa(\psi)=n+1$. Moreover, for $\ket{\psi}=\ket n$, we have $\kappa_\varepsilon(\ket n)=n+1$ for all $\varepsilon < \frac{n!}{2(n+1)(2n)!}$.
\end{corollary}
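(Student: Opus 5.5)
The first claim, $\kappa(\psi)=n+1$, comes directly from \cref{thm:LLR}. Take $k=1$, $\alpha_1=0$ (so $\hat D(0)=\mathbb I$) and $\ket{C_1}=\ket\psi$, whose highest Fock number is $n_1=n$. Then $\sum_j(n_j+1)=n+1$, and the reverse implication of \cref{thm:LLR} gives $\kappa(\psi)=n+1$.

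For the quantitative statement about $\ket n$, I will apply \cref{thm:lower_bound_appox_coherent_stellar} with $N=n$ and $r=n$. The rescaled amplitudes of $\ket n$ are $\sqrt{m!}\,\psi_m=\sqrt{n!}\,\delta_{m,n}$. The Hankel matrix $H_n(\ket n)$ is therefore the $(n+1)\times(n+1)$ matrix with entries $\sqrt{n!}$ on the anti-diagonal $i+j=n$ and zeros elsewhere. This is $\sqrt{n!}$ times the exchange (anti-identity) matrix, which is a permutation matrix and hence unitary. All its singular values therefore equal $\sqrt{n!}$, and in particular $\sigma_{n+1}(H_n)^2=n!$.

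Plugging this into \cref{eq:genlow} with $r=n\le N=n$, the sum over $l$ from $r+1=n+1$ to $N+1=n+1$ has the single term $n!$. So if $\varepsilon<\frac{n!}{2(n+1)(2n)!}$, then $\kappa_\varepsilon(\ket n)>n$, i.e.\ $\kappa_\varepsilon(\ket n)\ge n+1$.

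For the upper bound, note that $\kappa_\varepsilon$ is nonincreasing as $\varepsilon$ grows, and $\kappa=\sup_\varepsilon\kappa_\varepsilon=n+1$ by the first part. Hence $\kappa_\varepsilon(\ket n)\le n+1$ for every $\varepsilon>0$, and combined with the lower bound this gives equality.

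There is no real obstacle here. The only points needing care are choosing $N=n$ to get the cleanest bound, and checking that the anti-diagonal structure of $H_n$ yields the explicit singular values.
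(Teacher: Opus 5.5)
Your proof is correct and follows the paper's argument essentially verbatim. The first part uses \cref{thm:LLR} with $k=1$, $\alpha_1=0$, $\ket{C_1}=\ket\psi$. The second part feeds the anti-diagonal Hankel matrix $H_n(\ket n)$, whose singular values all equal $\sqrt{n!}$, into \cref{thm:lower_bound_appox_coherent_stellar} with $r=N=n$. Your explicit remark that $\kappa_\varepsilon(\ket n)\le\kappa(\ket n)=n+1$ for every $\varepsilon>0$ usefully spells out the matching upper bound, which the paper leaves implicit.
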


\begin{proof} The first part of the result is a direct consequence of \cref{thm:LLR} for $k=1$, $\alpha_j=0$ and $\ket{C_j}=\ket\psi$, with $n_j=n$.

Consider the Fock state $\ket\psi=\ket n$. Its associated Hankel matrix of size $n+1$ is given by
\begin{equation}
        H_n(\psi) =
        \begin{pmatrix}
        0 & 0 & \cdots & 0 & \sqrt{n!} \\
0 & 0 & \cdots & \sqrt{n!} & 0 \\
\vdots & \vdots & \ddots & \vdots & \vdots \\
0 & \sqrt{n!} & \cdots & 0 & 0 \\
\sqrt{n!} & 0 & \cdots & 0 & 0
\end{pmatrix}.
        \label{equ:fock-hankel}
\end{equation} This is an anti-diagonal matrix with all singular values equal to $\sqrt{n!}$.
From \cref{thm:lower_bound_appox_coherent_stellar}, we know that $\kappa_{\varepsilon}(\psi)>r$ for all $\varepsilon>0$ such that
\begin{equation}
    \varepsilon < \frac{1}{2(n+1)(2n)!}\sum_{l=r+1}^{n+1}\sigma_l(H_n(\psi))^2,
    \label{equ:fock_eps_threshold}
\end{equation}
so taking $r=n$ we obtain $\kappa_\varepsilon(\ket n)=n+1$ whenever
\begin{equation}
    \varepsilon < \frac{n!}{2(n+1)(2n)!}.
\end{equation}
Applying the optimized \cref{thm:optimised_b} instead of \cref{thm:lower_bound_appox_coherent_stellar} leads to a much stronger bound, as we detail in \cref{app:optib}.
\end{proof}
 
Note that by \cref{thm:LLR1}, the Fock basis amplitudes of $\ket\psi$ must follow a linear recurrence relation of order at most $n+1$, which in this pathological case reads $\psi_{p+n+1}=0$, for all $p\in\mathbb N$.

\smallskip

We now consider squeezed states:

\begin{corollary}[Approximate coherent state rank of single-mode squeezed states]
    The $\varepsilon$-approximate coherent state rank of a squeezed state $\kappa_\varepsilon(\ket{\xi})$ with $\xi\neq0$ goes to infinity as $\varepsilon$ goes to $0$. Equivalently, \mbox{$\kappa(\ket{\xi})=+\infty$} for all $\xi\neq0$.
    \label{coro:no_finite_decomp_squeezed2}
\end{corollary}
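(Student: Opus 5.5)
The plan is to apply \cref{thm:LLR}: it suffices to show that the rescaled sequence $s_n\coloneqq\sqrt{n!}\,\xi_n$ satisfies no linear recurrence relation with constant coefficients of any finite order. Once this holds, $\kappa(\ket\xi)$ cannot equal any finite $r$, so $\kappa(\ket\xi)=+\infty$. Since $\kappa=\lim_{\varepsilon\to0^+}\kappa_\varepsilon$ and $\kappa_\varepsilon$ is non-increasing in $\varepsilon$, this is equivalent to $\kappa_\varepsilon(\ket\xi)\to\infty$.

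From the Fock expansion, $s_{2m+1}=0$ and
\begin{equation}
s_{2m}=\frac{1}{\sqrt{\cosh r}}\,\frac{(2m)!}{2^m m!}\,\lambda^m=\frac{1}{\sqrt{\cosh r}}\,(2m-1)!!\,\lambda^m,\qquad \lambda\coloneqq-e^{i\phi}\tanh r\neq0 .
\end{equation}
Suppose for contradiction that $s$ satisfies a recurrence of order $r$. Then, as used in the proof of \cref{thm:LLR}, $s_n=\sum_{j}P_j(n)\alpha_j^n$ for finitely many $\alpha_j$ and polynomials $P_j$. Any such sequence has at most exponential growth: $|s_n|\le C\,n^{d}\,\max_j|\alpha_j|^n$.

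On the other hand, along even indices $|s_{2m}|\propto(2m-1)!!\,|\lambda|^m$. By Stirling, $(2m-1)!!\sim\sqrt2\,(2m/e)^m$, which grows super-exponentially because $|\lambda|\neq0$. This contradicts the exponential bound, so no finite-order recurrence exists.

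An equivalent route uses the stellar function. $s_n$ being an exponential polynomial means $F^\star_\xi(z)=\sum_n s_n z^n/n!$ is a finite sum $\sum_j\tilde P_j(z)e^{\alpha_j z}$, hence of exponential type 1. But $F^\star_\xi(z)\propto e^{\lambda z^2/2}$, which has order 2, again a contradiction.

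The only delicate step is justifying the closed form $s_n=\sum_jP_j(n)\alpha_j^n$ when the recurrence has a vanishing leading or trailing coefficient. This could happen for a recurrence like $s_{p+r}=\sum_{i<r}a_is_{p+i}$ with $a_0=0$. In that case I would first strip zero roots: such a recurrence only forces a finite initial segment to be arbitrary, and eventually $s_n=\sum_jP_j(n)\alpha_j^n$ still holds for all large $n$. That is enough for the growth comparison, since we only need the bound for large $n$.

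If the paper prefers a combinatorial argument, an alternative is to show directly that the Hankel matrices $H_N(\xi)$ have unbounded rank, for instance via a nonvanishing determinant of a checkerboard Hankel matrix of double factorials. The growth argument is shorter, so I would use it.
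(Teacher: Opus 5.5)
Your argument is correct, but it refutes the recurrence by a different mechanism than the paper. Both proofs reduce, via \cref{thm:LLR}, to showing that $\sqrt{n!}\,\xi_n$ satisfies no constant-coefficient linear recurrence.

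The paper argues algebraically. Using closure of rational power series under Hadamard products, it passes to a squared sequence $h_n$ whose term ratio $h_{n+1}/h_n=\frac{\lambda^2}{2}(2n+1)$ is a degree-one polynomial in $n$. Substituting this into an order-$k$ recurrence gives a polynomial of degree $k$ in $n$, with leading coefficient $\lambda^{2k}\neq0$, that would have to vanish for all $n$.

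You argue analytically instead. You compare the at-most-exponential growth of (eventual) exponential polynomials with $(2m-1)!!\,|\lambda|^m\sim\sqrt2\,(2m|\lambda|/e)^m$. Your handling of zero trailing coefficients is exactly what that closed form needs.

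What your route buys:
\begin{itemize}
\item It needs no Hadamard-product step.
\item It handles the whole sequence at once, so the vanishing odd-index terms cost nothing. The ratio argument divides by $h_n$, so it implicitly passes to the even subsequence, a step the paper leaves unstated.
\item It works with the correctly rescaled even terms $\sqrt{(2m)!}\,\xi_{2m}\propto(2m-1)!!\,\lambda^m$. The paper's lemma is stated for $\lambda^n\sqrt{(2n)!}/(2^n\sqrt{n!})\propto\sqrt{n!}\,\xi_{2n}$ instead. Its ratio trick would apply equally to your sequence, whose ratio $\lambda(2m+1)$ is already polynomial, so no squaring would be needed.
\end{itemize}

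Your stellar-function variant is the most conceptual. Finite-rank states have stellar functions $\sum_j\tilde P_j(z)e^{\alpha_jz}$ of order at most $1$; this is what your phrase ``exponential type $1$'' should say. By contrast, $e^{\lambda z^2/2}$ has order $2$. This mirrors the growth argument in the reverse implication of \cref{thm:LLR}, and it extends verbatim to any state whose stellar function has order greater than one.

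The paper's method, for its part, stays purely algebraic and avoids asymptotics. Its price is that it depends on the term ratio being a polynomial.
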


\begin{proof}
    \label{proof:no_finite_decomp_squeezed}
    The Fock expansion of a squeezed state is $\ket{\xi} = \frac{1}{\sqrt{\cosh(r)}}\sum_{n=0}^\infty(-e^{i\phi}\tanh(r))^n\frac{\sqrt{(2n)!}}{2^nn!}\ket{2n}$, where $\xi=re^{i\phi}\in\mathbb C$. Setting $\lambda = -e^{i\phi}\tanh(r)$:
    \begin{equation}
        \ket{\xi} = \frac{1}{\mathcal{N}}\sum_{n=0}^\infty\frac{\lambda^n\sqrt{(2n)!}}{2^nn!}\ket{2n} =  \frac{1}{\mathcal{N}}\sum_{n=0}^\infty \xi_{2n}\ket{2n},
    \end{equation} where $\mathcal{N}$ accounts for normalization, and where $ \xi_{2n} = \frac{\lambda^n\sqrt{(2n)!}}{2^nn!}$ and $\xi_{2n+1}=0$ are the coefficients. We show that this sequence does not satisfy any linear recurrence relation, unless $\xi=0$, which by \cref{thm:LLR} implies that $\kappa(\ket\xi)> r$ for all $r\in\mathbb N$. It is enough to show the following:

\begin{lemma}
    For $\lambda \neq 0$, the sequence $s_n = \frac{\lambda^n\sqrt{(2n)!}}{2^n\sqrt{n!}}$ does not satisfy a linear recurrence relation with constant coefficients of any order.
    \label{lem:sn_no_LRR}
\end{lemma}

From \cite[Proposition 4.2.5]{Stanley2012EnumerativeCombinatorics1}, if $F(x)$ and $G(x)$ are rational power series, then so is their Hadamard product $F*G$. Using \cite[Theorem 4.1.1]{Stanley2012EnumerativeCombinatorics1}, this means that if two sequences $f(n)$ and $g(n)$ each satisfy a linear recurrence relation with constant coefficients, then so does their product $h(n) = f(n)g(n)$.

Let us therefore consider $h_n = s_n^2 = \frac{\lambda^{2n}(2n)!}{2^{2n}n!}$. We compute the ratio
    \begin{equation}
    R(n)=\frac{h_{n+1}}{h_n} = \frac{\frac{\lambda^{2n+2}(2n+2)!}{2^{2n+2}(n+1)!}}{\frac{\lambda^{2n}(2n)!}{2^{2n}n!}} = \frac{\lambda^2}{4}\frac{(2n+2)(2n+1)}{n+1} = \frac{\lambda^2}{2}(2n+1).
    \label{equ:hn_rec_rel}
    \end{equation}
For the sake of contradiction, let us assume that $h_n$ satisfies a linear recurrence relation with constant coefficients, then for some $k\in \mathbb{N}^*$, using the relation in \cref{equ:hn_rec_rel},
\begin{align}
    h_{n+k} &= c_{k-1} h_{n+k-1} + \dots + c_1 h_{n+1} + c_0 h_{n} \\
    \Leftrightarrow \left( \prod_{j=0}^{k-1} R(n+j) \right) h_{n}& = \left[ c_{k-1}\prod_{j=0}^{k-2} R(n+j)  + \dots + c_1R(n) + c_0 \right]h_{n},
\end{align} 
so that 
\begin{equation}
    \left[ \prod_{j=0}^{k-1} R(n+j)  -  c_{k-1}\prod_{j=0}^{k-2} R(n+j)  - \dots - c_1R(n) - c_0 \right]h_{n} = 0.
    \label{equ:deg-k-poly-0}
\end{equation}
By construction, the left-hand side of \cref{equ:deg-k-poly-0} is a polynomial of degree $k$ in $n$. It therefore has at most $k$ roots, but from \cref{equ:deg-k-poly-0} it must be zero for all $n\in \mathbb{N}$, which implies that $\lambda=0$.

Therefore, for $\xi\neq0\Leftrightarrow\lambda\neq0$, $h_n$ does not satisfy any linear recurrence relation with constant coefficients and thus neither does $s_n$, and so neither does $\xi_n$.
\end{proof}

\section{Multimode lower bounds on the approximate coherent state rank}

In this section, we derive lower bounds for the approximate coherent state rank of multimode bosonic quantum states.

\subsection{From single-mode to multimode lower bounds}

We first provide a lower bound on the approximate coherent state rank of a superposition of multimode core states:

\begin{theorem}[Lower bound on the approximate coherent state rank of multimode core states]
   Let $\ket{\psi}$ be a finite superposition of multimode Fock states,
   \begin{equation}
    \ket{\bm\psi} = \sum_{|\vec{n}| \leq n} c_n \ket{\vec{n}}, \quad \text{with } |\vec{n}| =  \sum_{i=1}^{m} n_i.
    \end{equation} If $\ket{\bm\psi}$ has a non-vanishing component in the $n$-boson subspace (i.e.\ there exists at least one $\vec{n}$ with $|\vec{n}| = n$ such that $c_{\vec{n}} \neq 0$), then
    \begin{equation}
        \kappa(\bm\psi)\ge n+1.
    \end{equation}
        \label{thm:multimode_core_state_lowerbound_csr2}
\end{theorem}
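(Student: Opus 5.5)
We argue by contradiction and compress the multimode problem to a single mode, so that \cref{coro:coreFockACSR} applies. Suppose $\kappa(\bm\psi)\le n$. Then for every $\varepsilon>0$ there is $\ket{\bm\phi_\varepsilon}=\sum_{j=1}^{n}c_j^{(\varepsilon)}\ket{\bm\alpha_j^{(\varepsilon)}}$ with $F(\bm\psi,\bm\phi_\varepsilon)\ge1-\varepsilon$. Choose the global phase so that $\|\ket{\bm\psi}-\ket{\bm\phi_\varepsilon}\|_2\le\sqrt{2\varepsilon}$, as in \cref{eq:relnorm}. We will build a fixed contraction $\hat T$ with three properties: it maps coherent states to multiples of single-mode coherent states, it is norm non-increasing, and $\hat T\ket{\bm\psi}$ is a nonzero single-mode core state with highest Fock number exactly $n$.

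\textbf{Choice of $\hat T$.} Take $\hat T=(\mathbb I\otimes\bra0^{\otimes(m-1)})\hat U$ for a passive linear unitary $\hat U$ with matrix $U$.
\begin{itemize}
\item By \cref{sec:linearoptics}, $\hat T\ket{\bm\alpha}$ is proportional to $\ket{(U\bm\alpha)_1}$, so $\hat T\ket{\bm\phi_\varepsilon}$ is a superposition of at most $n$ single-mode coherent states.
\item Since $\hat U$ conserves the total boson number, $\hat T$ sends the $p$-boson sector to $\mathrm{span}\{\ket p\}$. Hence $\hat T\ket{\bm\psi}=\sum_{p\le n}\beta_p\ket p$.
\item Concretely, $\hat T\ket{\bm\psi}$ is obtained from the polynomial $P_{\bm\psi}(\bm a^\dagger)=\sum c_{\bm n}\prod_i(a_i^\dagger)^{n_i}/\sqrt{n_i!}$. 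Substituting $a_k^\dagger\mapsto \sum_l U_{lk}a_l^\dagger$ and keeping only $a_1^\dagger$ (the other modes are projected onto vacuum) gives the single-mode polynomial $P(U_{11}a_1^\dagger,\dots,U_{1m}a_1^\dagger)$.
\item Its top coefficient is $\beta_n\propto Q_n(U_{11},\dots,U_{1m})$, where $Q_n$ is the degree-$n$ homogeneous part of $P_{\bm\psi}$. This is a nonzero polynomial by assumption.
\item A nonzero homogeneous polynomial does not vanish on the whole unit sphere, since otherwise it would vanish on all of $\mathbb C^m$ by homogeneity. Because any unit vector can be the first row of a unitary, we can pick $U$ with $\beta_n\neq0$.
\end{itemize}

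\textbf{Passing to the limit.} Set $\ket{\chi}=\hat T\ket{\bm\psi}\neq0$ and $\ket{\chi_\varepsilon}=\hat T\ket{\bm\phi_\varepsilon}$.
\begin{itemize}
\item Since $\|\hat T\|\le1$, we get $\|\ket\chi-\ket{\chi_\varepsilon}\|\le\sqrt{2\varepsilon}$.
\item Normalizing gives $\|\ket\chi/\|\chi\|-\ket{\chi_\varepsilon}/\|\chi_\varepsilon\|\|\le 2\sqrt{2\varepsilon}/\|\chi\|$, valid once $\sqrt{2\varepsilon}<\|\chi\|$.
\item So the fidelity of the normalized state $\ket\chi/\|\chi\|$ with a superposition of at most $n$ coherent states tends to $1$ as $\varepsilon\to0$. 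The constant $\|\chi\|$ is fixed and independent of $\varepsilon$, which is why $\hat T$ had to be chosen once, before taking the limit.
\item Therefore $\kappa(\chi)\le n$.
\item But $\ket\chi/\|\chi\|$ is a single-mode core state with highest Fock number $n$, so \cref{coro:coreFockACSR} gives $\kappa(\chi)=n+1$. This is a contradiction, and hence $\kappa(\bm\psi)\ge n+1$.
\end{itemize}

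The main obstacle is the choice of $U$: we must ensure the top Fock component survives the projection. We handle it with the homogeneity argument above, using that the first row of $U$ can be any unit vector.
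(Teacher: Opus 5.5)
Your proof is correct and follows the paper's argument. The paper likewise picks a bunching unitary by evaluating the nonzero degree-$n$ homogeneous part at a unit vector placed as the first row of $U$, projects the remaining modes onto vacuum, and reduces to the single-mode core-state corollary. Your use of the contraction property $\|\hat T\|\le 1$ is equivalent to the paper's orthogonal decomposition. Your explicit renormalization of $\hat T\ket{\bm\psi}$, using the fixed constant $\|\chi\|$, fills in a step the paper passes over when it concludes $\kappa(\bm\psi)\ge\kappa(\ket C)$.
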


\begin{proof}
According to \cref{lem:unitary_existence} below, there exists an $m \times m$ passive linear unitary $U$ such that the amplitude $d_n$ of the fully bunched state $\ket{n, 0, \dots, 0}$ at the output is non-zero. Let $\hat{U}$ be the corresponding unitary operator.

We decompose the output state by projecting the last $m-1$ modes onto the vacuum. Let $\ket{C} = (\mathbb{I} \otimes \bra{0}^{\otimes (m-1)}) \hat{U} \ket{\bm\psi}$ represent the state in the first mode. Because passive linear unitaries preserve total boson number, this core state takes the form $\ket{C} = \sum_{k=0}^n d_k \ket{k}$. In particular, due to \cref{lem:unitary_existence}, $d_n \neq 0$. We can then write:
\begin{equation*}
    \hat{U}\ket{\bm\psi} = \ket{C} \otimes \ket{0}^{\otimes (m-1)} + \ket{\perp},
\end{equation*}
where $(\mathbb{I} \otimes \bra{0}^{\otimes (m-1)})\ket{\perp} = 0$.

On the other hand, given an $m$-mode superposition of $r$ coherent states $\ket{\bm\phi} = \sum_{k=1}^r c_k \ket{\bm\alpha_k}$, we can decompose it as $\ket{\bm\phi} = [(\mathbb I \otimes \bra 0^{\otimes (m-1)}) \ket{\bm\phi}] \otimes \ket 0^{\otimes (m-1)} + \ket{\perp'}$. This gives:
\begin{align*}
    \left\|\hat U \ket{\bm\psi} - \ket{\bm\phi} \right\|_2^2 &= \left\| \left( \ket C \otimes \ket 0^{\otimes (m-1)} + \ket \perp \right) - \left( [(\mathbb I \otimes \bra 0^{\otimes (m-1)}) \ket{\bm\phi}] \otimes \ket 0^{\otimes (m-1)} + \ket{\perp'} \right) \right\|_2^2 \\
    &= \left\| \ket C \otimes \ket 0^{\otimes (m-1)} - [(\mathbb I \otimes \bra 0^{\otimes (m-1)}) \ket{\bm\phi}] \otimes \ket 0^{\otimes (m-1)} \right\|_2^2 + \left\| \ket \perp - \ket{\perp'} \right\|_2^2 \\
    &\ge  \left\| \ket C \otimes \ket 0^{\otimes (m-1)} - [(\mathbb I \otimes \bra 0^{\otimes (m-1)}) \ket{\bm\phi}] \otimes \ket 0^{\otimes (m-1)} \right\|_2^2.
\end{align*}
Writing $\ket{\phi_n} \coloneqq (\mathbb I \otimes \bra 0^{\otimes (m-1)}) \ket{\bm\phi}$, we obtain $\ket{\phi_n} = \sum_{k=1}^r c_k (\mathbb I\otimes\bra0^{\otimes(m-1)})\ket{\bm\alpha_k}$, so $\ket{\phi_n}$ is a (sub-normalised) superposition of at most $r$ single-mode coherent states. Using the unitary invariance of the norm:
\begin{align}
   \left\| \ket{\bm\psi} - \hat U^\dagger \ket{\bm\phi} \right\|_2^2 &=  \left\| \hat U \ket{\bm\psi} - \ket{\bm\phi} \right\|_2^2 \nonumber \\
    &\ge \left\|  \ket C \otimes \ket 0^{\otimes (m-1)} - \ket{\phi_n} \otimes \ket 0^{\otimes (m-1)}\right\|_2^2 \\
    &= \|\ket{C}-\ket{\phi_n}\|_2^2.
\end{align}

Let $\ket{\bm\chi}$ be an $m$-mode superposition of $r$ coherent states. Setting $\ket{\bm\phi}=\hat U\ket{\bm\chi}$, $\ket{\bm\phi}$ is also an $m$-mode superposition of $r$ coherent states, since passive linear unitaries map tensor products of coherent states to tensor products of coherent states, and we have
\begin{equation}
    \left\| \ket{\bm\psi} - \ket{\bm\chi} \right\|_2^2\ge\|\ket{C}-\ket{\phi_n}\|_2^2,
\end{equation}
and thus $\kappa(\bm\psi)\ge\kappa(\ket C)$. Now $\ket{C}$ is a single-mode core state with highest Fock number $n$, so \cref{cor:fock_lowerbound_csr} gives $\kappa(\ket C)=n+1$ and thus $\kappa(\bm\psi)\ge n+1$.
\end{proof}

\begin{lemma}[Bunching unitary for multimode core states]
Let $\ket{\bm\psi} = \sum_{|\vec{n}| \leq n} c_n \ket{\vec{n}}, \; \text{with } |\vec{n}| =  \sum_{i=1}^{m} n_i$ 
be an $m$-mode core state with total boson truncation $n$. If the $n$-boson component is non-zero, there exists a passive Gaussian unitary $\hat U$ such that the reduced state in the first mode, $\ket{\psi_\mathrm{red}} = (\mathbb{I} \otimes \bra{0}^{\otimes m-1}) \hat{U} \ket{\bm\psi}$, has a non-zero amplitude for the Fock state $\ket{n}$.
\label{lem:unitary_existence}
\end{lemma}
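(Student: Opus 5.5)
We compute the amplitude of $\ket{n,0,\dots,0}$ in $\hat U\ket{\bm\psi}$ as a polynomial function of the entries of $U$. We then show that this polynomial is not identically zero on the unitary group.

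First, since $\hat U$ conserves total boson number, only the $n$-boson component $\ket{\bm\psi_n}=\sum_{|\vec n|=n}c_{\vec n}\ket{\vec n}$ contributes to this amplitude. Write each Fock state as $\ket{\vec n}=\prod_k (\hat a_k^\dagger)^{n_k}/\sqrt{n_k!}\,\ket{\bm 0}$ and use $\hat U\hat a_k^\dagger\hat U^\dagger=\sum_l U_{lk}\hat a_l^\dagger$ together with $\hat U\ket{\bm0}=\ket{\bm0}$. Projecting onto $\bra{n}\otimes\bra{0}^{\otimes(m-1)}$ keeps only the terms in which every creation operator lands on mode $1$. This gives
\begin{equation}
d_n=\sqrt{n!}\sum_{|\vec n|=n}\frac{c_{\vec n}}{\sqrt{\prod_k n_k!}}\prod_{k=1}^m U_{1k}^{n_k}=P(u_1,\dots,u_m),
\end{equation}
where $u_k:=U_{1k}$. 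Thus $d_n$ is a homogeneous polynomial of degree $n$ in the first row $\bm u$ of $U$. The coefficients of $P$ are nonzero multiples of the $c_{\vec n}$, and distinct $\vec n$ give distinct monomials, so $P$ is not the zero polynomial by the hypothesis on the $n$-boson component.

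It remains to find a unit vector $\bm u$ with $P(\bm u)\neq0$; any unitary $U$ with first row $\bm u$ then works, for instance by completing $\bm u$ to an orthonormal basis (Gram--Schmidt). A nonzero polynomial cannot vanish on a nonempty open subset of $\mathbb C^m\cong\mathbb R^{2m}$, so there is some $\bm v\neq0$ with $P(\bm v)\neq0$. By homogeneity, $P(\bm v/\|\bm v\|)=\|\bm v\|^{-n}P(\bm v)\neq0$, so $\bm u=\bm v/\|\bm v\|$ is a suitable first row.

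The only delicate point is the bookkeeping of index conventions: whether the relevant entries are a row or a column of $U$, and the order of the factors $U_{lk}$. This does not affect the argument, since either way $d_n$ is a nonzero homogeneous polynomial in $m$ entries that together form a unit vector, and any unit vector is achievable as such a row or column.
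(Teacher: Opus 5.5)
Your proposal is correct and follows the paper's proof essentially step by step: you reduce to the $n$-boson component, write the bunched amplitude $d_n$ as $\sqrt{n!}$ times a nonzero homogeneous degree-$n$ polynomial evaluated on the first row of $U$, and complete a unit vector at which this polynomial is nonzero to a unitary. Your homogeneity-plus-normalization argument for the existence of that unit vector is a slightly more explicit justification of a step the paper simply asserts.
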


\begin{proof}
    To begin with, let us decompose the state by total boson number:
    \begin{equation}
        \ket{\bm\psi} = \sum_{k=0}^n \ket{\bm\psi^{(k)}}.
    \end{equation} 
    Since passive linear unitaries conserve the total boson number, the action of any unitary $\hat{U}$ on $\ket{\bm\psi}$ can be viewed independently on each $k$-boson subspace.
    To prove the existence of a $\ket{n}$ component in the first output mode, we only need to consider the $n$-boson part, $\ket{\bm\psi^{(n)}}$. 
    
    Following \cite{aaronson_2010}, we map the $n$-boson component to a homogeneous polynomial.
    Indeed, the $n$-boson component of the state, $\ket{\bm\psi^{(n)}}$, can be expressed in terms of the input creation operators $\hat{a}_j^\dagger$ acting on the vacuum $\ket{\vec{0}}$:
    \begin{equation}
        \ket{\bm\psi^{(n)}} = \sum_{|\vec{n}| = n} c_{\vec{n}} \frac{(\hat{a}_1^\dagger)^{n_1} \dots (\hat{a}_m^\dagger)^{n_m}}{\sqrt{\prod_j n_j!}}  \ket{\vec{0}} = P(\hat{a}_1^\dagger, \dots, \hat{a}_m^\dagger)\ket{\vec{0}},
        \label{equ:multimode_core_2nd_quant}
    \end{equation} 
    where $\vec{n} = (n_1, \dots, n_m)$ and $|\vec{n}|=\sum_k n_k$. We define a homogeneous polynomial $P(z_1, \dots, z_m)$ of degree $n$ by replacing the creation operators $\hat{a}_j^\dagger$ with complex variables $z_j$ in \cref{equ:multimode_core_2nd_quant}:
    \begin{equation}
        P(z_1, \dots, z_m) = \sum_{|\vec{n}| = n} c_{\vec{n}}\frac{z_1^{n_1} \dots z_m^{n_m}}{\sqrt{\prod_j n_j!}}.
    \end{equation}
    Since $\ket{\bm\psi^{(n)}} \neq 0$, there exists at least one $c_{\vec{n}} \neq 0$, implying that $ P(z_1, \dots, z_m)$ is a non-zero homogeneous polynomial.

    A passive linear unitary described by a unitary matrix $U$ evolves the input creation operators into output modes (denoted here by $\hat{b}_i^\dagger$) according to the forward transformation: 
    \begin{equation}
        \hat{U} \hat{a}_j^\dagger \hat{U}^\dagger = \sum_{i=1}^m U_{ij} \hat{b}_i^\dagger.
    \end{equation} 
    Because a passive linear unitary does not affect the total boson number, and $\hat{U}\ket{\vec{0}} = \ket{\vec{0}}$, the evolved $n$-boson state can be directly written by substituting this transformation into \cref{equ:multimode_core_2nd_quant}:
    \begin{equation}
        \hat{U}\ket{\bm\psi^{(n)}} = P\left( \sum_{i=1}^m U_{i1} \hat{b}_i^\dagger, \dots, \sum_{i=1}^m U_{im} \hat{b}_i^\dagger \right) \ket{\vec{0}}.
        \label{equ:poly_with_all_output}
    \end{equation}

    We are interested in the amplitude $d_n$ of the fully bunched state in the first output mode, denoted by $\ket{n, \vec{0}} = \frac{1}{\sqrt{n!}} (\hat{b}_1^\dagger)^n \ket{\vec{0}}$. To find this, we take the inner product of the target state with the evolved state from \cref{equ:poly_with_all_output}:
\begin{equation}
    d_n = \bra{n, \vec{0}} \hat{U} \ket{\bm\psi^{(n)}} = \bra{n, \vec{0}} P\left( \sum_{i=1}^m U_{i1} \hat{b}_i^\dagger, \dots, \sum_{i=1}^m U_{im} \hat{b}_i^\dagger \right) \ket{\vec{0}}.
\end{equation}
Because the bra $\bra{n, \vec{0}}$ is orthogonal to any Fock state containing bosons in modes $i > 1$, only the terms in the multinomial expansion of $P$ consisting exclusively of $\hat{b}_1^\dagger$ survive the inner product. This allows us to effectively set $\hat{b}_i^\dagger = 0$ for all $i > 1$ within the arguments of the polynomial:
\begin{equation}
    d_n = \bra{n, \vec{0}} P(U_{11}\hat{b}_1^\dagger, U_{12}\hat{b}_1^\dagger, \dots, U_{1m}\hat{b}_1^\dagger) \ket{\vec{0}}.
\end{equation}
Using the fact that $P$ is a homogeneous polynomial of degree $n$, we apply the scaling property $P(c z_1, \dots, c z_m) = c^n P(z_1, \dots, z_m)$ to pull the operator $\hat{b}_1^\dagger$ out of the polynomial:
\begin{equation}
    d_n = \bra{n, \vec{0}} (\hat{b}_1^\dagger)^n P(U_{11}, U_{12}, \dots, U_{1m}) \ket{\vec{0}}.
\end{equation}
Finally, noting that $(\hat{b}_1^\dagger)^n \ket{\vec{0}} = \sqrt{n!} \ket{n, \vec{0}}$, the expression simplifies to:
\begin{equation}
    d_n = \sqrt{n!} P(U_{11}, U_{12}, \dots, U_{1m}) \braket{n, \vec{0}|n, \vec{0}} = \sqrt{n!} P(U_{11}, \dots, U_{1m}).
    \label{equ:amplitude_dn}
\end{equation}

    Now, as a non-zero polynomial $P$ cannot vanish on all normalized vectors, there exists a vector $\vec{u}$ with $\|\vec{u}\|=1$ such that $P(\vec{u}) \neq 0$. By setting this $\vec{u}$ as the first row of $U$ (i.e.\ $U_{1j} = u_j$) and completing the matrix to be unitary (e.g.\ using the Gram--Schmidt process), we ensure $d_n \neq 0$.

    Finally, recall that the passive linear unitary $\hat{U}$ conserves the total boson number. Only the $n$-boson component $\ket{\bm\psi^{(n)}}$ can contribute to the $\ket{n}$ Fock state amplitude at the output. Thus, the reduced state $\ket{\psi_\mathrm{red}}= (\mathbb{I} \otimes \bra{0}^{\otimes m-1}) \hat{U} \ket{\bm\psi}$ has a non-zero $\ket{n}$ amplitude $d_n$.
\end{proof}

We note that to achieve the tightest bound possible with this procedure, the choice of unitary $U$ can be optimized for the specific state considered in order to maximize the bunching probability.

\subsection{Super-polynomial lower bound for \texorpdfstring{$\ket{1}^{\otimes n}$}{}}

\cref{thm:multimode_core_state_lowerbound_csr2} implies that the approximate coherent state rank of the $n$-mode Fock state $\ket1^{\otimes n}$ is lower bounded by $n+1$, i.e.\ a linear lower bound. Here we considerably strengthen this bound:

\begin{theorem}[Super-polynomial lower bound on the approximate coherent state rank of Fock states]\label{thm:superpoly}
Let $\ket{1}^{\otimes n}$ be the tensor product of $n$ Fock states $\ket{1}$. The approximate coherent state rank $\kappa(\ket{1}^{\otimes n})$ is $n^{\Omega(\log n)}$.
\end{theorem}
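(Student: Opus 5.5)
Suppose $\kappa(\ket1^{\otimes n})\le r$. Then for every $\varepsilon>0$ there is an $n$-mode superposition $\ket{\bm\phi_\varepsilon}=\sum_{k=1}^r c_k\ket{\bm\alpha_k}$ with $F(\ket1^{\otimes n},\bm\phi_\varepsilon)\ge1-\varepsilon$. After fixing a global phase, this gives $\|\ket1^{\otimes n}-\ket{\bm\phi_\varepsilon}\|_2\le\sqrt{2\varepsilon}$.

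\textbf{Step 1: a multi-linear formula from the decomposition.} For any $n\times n$ unitary $U$, \cref{eq:linktoper} gives $\mathrm{Per}(U)=\bra1^{\otimes n}\hat U\ket1^{\otimes n}$. Replace the ket by $\ket{\bm\phi_\varepsilon}$. Since $\hat U$ is unitary, the error is at most $\|\ket1^{\otimes n}-\ket{\bm\phi_\varepsilon}\|_2\le\sqrt{2\varepsilon}$. Using $\hat U\ket{\bm\alpha}=\ket{U\bm\alpha}$ and $\braket{1|\beta}=e^{-|\beta|^2/2}\beta$, together with $\|U\bm\alpha\|=\|\bm\alpha\|$, we get
\begin{equation*}
F_\varepsilon(U)\coloneqq\bra1^{\otimes n}\hat U\ket{\bm\phi_\varepsilon}=\sum_{k=1}^r c_k e^{-\|\bm\alpha_k\|^2/2}\prod_{i=1}^n\Big(\sum_{j=1}^n U_{ij}(\alpha_k)_j\Big).
\end{equation*}
This is a sum of $r$ products of $n$ linear forms, one per row of $U$. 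It is therefore a set-multilinear depth-3 formula of size $O(rn^2)$ in the variables $x_{ij}=U_{ij}$, and it satisfies $|F_\varepsilon(U)-\mathrm{Per}(U)|\le\sqrt{2\varepsilon}$ on the unitary group.

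\textbf{Step 2: from uniform convergence on $U(n)$ to coefficient convergence.} Both $F_\varepsilon$ and $\mathrm{Per}$ lie in the finite-dimensional space $W$ of polynomials that are homogeneous of degree $n$ and linear in each row. I would show that the sup-norm over $U(n)$ is a norm on $W$. The key fact is that a holomorphic polynomial vanishing on $U(n)$ vanishes identically, because $U(n)$ is Zariski-dense in $GL_n(\mathbb C)$: its complexification is $GL_n$, and $\exp(i\mathfrak u)$ exhibits $U(n)$ as a totally real submanifold of maximal dimension. Since all norms on $W$ are equivalent, $F_\varepsilon\to\mathrm{Per}$ coefficientwise as $\varepsilon\to0$. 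Hence $\mathrm{Per}_n$ lies in the border of multi-linear formulas of size $O(rn^2)$.

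\textbf{Step 3: border lower bound (main obstacle).} It remains to prove the border analogue of Raz's theorem, namely \cref{thm:bordercompperinformal}. Then $rn^2\ge n^{\Omega(\log n)}$, so $r=n^{\Omega(\log n)}$. I would follow Raz's method. Choose a random restriction/partition of the variables into two sets $Y,Z$ of equal size, with the remaining variables set to $0$ or $1$. The same random choice is applied to each formula. Raz shows two things. First, the partial-derivatives matrix of the restricted permanent has full rank $2^{m}$ with high probability. Second, for any fixed multi-linear formula of size $s\le n^{c\log n}$, the PDM rank is at most $2^{m-m^{\delta}}$ except with probability $o(1)$.

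The PDM depends linearly on the coefficients, and rank is lower semicontinuous. So Step 2 implies that $\mathrm{rank}\,M(F_\varepsilon)\ge\mathrm{rank}\,M(\mathrm{Per})$ for all $\varepsilon$ below some threshold, and this threshold depends on the random choice. The difficulty is that the formulas form an infinite family, so a union bound over the sequence is not available. My first approach is to notice that our formulas have a fixed structure: $r$ products of row linear forms with varying complex coefficients. Raz's rank bound is structural. It depends only on the formula's graph and on the partition, not on the field coefficients. So a single good event (probability $1-o(1)$) on the partition can control every formula with the same underlying graph simultaneously. There are only finitely many graphs of size $\le s$, so one can pass to a subsequence with a fixed graph. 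If this structural uniformity fails in some detail of Raz's argument, the fallback is a subsequence/Borel–Cantelli-type extraction as the overview suggests. Either way we reach a contradiction with full rank, which proves $\kappa(\ket1^{\otimes n})=n^{\Omega(\log n)}$.
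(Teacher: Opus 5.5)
Your proposal is correct and follows the paper's proof essentially step by step. Step 1 is \cref{lem:permanent_from_cs_superposition}, Step 2 is \cref{lem:unitary_convergence}, and Step 3, specifically your ``first approach'', is exactly the argument of \cref{thm:bordercompper}: fix the formula structure (your depth-3 formulas already have a fixed one, as the paper's footnote observes), use the fact that Raz's rank bound is structural to get one assignment good for the whole subsequence, and conclude by lower semicontinuity of rank, so the fallback is unnecessary. The differences are only in justification: in Step 2 you invoke Zariski density of $\mathcal U(n)$ in $GL_n(\mathbb C)$, where the paper argues elementarily via diagonal unitaries, induction on variables and the SVD, and your claim that the permanent's partial-derivatives matrix has full rank with high probability is weaker than what Raz's construction gives (every assignment in the support), though it still suffices.
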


\noindent Note that this result is stated as asymptotic, but is also valid non-asymptotically, for $n$ sufficiently large.

\smallskip

Hereafter, given a polynomial $F$ over $n^2$ variables and an $n\times n$ matrix $X=(x_{i,j})_{1\le i,j\le n}$, we use the notation $F(X)$ to denote $F(x_{11},\dots,x_{1n},\dots,x_{n1},\dots,x_{nn})$.

\smallskip

The proof of \cref{thm:superpoly} proceeds as follows:
\begin{enumerate}
    \item We first relate the approximate coherent state rank $r=\kappa(\ket{1}^{\otimes n})$ to the size of multi-linear formulas approximating the permanent (\cref{lem:permanent_from_cs_superposition}). In particular, we show that there exists a sequence of formulas $F_\varepsilon$ over $n^2$ variables such that, for any $n\times n$ unitary matrix $U$, $|\mathrm{Per}(U)-F_\varepsilon(U)|\le\varepsilon$. Moreover, $F_\varepsilon$ is a multi-linear arithmetic formula (that is, the power of every input variable is at most one in each of its monomials) of size at most $rn^2$.
    \item Next we show that, for polynomials, such a uniform bound over the set of $n\times n$ unitary matrices implies convergence over the space of polynomials with degree at most $n^2$ (\cref{lem:unitary_convergence}). In particular, $F_\varepsilon$ has size at most $rn^2$ and converges to the permanent polynomial when $\varepsilon$ goes to $0$.
    \item Finally, we generalize the proof technique introduced by Raz in \cite{raz2009multi} for lower bounding the size of multi-linear formulas for the permanent to the approximate case (\cref{thm:bordercompper}). Setting $m=\lceil n^{1/3}\rceil$, we construct a partial-derivatives matrix (PDM) $M_A(F)$ of size $2^m$ from a given formula $F$ on $n^2$ variables by applying a random selection $A$ of its coefficients. On the one hand, the rank of $M_A(\mathrm{Per})$ is $2^m$ for all $A$, and the sequence $M_A(F_\varepsilon)$ converges to $M_A(\mathrm{Per})$, so the rank of $M_A(F_\varepsilon)$ is $2^m$ for $\varepsilon$ small enough. On the other hand, assuming $r=n^{O(\log n)}$, we show that there exists a subsequence along which the rank of $M_A(F_\varepsilon)$ is upper bounded by $rn^22^{m-k/2}$, where $k=n^{1/32}$, with probability $1-o(1)$. For $n$ sufficiently large, this leads to a contradiction.
\end{enumerate}

As outlined above, \cref{thm:superpoly} combines three intermediate results (\cref{lem:permanent_from_cs_superposition,lem:unitary_convergence,thm:bordercompper}). We first introduce and prove these results.

\begin{lemma}[Approximate multi-linear formulas for the permanent from coherent state decompositions]
\label{lem:permanent_from_cs_superposition}
Let $r=\kappa(\ket1^{\otimes n})$. For all $\varepsilon>0$, there exists a multi-linear arithmetic formula $F_\varepsilon$ of size at most $rn^2$ such that $|\mathrm{Per}(U)-F_\varepsilon(U)|\le\varepsilon$, for all unitary matrices $U$.
\end{lemma}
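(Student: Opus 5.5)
Given $\varepsilon>0$, I will take an approximate coherent state decomposition and turn it into a formula by computing the amplitude $\bra1^{\otimes n}\hat U\ket{\bm\phi}$ explicitly. Set $\delta=\varepsilon^2/2$. By \cref{def:approx-coherent-rank} there is $\ket{\bm\phi}=\sum_{k=1}^r c_k\ket{\bm\alpha_k}$ with $F(\ket1^{\otimes n},\bm\phi)\ge1-\delta$. Multiplying by a global phase does not change the fidelity, so I may assume $\braket{1^{\otimes n}|\bm\phi}$ is real and nonnegative. Then \cref{eq:relnorm} gives $\|\ket1^{\otimes n}-\ket{\bm\phi}\|_2\le\sqrt{2\delta}=\varepsilon$. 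The only subtlety is that $\ket{\bm\phi}$ is normalized, and I will keep it that way.

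For any unitary $U$, \cref{eq:linktoper} gives $\mathrm{Per}(U)=\bra1^{\otimes n}\hat U\ket1^{\otimes n}$. Cauchy--Schwarz and unitarity of $\hat U$ then give
\begin{equation*}
|\mathrm{Per}(U)-\bra1^{\otimes n}\hat U\ket{\bm\phi}|\le\|\hat U^\dagger\ket1^{\otimes n}\|\,\|\ket1^{\otimes n}-\ket{\bm\phi}\|_2\le\varepsilon .
\end{equation*}
So it suffices to show that $U\mapsto\bra1^{\otimes n}\hat U\ket{\bm\phi}$ is a multi-linear formula of size at most $rn^2$ in the entries of $U$.

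I will compute it term by term. As recalled in \cref{sec:linearoptics}, $\hat U\ket{\bm\alpha_k}=\ket{U\bm\alpha_k}$. We also have $\braket{1|\beta}=e^{-|\beta|^2/2}\beta$, and unitarity gives $\|U\bm\alpha_k\|=\|\bm\alpha_k\|$. Hence
\begin{equation*}
\bra1^{\otimes n}\hat U\ket{\bm\phi}=\sum_{k=1}^r c_k e^{-\|\bm\alpha_k\|^2/2}\prod_{i=1}^n\Big(\sum_{j=1}^n U_{ij}(\bm\alpha_k)_j\Big).
\end{equation*}
The prefactor $c_ke^{-\|\bm\alpha_k\|^2/2}$ is a constant independent of $U$, and this is exactly where unitarity is used. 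For each $k$, the product over rows $i$ involves disjoint sets of variables $\{U_{ij}\}_j$, so each term is multi-linear, and the inner linear forms together use $n^2$ leaf occurrences. Each of the $r$ summands is therefore a multi-linear formula with $n^2$ variable leaves plus constants, and the sum is a multi-linear formula of size $O(rn^2)$.

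The main obstacle is bookkeeping of the size measure. If size counts variable leaves, as in Raz's convention, the bound is exactly $rn^2$. Otherwise I would absorb the constants $c_ke^{-\|\bm\alpha_k\|^2/2}$ and $(\bm\alpha_k)_j$ into the edge or leaf labels, since scalar multiplications are free in that model. This gives size at most $rn^2$ as claimed. I would also check that the approximating formula need only agree with $\mathrm{Per}$ on unitaries, which is all the statement asks.
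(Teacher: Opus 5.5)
Your proposal is correct and follows the paper's proof essentially step for step. You choose $\delta=\varepsilon^2/2$, fix the global phase, combine $\mathrm{Per}(U)=\bra1^{\otimes n}\hat U\ket1^{\otimes n}$ with Cauchy--Schwarz and $\|\ket1^{\otimes n}-\ket{\bm\phi}\|_2\le\sqrt{2\delta}$, and expand the amplitude as $\sum_k c_k e^{-\|\bm\alpha_k\|^2/2}\prod_i(\sum_j U_{ij}(\bm\alpha_k)_j)$, using unitarity to make the prefactor constant. Your caveat on the size measure is fair: under a node-count convention the formula has $O(rn^2)$ nodes rather than exactly $rn^2$, a point the paper also glosses over. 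This constant factor is irrelevant for the $n^{\Omega(\log n)}$ bound the lemma feeds into.
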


\begin{proof}
We have $\kappa(\ket1^{\otimes n})=r$ so for all $\delta>0$, $\kappa_\delta(\ket1^{\otimes n})\le r$. Let $\ket{\bm\phi_\delta} = \sum_{j=1}^r c_j(\delta)\ket{\bm\alpha_j(\delta)}$ be a superposition of $r$ coherent states such that $F(\bm\phi_\delta,\ket1^{\otimes n})\ge1-\delta$. We write $\bm\alpha_j(\delta)=(\alpha_{j1}(\delta),\dots,\alpha_{jn}(\delta))$.

Following \cite[Section 5.1]{Chabaud_2022_permanent}, for any $n\times n$ unitary matrix $U$ with associated $n$-mode passive linear unitary $\hat U$, we have:
\begin{align}
     \bra1^{\otimes n}\hat U\ket{\bm\phi_\delta}&= \sum_{j=1}^r c_j(\delta) \bra1^{\otimes n}\ket{U \bm{\alpha}_j(\delta)} \\
    &= \sum_{j=1}^r c_j(\delta) e^{-\frac{1}{2}\|U\bm{\alpha}_j(\delta)\|^2} \prod_{i=1}^n (U \bm{\alpha}_j(\delta))_i \\
    &= \sum_{j=1}^r c_j(\delta) e^{-\frac{1}{2}\|\bm{\alpha}_j(\delta)\|^2} \prod_{i=1}^n \left( \sum_{k=1}^n  \alpha_{jk}(\delta)\,u_{ik} \right),
\end{align}
where we used 
$\hat{U}\ket{\bm\alpha} = \ket{U\bm\alpha}$ in the first line, the inner product $\bra{1}\alpha\rangle = \alpha e^{-|\alpha|^2/2}$ in the second line, and
the fact that $\|U\bm{\alpha}\|^2 = \|\bm{\alpha}\|^2$ for unitary $U$ in the third line. Setting $\gamma_j(\delta)\coloneqq c_j(\delta) e^{-\frac{1}{2}\|\bm{\alpha}_j(\delta)\|^2}$, we obtain
\begin{equation}\label{eq:innermultilin}
    \bra1^{\otimes n}\hat U\ket{\bm\phi_\delta}= \sum_{j=1}^r \gamma_j(\delta) \prod_{i=1}^n \left( \sum_{k=1}^n \alpha_{jk}(\delta) \,u_{ik} \right).
\end{equation}

Up to adding a global phase, we may assume that $\bra1^{\otimes n}\ket{\bm\phi_\delta}$ is real and non-negative. With \cref{eq:linktoper} we thus obtain
\begin{equation}\label{eq:bounderrorper}
    \begin{aligned}
        |\mathrm{Per}(U)-\bra1^{\otimes n}\hat U\ket{\bm\phi_\delta}|&=|\bra1^{\otimes n}\hat U\ket1^{\otimes n}-\bra1^{\otimes n}\hat U\ket{\bm\phi_\delta}|\\
        &\le\|\ket1^{\otimes n}-\ket{\bm\phi_\delta}\|_2\\
        &\le\sqrt{2(1-F(\bm\phi_\delta,\ket1^{\otimes n}))}\\
        &\le\sqrt{2\delta}.
    \end{aligned}
\end{equation}

For any $n\times n$ matrix $X=(x_{ij})_{1\le i,j\le n}$ we now set 
\begin{equation}\label{eq:defFeps}
    F_\varepsilon(X)\coloneqq\sum_{j=1}^r\gamma_j(\varepsilon^2/2) \prod_{i=1}^n \left( \sum_{k=1}^n\alpha_{jk}(\varepsilon^2/2)\,x_{ik} \right).
\end{equation}
Since each product term involves only one variable from each row of $X$, the formula is multi-linear, and has size at most $rn^2$. Moreover, combining \cref{eq:innermultilin,eq:bounderrorper},
\begin{equation}
    |\mathrm{Per}(U)-F_\varepsilon(U)|\le\varepsilon,
\end{equation}
for all $n\times n$ unitary matrices $U$ and all $\varepsilon>0$.
\end{proof}

\begin{lemma}[Global convergence of multi-linear polynomials from unitary convergence]
\label{lem:unitary_convergence}
Let $\{P_m(X)\}$ be a sequence of complex multi-linear polynomials in the entries of an $n \times n$ matrix $X$ and let $P$ be a polynomial in $n^2$ variables of degree $d$. If $P_m(U)$ converges to $P(U)$ uniformly for all unitary matrices $U \in \mathcal U(n)$, then $P_m$ converges to $P$ over the space of complex polynomials in $n^2$ variables of degree at most $\max(n^2,d)$.
\end{lemma}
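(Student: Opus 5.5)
The plan is to reduce the statement to the equivalence of norms on a finite-dimensional vector space. Let $D\coloneqq\max(n^2,d)$ and let $V_D$ be the complex vector space of polynomials in the $n^2$ entries of $X$ of total degree at most $D$. Each multi-linear polynomial $P_m$ has degree at most $n^2$, since each of the $n^2$ variables appears with power at most one. Hence $P_m\in V_D$ for all $m$, and also $P\in V_D$, so $Q_m\coloneqq P_m-P\in V_D$. The hypothesis says that $\sup_{U\in\mathcal U(n)}|Q_m(U)|\to0$. Convergence ``over the space of polynomials'' is understood coefficient-by-coefficient, which on the finite-dimensional space $V_D$ is equivalent to convergence in any norm.

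I would therefore introduce the functional $\|Q\|_{\mathcal U}\coloneqq\sup_{U\in\mathcal U(n)}|Q(U)|$ on $V_D$. It is finite because $\mathcal U(n)$ is compact. It is clearly a seminorm, being absolutely homogeneous and satisfying the triangle inequality. The key step is to show that it is definite: if $Q\in V_D$ vanishes on every unitary matrix, then $Q=0$ as a polynomial. Once this is established, $\|\cdot\|_{\mathcal U}$ is a norm on the finite-dimensional space $V_D$. It is then equivalent to the norm $\|Q\|_{\mathrm{coef}}\coloneqq\max|\text{coefficients of }Q|$, so $\|Q_m\|_{\mathcal U}\to0$ gives $\|Q_m\|_{\mathrm{coef}}\to0$, which is the claim.

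The main obstacle is definiteness, i.e.\ that $\mathcal U(n)$ is a uniqueness set for holomorphic polynomials (it is Zariski dense in $\mathrm{GL}_n(\mathbb C)$). I would argue via the exponential map. Suppose $Q$ vanishes on $\mathcal U(n)$, and consider the entire function $g(A)\coloneqq Q(e^{A})$ on $\mathfrak{gl}_n(\mathbb C)\cong\mathbb C^{n^2}$. For skew-Hermitian $A$, $e^{A}$ is unitary, so $g$ vanishes on the real subspace $\mathfrak u(n)$. Since $\mathfrak u(n)\oplus i\,\mathfrak u(n)=\mathfrak{gl}_n(\mathbb C)$, we can choose a real basis of $\mathfrak u(n)$; this is also a complex basis of $\mathfrak{gl}_n(\mathbb C)$. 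In these coordinates $g$ is an entire function of $n^2$ complex variables vanishing on $\mathbb R^{n^2}$. All its partial derivatives at $0$ can be computed along real directions, so they vanish, and hence $g\equiv0$. As $\exp$ is a local diffeomorphism near $0$, $Q$ then vanishes on an open neighbourhood of the identity in $\mathbb C^{n^2}$. A polynomial vanishing on a nonempty open set is identically zero.

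Combining this with the norm-equivalence argument concludes the proof. Note that the argument does not use multi-linearity beyond the degree bound $\deg P_m\le n^2$, which is exactly what places all $P_m$ in the same finite-dimensional space $V_D$. Without such a uniform degree bound, uniform convergence on $\mathcal U(n)$ would not control coefficients: for instance $x_{11}^k\bar{}$-type cancellations such as $\det(U)\overline{\det(U)}=1$ are not available for holomorphic polynomials, but high-degree polynomials can still be small on $\mathcal U(n)$ without small coefficients.
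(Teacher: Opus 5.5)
Your proof is correct, and its outer skeleton matches the paper's. All $P_m-P$ lie in the finite-dimensional space of polynomials of degree at most $\max(n^2,d)$. The map $Q\mapsto\sup_{U\in\mathcal U(n)}|Q(U)|$ is a norm on that space, and equivalence of norms converts uniform convergence on $\mathcal U(n)$ into coefficient-wise convergence.

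The genuine difference is in how you prove definiteness. The paper stays elementary:
\begin{itemize}
\item by induction on the number of variables, using that a nonzero univariate polynomial cannot vanish on the whole unit circle, a polynomial vanishing on the diagonal unitaries vanishes on all diagonal matrices;
\item for an arbitrary $X=U\Sigma V^\dagger$, it applies this to the polynomial $Z\mapsto Q(UZV^\dagger)$ and evaluates at $Z=\Sigma$.
\end{itemize}
You instead pull $Q$ back along $\exp$. You use that a real basis of the skew-Hermitian matrices is a complex basis of all matrices, so an entire function vanishing on them has zero Taylor series at $0$. You then conclude by local invertibility of $\exp$ at $0$.

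The paper's route needs only root counting and the SVD. Yours invokes the several-variable identity principle and the inverse function theorem, but it is more conceptual. It only uses that the Lie algebra of $\mathcal U(n)$ spans all $n\times n$ matrices over $\mathbb C$, so it applies unchanged to any matrix group with that property. You also state explicitly that finiteness of the supremum comes from compactness, which the paper leaves implicit.

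Your closing remark is not needed for the proof and is overstated. Even without a uniform degree bound, uniform convergence on $\mathcal U(n)$ forces coefficient-by-coefficient convergence:
\begin{itemize}
\item homogeneous components of different degrees are orthogonal in $L^2$ of the Haar measure, by invariance under $U\mapsto e^{i\theta}U$;
\item so each component has $L^2$ norm at most the supremum;
\item on each fixed-degree space this $L^2$ norm is a norm, by your own definiteness argument.
\end{itemize}
What is lost is only a degree-uniform constant. For example, with $n=2$, $\det(X)^k/\binom{k}{\lfloor k/2\rfloor}$ has a coefficient of modulus $1$ but supremum $1/\binom{k}{\lfloor k/2\rfloor}$ on $\mathcal U(2)$.
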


\begin{proof}
Let us first establish that if a complex polynomial $Q$ in $n^2$ variables vanishes identically on the unitary group $\mathcal U(n)$, it must be the zero polynomial. 

Consider the case of a univariate polynomial $q(z)$. A non-zero polynomial of degree $d$ has at most $d$ roots. Because the unit circle contains infinitely many points, $q(z)$ vanishing on the unit circle implies $q=0$. 

We can extend this result by induction on the number of variables $k$ of a polynomial $\bm z\mapsto Q(z_1, \dots, z_k)$ of degree $d$ which vanishes whenever all variables lie on the unit circle, i.e.\ $|z_i| = 1$ for all $1 \le i \le k$. Taking $w_2, \dots, w_k\in\mathbb C$ all on the unit circle, $z\mapsto Q(z,w_2, \dots, w_k)=\sum_{j=0}^dC_j(w_2, \dots, w_k)z^j$ is a univariate polynomial in $z$ which vanishes on the unit circle, and is therefore zero for all $z\in \mathbb{C}$. In particular, for all $0 \le j \le d$, the polynomial $C_j$ vanishes when $w_2, \dots, w_k$ are all on the unit circle. By the induction hypothesis, $C_j$ is the zero polynomial over $k-1$ variables, for all $0 \le j \le d$, and thus $Q$ must vanish for all $(z_1, \dots, z_k)\in\mathbb{C}^k$.

Now, consider the restriction of a polynomial $Q(X)$ in $n^2$ variables to diagonal matrices $Z = \mathrm{Diag}(z_1, \dots, z_n)$. If we require $|z_i| = 1$ for all $1 \le i \le n$, then $Z \in \mathcal U(n)$ and by hypothesis, $Q(Z)=Q(z_1,\dots,z_n) = 0$ (with a slight abuse of notation). By our previous observation, $Q$ must therefore vanish on all complex diagonal matrices. 

To generalize this to an arbitrary complex $n\times n$ matrix $X$, we use the singular value decomposition, $X = U \Sigma V^\dag$, where $U, V \in \mathcal U(n)$ and $\Sigma$ is a diagonal matrix. Define a new polynomial $R(Z) = Q(U Z V^\dag)$. For any unitary matrix $Z$, the product $U Z V^\dag$ is also unitary, meaning $R(Z) = 0$ for all diagonal $Z \in \mathcal U(n)$. Hence, $R$ must vanish on all diagonal matrices. Evaluating $R$ at the diagonal matrix $\Sigma$ yields $R(\Sigma) = Q(U \Sigma V^\dag) = Q(X) = 0$.

\smallskip

With this algebraic identity established, we let $Q_m(X) = P_m(X) - P(X)$, which converges uniformly to $0$ on $\mathcal U(n)$. Because the sequence $P_m(X)$ consists of multi-linear polynomials in the $n^2$ entries of $X$, the total degree of each $P_m$ is bounded by $n^2$, while the polynomial $P$ has degree $d$. Every $Q_m$ thus resides within the finite-dimensional vector space $V$ of complex polynomials in $n^2$ variables with a total degree bounded by $\max(n^2, d)$.
On this finite-dimensional space $V$, we define 
\begin{equation}
    \|\cdot\|_{\mathcal U}:Q\mapsto\|Q\|_{\mathcal U} = \sup_{U \in \mathcal U(n)} |Q(U)|.
\end{equation}
This function satisfies the standard properties of a norm: non-negativity, absolute homogeneity, and the triangle inequality are immediate, while definiteness ($\|Q\|_{\mathcal U} = 0 \implies Q = 0$) follows precisely from our earlier proof that any complex polynomial vanishing on $\mathcal U(n)$ is the zero polynomial. Hence, the sequence $P_m$ converges to $P$ on $V$.
\end{proof}

\begin{theorem}[Border complexity of multi-linear formulas for the permanent]\label{thm:bordercompper}
Any sequence of multi-linear arithmetic formulas $\{F_\varepsilon\}$ that converges to the $n \times n$ permanent must contain formulas of size $n^{\Omega(\log n)}$. As a result, the border complexity of multi-linear formulas for the permanent is super-polynomial.
\end{theorem}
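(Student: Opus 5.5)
We follow the structure of Raz's proof \cite{raz2009multi} and make every step robust to approximation. Convergence of $F_\varepsilon$ to $\mathrm{Per}$ is understood coefficient-wise in the finite-dimensional space of polynomials of degree at most $n^2$ (as provided by \cref{lem:unitary_convergence}). Assume for contradiction that there is a sequence $F_\varepsilon\to\mathrm{Per}$ of multi-linear formulas, each of size at most $s=n^{O(\log n)}$ with a small enough constant in the exponent.

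First, we recall Raz's random restriction. Set $m=\lceil n^{1/3}\rceil$ and choose a random map $A$ that sends the $n^2$ variables to $2m$ variables $y_1,\dots,y_m,z_1,\dots,z_m$ or to field constants ($0$/$1$). Raz's construction selects a random $2m\times 2m$ minor, sets the complementary entries to a fixed permutation pattern, and sends the minor entries to the $y,z$ variables. The image $F^A$ of a multi-linear formula is again multi-linear in $y,z$. We then form the partial-derivatives matrix $M_A(F)$, a $2^m\times 2^m$ matrix whose $(S,T)$ entry is the coefficient of $\prod_{i\in S}y_i\prod_{j\in T}z_j$ in $F^A$.

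Raz's argument gives two facts.
\begin{itemize}
\item[(a)] For every $A$ in the support of the distribution, $M_A(\mathrm{Per})$ has full rank $2^m$.
\item[(b)] For any fixed multi-linear formula of size $s$, $\Pr_A[\operatorname{rank} M_A(F)>s\,2^{m-k/2}]=o(1)$ with $k=n^{1/32}$, and the $o(1)$ bound is uniform over all formulas of size at most $s$.
\end{itemize}
We take both as given for exact formulas.

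Second, we prove that the map $F\mapsto M_A(F)$ is linear and continuous on the finite-dimensional space of multi-linear polynomials. Each entry of $M_A(F)$ is a fixed linear combination of coefficients of $F$, because $A$ only substitutes variables and constants. Hence $F_\varepsilon\to\mathrm{Per}$ implies $M_A(F_\varepsilon)\to M_A(\mathrm{Per})$ for every fixed $A$. The set of full-rank matrices is open, equivalently rank is lower semicontinuous. Since the support of $A$ is finite, there is therefore an $\varepsilon_0$ such that for all $\varepsilon<\varepsilon_0$ and all $A$ in the support, $M_A(F_\varepsilon)$ has rank $2^m$. Because the support of $A$ is finite, we avoid the subsequence extraction mentioned in the overview: a single threshold $\varepsilon_0$ works for all $A$ simultaneously.

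Third, we take one index $\varepsilon<\varepsilon_0$. By (b), with probability $1-o(1)>0$ over $A$,
\begin{equation}
\operatorname{rank}M_A(F_\varepsilon)\le s\,2^{m-k/2}.
\end{equation}
This is less than $2^m$ as soon as $s<2^{k/2}=2^{n^{1/32}/2}$, which holds in particular for $s=n^{O(\log n)}$ when $n$ is large. Since the event has positive probability, some $A$ gives rank below $2^m$, contradicting the second step. Hence every $F_\varepsilon$ with $\varepsilon<\varepsilon_0$ has size $n^{\Omega(\log n)}$. Unwinding Raz's quantitative bound, which states that size $n^{o(\log n)}$ forces the rank to be below $2^{m-k/2}$ times the size with high probability, gives the stated $n^{\Omega(\log n)}$ rather than merely super-polynomial.

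The main obstacle is checking that Raz's fact (b) really holds for \emph{every} formula of bounded size, independently of its coefficients. In particular, the rank bound must come from the formula's tree structure: it rests on the balanced/unbalanced-node analysis of the log-product decomposition. That analysis depends only on the set of variables read at each gate, not on the field constants, so it should go through unchanged. We would also re-verify that the distribution of $A$ has finite support. If it did not, we would instead use a subsequence argument: for each $A$ the bad event has probability $o(1)$, and we would pass to a subsequence along which the convergence threshold is uniform on a set of $A$ of probability $1-o(1)$, for example by an Egorov-type argument on the finitely parametrized choices.
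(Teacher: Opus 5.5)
Your proof is correct, but it gets from the exact bound to the approximate one by a different route than the paper.

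\textbf{The paper's route.} It first converts to syntactic formulas (Raz's Proposition 2.1). A pigeonhole argument then extracts a subsequence whose formulas all share one tree structure $T^*$. The paper argues that Raz's high-probability event (Lemmas 4.1 and 5.1) depends only on that structure. Hence a single assignment $A^*$ gives $\mathrm{rank}\,M_{A^*}(F_\varepsilon)\le n^{c\log n}2^{m-k/2}$ for every $\varepsilon$ at once. Lower semicontinuity of rank as $\varepsilon\to0$ then transfers this bound to $M_{A^*}(\mathrm{Per})$, contradicting full rank.

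\textbf{Your route.} You reverse the order. Since $A$ maps the $n^2$ variables into the finite set $\{0,1\}\cup\{y_1,\dots,y_m\}\cup\{z_1,\dots,z_m\}$, there are at most $(2m+2)^{n^2}$ possible assignments. So the open condition ``$M_A(F)$ has rank $2^m$ for every $A$ in the support'' passes from $\mathrm{Per}$ to every $F_\varepsilon$ with $\varepsilon<\varepsilon_0$. Raz's exact lower bound then applies as a black box to each such single formula.

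\textbf{What each route buys.} Yours is more modular. It needs neither the structure extraction nor any claim about what Raz's good event depends on. It also directly gives the ``all formulas beyond a certain index'' form of the statement. The paper's route, in exchange, does not use finiteness of the support of $A$. For the same reason, the obstacle you flag at the end does not arise for you, since you only ever apply Raz's bound to one fixed formula. The Egorov-type fallback is also unnecessary, because finiteness of the support is immediate.

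\textbf{Two small corrections.}
\begin{itemize}
\item Raz's bound holds for formulas of size at most $n^{c\log n}$ for a fixed small constant $c>0$, after the size-preserving conversion to syntactic form. That is what yields $n^{\Omega(\log n)}$, not a statement about size $n^{o(\log n)}$.
\item As you describe the restriction, all entries of a $2m\times2m$ minor are sent to only $2m$ variables, so $F^A$ would in general not be multilinear. The restriction must send distinct entries to distinct $y_i$ or $z_j$, as in Raz's construction. Your argument only uses linearity of $F\mapsto M_A(F)$ and finiteness of the codomain, so this does not affect it.
\end{itemize}
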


\begin{proof} 
Assume, for the sake of contradiction, that there exists a sequence of multi-linear formulas $F_\varepsilon$ converging to the permanent coefficient-by-coefficient, and such that for all $\varepsilon$, the formula size is bounded as $|F_\varepsilon| \le n^{c \log n}$, for some constant $c$. By \cite[Proposition 2.1]{raz2009multi}, we can assume that the formulas are syntactic (products in the formulas only multiply variables from disjoint sets) without loss of generality.

A formula of size $S$ over $n^2$ variables is a binary tree with at most $S$ nodes, where internal nodes are labeled as $+$ or $\times$, and leaves are labeled with variables from $X = \{x_{i,j}\}$ or field constants. Because $S$ and $n$ are finite, the number of possible underlying tree structures and variable-to-leaf assignments is finite. By the pigeonhole principle, at least one specific (syntactic) structure, denoted $T^*$, must occur infinitely many times. We can therefore extract a subsequence (which we still denote as $\{F_\varepsilon\}$) such that every single formula $F_\varepsilon$ in the sequence possesses the same structure $T^*$, and the only variations as $\varepsilon \to 0$ are the field constants at the leaves\footnote{We include this extraction step for the generality of \cref{thm:bordercompper}, but it is not strictly necessary for the proof of \cref{thm:superpoly} since the formulas in \cref{eq:defFeps} already have the same structure for all $\varepsilon>0$.}.

Following \cite{raz2009multi}, we define a random assignment $A$ that substitutes the variables $X$ with values in $\{0, 1\} \cup \{y_1, \dots, y_m\} \cup \{z_1, \dots, z_m\}$ where $m = \lceil n^{1/3} \rceil$. Then, given a formula $F$, we denote by $M_A(F)$ the $2^m\times2^m$ partial-derivatives matrix of the (output of the) formula $F$ under the substitution $A$ (see \cite[Section 3]{raz2009multi} for a formal definition).

By \cite[Lemma 4.1 and Lemma 5.1]{raz2009multi}, for a syntactic formula $F$ of size bounded as $|F|\le n^{c \log n}$, the probability that $\mathrm{rank}(M_A(F))\le|F|\cdot2^{m-k/2}$
is $1-o(1)$, where $m = \lceil n^{1/3} \rceil$ and $k=n^{1/32}$. Importantly, this result depends exclusively on the structure of the formula $F$. Since the structure $T^*$ is fixed for our entire subsequence $\{F_\varepsilon\}$, this probability is independent of $\varepsilon$. Hence, assuming $|F_\varepsilon| \le n^{c \log n}$ we obtain
\begin{equation}
    \operatorname{rank}(M_{A}(F_\varepsilon)) \le |F_\varepsilon| \cdot 2^{m-k/2} \le n^{c \log n} \cdot 2^{m-k/2},
\end{equation}
for all $\varepsilon>0$, with probability $1-o(1)$.
We have $1-o(1)>0$ for large $n$, so for large $n$ there must exist at least one specific assignment $A^*$ such that $\operatorname{rank}(M_{A^*}(F_\varepsilon)) \le n^{c \log n} \cdot 2^{m-k/2}$ for all $\varepsilon>0$.
As $\varepsilon \to 0$, because the coefficients of $F_\varepsilon$ converge to those of the permanent, the entries of the partial-derivative matrix $M_{A^*}(F_\varepsilon)$ converge to the entries of $M_{A^*}(\mathrm{Per})$.
Since the rank of a matrix is lower semi-continuous we obtain:
\begin{equation}
    \operatorname{rank}(M_{A^*}(\mathrm{Per})) \le \liminf_{\varepsilon \to 0} \operatorname{rank}(M_{A^*}(F_\varepsilon)) \le n^{c \log n} \cdot 2^{m-k/2}.
\end{equation}
On the other hand, the partial-derivative matrix of the permanent has full rank \cite{raz2009multi}, namely
$\operatorname{rank}(M_{A^*}(\mathrm{Per})) = 2^m$.
We therefore arrive at the inequality $2^m \le n^{c \log n} \cdot 2^{m-k/2}$, or equivalently
\begin{equation}
    2^{k/2}\le n^{c \log n},
\end{equation} 
where $k = n^{1/32}$. For $n$ large enough, this leads to the desired contradiction. Therefore, no such bounded-size sequence of formulas can exist for $n$ large enough.
\end{proof}

We note that, akin to the result of Raz \cite{raz2009multi}, \cref{thm:bordercompper} is also valid for the determinant with the same proof.

\smallskip

With these results in place, we are now in a position to complete the proof of \cref{thm:superpoly}.

\begin{proof}[Proof of \cref{thm:superpoly}]
By \cref{lem:permanent_from_cs_superposition}, for all $\varepsilon>0$, there exists a multi-linear arithmetic formula $F_\varepsilon$ of size at most $n^2\kappa(\ket{1}^{\otimes n})$ such that $|\mathrm{Per}(U)-F_\varepsilon(U)|\le\varepsilon$, for all unitary matrices $U$.

\smallskip

By \cref{lem:unitary_convergence}, $F_\varepsilon$ converges to the permanent on the space of complex polynomials in $n^2$ variables of degree at most $n^2$. In particular, $F_\varepsilon$ converges to the permanent coefficient-by-coefficient and has arithmetic formula size at most $n^2\kappa(\ket{1}^{\otimes n})$, so the border complexity of multi-linear formulas for the permanent is upper bounded by $n^2\kappa(\ket{1}^{\otimes n})$.

\smallskip

By \cref{thm:bordercompper}, this implies that $n^2\kappa(\ket{1}^{\otimes n})=n^{\Omega(\log n)}$, and thus $\kappa(\ket{1}^{\otimes n})=n^{\Omega(\log n)}$.

\end{proof}

\bibliographystyle{myalphaurl}
\bibliography{bib}

\appendix
\crefalias{section}{appendix}
\crefalias{subsection}{appendix}

\section{Additional background}
\label{appendix:theory}

Here we provide additional background on continuous-variable (bosonic) quantum information theory.

\smallskip

The Hilbert space associated with a single continuous-variable quantum system, also known as a \emph{bosonic mode}, admits several unitarily equivalent realizations.

\smallskip

Let us consider the \emph{Fock space}, which is a mathematical object defined by $(\mathcal{H}, \hat{a}, \hat{a}^\dagger, \ket0 )$, where $\mathcal{H}$ is the Hilbert space, $\hat{a}$ and $\hat{a}^\dagger$ are the annihilation and creation operators and $\ket0$ is the vacuum state.

An abstract representation of the Fock space, spanned by the Fock states $\{\ket{n}\}_{n= 0}^\infty$, is defined by the action of the creation and annihilation operators $\hat{a}^\dagger$ and $\hat{a}$, such that $\hat{a}^\dagger\hat{a}\ket{n}=n\ket{n}$ and $\ket{n}=\frac{\left(\hat{a}^\dagger\right)^n}{\sqrt{n!}}\ket{0}$. The state $\ket{0}$ is the vacuum (Fock) state; it corresponds to the lowest-energy state of a quantized mode \cite{Sakurai_Napolitano_2020}. 

A famous realization in physics is the \emph{Schrödinger representation}. In this representation, states are square-integrable wavefunctions $\psi \in L^2(\mathbb{R})$ and the creation and annihilation operators are defined as 
\begin{equation}
    \hat{a} \mapsto \frac{1}{\sqrt2}\left(x+\frac{d}{dx}\right), \qquad \hat{a}^\dagger \mapsto \frac{1}{\sqrt2}\left(x-\frac{d}{dx}\right).
\end{equation} 
In this picture, the Fock states take the form
\begin{equation}
\psi_n(x) = \langle x | n \rangle
=
\frac{1}{\pi^{1/4} \sqrt{2^n n!}}
\, H_n(x)\, e^{-x^2/2},
\end{equation}
where $H_n$ denotes a Hermite polynomial \cite{Sakurai_Napolitano_2020}. 

A third realization is the \emph{Segal--Bargmann representation}, in which pure states are represented by holomorphic functions of a complex variable $z$ equipped with a Gaussian measure. This representation was introduced by Bargmann and Segal as a unitary map from $L^2(\mathbb{R})$ onto a Hilbert space of entire functions $L^2(\mathbb{C}, \mu_G)$ along with a Gaussian measure $\mu_G$ \cite{Bargmann_1961, segal_1967}. In this picture, the ladder operators act as
\begin{equation}
a^\dagger \mapsto z,
\qquad
a \mapsto \partial_z ,
\end{equation}
and a state $\ket{\psi} = \sum_{n=0}^\infty \psi_n\ket{n}$ has the following Segal--Bargmann representation (or stellar function \cite{Chabaud_2020}):
\begin{equation}
    F_\psi^\star(z) = \sum_{n=0}^\infty \psi_n \frac{z^n}{\sqrt{n!}}.
\end{equation} 
The stellar function is a holomorphic function over the
complex plane, providing an analytic representation of the quantum state, unique up to a global phase.

\begin{table}[ht!]
\renewcommand{\arraystretch}{1.35}
\begin{tabular}{p{0.24\textwidth} p{0.34\textwidth} p{0.34\textwidth}}
\toprule
Concept
&
Fock representation
&
Segal--Bargmann representation
\\
\midrule

\textit{Hilbert space realization}
&
$\mathcal H \cong \ell^2(\mathbb N_0)$
&
$\mathcal H \cong L^2(\mathbb C,\mu_{\mathrm G})$
\\

&
$\{|n\rangle\}_{n\ge0}$ 
&
$\mu_{\mathrm G}(z)=\pi^{-1}e^{-|z|^2}\,d^2z$,
\\
\\

\textit{Operator realization}
&
$a|n\rangle=\sqrt n\,|n-1\rangle$

&
$a \mapsto \partial_z$
\\
&
$a^\dagger|n\rangle=\sqrt{n+1}\,|n+1\rangle$
&
$a^\dagger \mapsto z$
\\
\\

\textit{Fock states}
&

$\ket{n}$
&
$
\frac{z^n}{\sqrt{n!}}
$
\\
\\

\textit{Core states}
&

$\ket{C_{r^\star}}$
&
$
P_{r^\star}(z)
$
\\
\\
\textit{Coherent states}
&
$
|\alpha\rangle
=
e^{-|\alpha|^2/2}
\sum_{n=0}^\infty
\frac{\alpha^n}{\sqrt{n!}}\,|n\rangle
$
&
$e^{\alpha z+ \gamma}$

\\
\\
\textit{Squeezed states}
&
$
|\xi\rangle
=
\frac{1}{\mathcal{N}}\sum_{n=0}^\infty(\lambda')^n\frac{\sqrt{(2n)!}}{2^nn!}\ket{2n}
$
&
$e^{\lambda z^2+ \gamma}$

\\
\\
\textit{Gaussian states}
&

$|\psi_{\mathrm G}\rangle
=
U_{\mathrm G}\,|0\rangle$
&

$\exp\!\left(
 a z^2 + b z + c
\right)
$
\\
\bottomrule
\end{tabular}
\caption{
Correspondence between the Fock representation and Segal--Bargmann
representations of the bosonic Fock space.
The Segal--Bargmann space realizes the Fock space as a
subspace of holomorphic functions in \(L^2(\mathbb C,\mu_{\mathrm G})\), where the Gaussian vacuum weight
is absorbed into the measure. $P_{r^\star}$ is a complex polynomial of degree $r^\star$, where $r^\star$ is the highest Fock number of the core state $\ket{C_{r^\star}}$. $U_G$ is a Gaussian unitary, generated by a Hamiltonian that is a polynomial of degree 2 in $\hat{a}$ and $\hat{a}^\dagger$.
}
\label{tab:fock-bargmann-correspondence}
\end{table}

Using the Segal--Bargmann representation, it becomes natural to introduce Fock states, coherent states, and Gaussian states. The stellar function of the $n^\text{th}$ Fock state $\ket{n}$ simply becomes
\begin{equation}
    F_{\ket{n}}^\star(z) = \frac{z^n}{\sqrt{n!}},
    \label{equ:fock_stellar_fct}
\end{equation}
while core states (finite superpositions of Fock states) have polynomial stellar functions.

Coherent states $\ket{\alpha}$ have the representation
\begin{equation}
    F_{\ket{\alpha}}^\star(z) = e^{\alpha z+\gamma},
    \label{equ:coherent_stellar_fct} 
\end{equation} where $\gamma$ normalizes the state. These states represent the ``most classical'' quantum states \cite{glauber_1963,sudarshan_1963}.

On the other hand, squeezed states $\ket{\xi}$ have the following representation:
\begin{equation}
    F_{\ket{\xi}}^\star(z) = e^{\lambda z^2+\gamma},
\end{equation} where $\lambda$ is a function of $\xi$ and $\gamma$ is a normalization factor.
Finally, we can define more general Gaussian states $\ket{G}$ as the states of the form
\begin{equation}
    F_{\ket{G}}^\star(z) = e^{a z^2+bz+c}, 
    \label{equ:gaussian_stellar_fct}
\end{equation} with $|a|<\frac{1}{2}$ and $c$ normalizing the state. This family of states is closed under the action of Gaussian unitary operations, which are defined precisely as those transformations that map any Gaussian state to another Gaussian state.

The correspondence between the Fock representation and the Segal--Bargmann representation is summarized in \cref{tab:fock-bargmann-correspondence}.

\section{Proof of \texorpdfstring{\cref{lemma:norm_H}}{}}
\label{appendix:proofs}

Here we give the proof of \cref{lemma:norm_H} used in the proof of \cref{thm:lower_bound_appox_coherent_stellar}. We recall the lemma below:
    
\setcounter{lemma}{0}

\begin{lemma}\label{lemma:norm_Happ}
Let $N\ge0$ and let $\ket\psi= \sum_{n=0}^\infty \psi_n\ket{n}$ and $\ket\phi= \sum_{n=0}^\infty \phi_n\ket{n}$ be two single-mode pure states, with associated Hankel matrices $H_N(\psi)=(\sqrt{(i+j)!}\,\psi_{i+j})_{0\le i,j\le N}$ and $H_N(\phi)=(\sqrt{(i+j)!}\,\phi_{i+j})_{0\le i,j\le N}$, respectively. Then, 
    \begin{equation}
        \|\ket\psi-\ket\phi\|_2^2\ge\frac{1}{(N+1)(2N)!}\|H_N(\psi)-H_N(\phi)\|_F^2.
    \end{equation}
\end{lemma}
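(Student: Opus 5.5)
The plan is a direct entrywise comparison. Set $\delta_n\coloneqq\psi_n-\phi_n$ for all $n\in\mathbb N$. The difference of the two Hankel matrices is again a Hankel matrix,
\begin{equation}
    H_N(\psi)-H_N(\phi)=\big(\sqrt{(i+j)!}\,\delta_{i+j}\big)_{0\le i,j\le N},
\end{equation}
so its squared Frobenius norm can be organized by anti-diagonals. On the anti-diagonal $i+j=n$, with $0\le n\le 2N$, there are exactly $m_n\coloneqq\min(n,2N-n)+1$ entries, and each equals $\sqrt{n!}\,\delta_n$. This gives the exact identity
\begin{equation}
    \|H_N(\psi)-H_N(\phi)\|_F^2=\sum_{n=0}^{2N} m_n\, n!\,|\delta_n|^2 .
\end{equation}

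Next, each weight is bounded uniformly. We have $m_n\le N+1$ for all $0\le n\le 2N$, since $\min(n,2N-n)\le N$. We also have $n!\le(2N)!$ for $n\le 2N$. Together these give $m_n\,n!\le (N+1)(2N)!$, and therefore
\begin{equation}
    \|H_N(\psi)-H_N(\phi)\|_F^2\le (N+1)(2N)!\sum_{n=0}^{2N}|\delta_n|^2 .
\end{equation}

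Finally, the truncated sum is compared with the full norm. Since the Fock basis is orthonormal, $\|\ket\psi-\ket\phi\|_2^2=\sum_{n\ge0}|\delta_n|^2\ge\sum_{n=0}^{2N}|\delta_n|^2$. Dividing the previous display by $(N+1)(2N)!$ then yields the claimed inequality.

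None of these steps is a real obstacle. The only point to check is the anti-diagonal count $m_n$, together with the observation that the crude bound $n!\le(2N)!$ is what produces the factor $(2N)!$ in the statement. Normalization of $\ket\phi$ is never used, so the argument also covers the unnormalized superpositions of coherent states appearing in the proof of \cref{thm:lower_bound_appox_coherent_stellar}.
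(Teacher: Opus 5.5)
Your proof is correct and follows essentially the same route as the paper: there, the claim is reduced by linearity to a single vector (your $\delta_n$), the bound $(i+j)!\le(2N)!$ is applied, the entries on each anti-diagonal are counted with $m_n\le N+1$, and the tail $n>2N$ is dropped. The only cosmetic difference is that you first write the exact anti-diagonal identity $\|H_N(\psi)-H_N(\phi)\|_F^2=\sum_{n=0}^{2N}m_n\,n!\,|\delta_n|^2$ and then bound the weights. The paper bounds the factorial before regrouping, though its rescaled version in the appendix proceeds exactly in your order.
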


\begin{proof}
We have $H_N(\psi)-H_N(\phi)=H_N(\psi-\phi)$, so by linearity it is enough to show $\|\ket\psi\|_2^2\ge\frac{1}{(N+1)(2N)!}\|H_N(\psi)\|_F^2$.
We have
\begin{align}
    \| H_N(\psi) \|_F^2 &= \sum_{i,j=0}^N (i+j)!|\psi_{i+j}|^2\\
    &\le(2N)!\sum_{i,j=0}^N|\psi_{i+j}|^2.
\end{align}   
For $n \in \{0, \dots, 2N\}$, the number of pairs $(i,j)$ such that $i + j = n$ and $0\leq i, j\leq N$ is given by
 \begin{equation}\label{eq:defmn}
m_n = 
\begin{cases}
n + 1 & 0 \leq n \leq N \\
2N - n +1 & N < n \leq 2N,
\end{cases}
 \end{equation} 
so $m_n \leq N + 1$ for all $n$.

Therefore,
 \begin{align}
    \| H_N(\psi) \|_F^2 &\le (2N)!\sum_{n=0}^{2N}m_n|\psi_n|^2\\
    &\le(N+1)(2N)!\sum_{n=0}^{2N}|\psi_n|^2\\
    &\le(N+1)(2N)!\|\ket\psi\|_2^2.
\end{align}  
\end{proof}

\section{Optimized single-mode lower bounds}
\label{app:optib}

In this section, we introduce a refined version of \cref{thm:lower_bound_appox_coherent_stellar} yielding a tighter family of bounds which may be optimized numerically, and investigate the tightness of these bounds on numerical examples.

\begin{theorem}[Optimized lower bounds on the $\varepsilon$-approximate coherent state rank via coefficient rescaling]
Let $\ket{\psi} = \sum_{n=0}^\infty \psi_n\ket{n}$ be a single-mode quantum state, let $N\in\mathbb N$ be a truncation parameter, let $b>0$ be a rescaling parameter, and let $H_{N,b}(\psi)$ be the Hankel matrix built from the first $2N+1$ rescaled coefficients $\{b^n\sqrt{n!}\psi_n\}_{n=0}^{2N}$ of $\ket{\psi}$, i.e.\ $(H_{N,b}(\psi))_{ij}=b^{i+j}\sqrt{(i+j)!}\,\psi_{i+j}$, for $0\leq i ,j \leq N$. Let
\begin{equation}\label{eq:defg}
    m_n \coloneqq \begin{cases}
        n+1 &0\leq n\leq N,\\
        2N-n+1 &N+1\leq n\leq 2N.
    \end{cases}
\end{equation}
Then, 
\begin{equation}
\label{eq:genlow2}
    \varepsilon < \max_{b>0,N\in\mathbb N}\left[\frac1{2\max_{n=0,\dots,2N}(m_nb^{2n}n!)}\sum_{l=r+1}^{N+1}\sigma_l(H_{N,b}(\psi))^2\right]\quad\Rightarrow\quad\kappa_\varepsilon(\psi)>r,
\end{equation}
for all $r\in\mathbb N$.
\label{thm:optimised_b}
\end{theorem}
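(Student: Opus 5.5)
The plan is to rerun the proof of \cref{thm:lower_bound_appox_coherent_stellar} with the extra weight $b^n$ attached to the rescaled coefficients. Afterwards we optimize over $(b,N)$. Fix $r\in\mathbb N$, $N\ge r$ and $b>0$, and let $\ket\phi=\sum_{k=1}^r c_k\ket{\alpha_k}$ be any superposition of $r$ coherent states. By \cref{eq:relnorm} it suffices to show
\begin{equation*}
\|\ket\psi-\ket\phi\|_2^2\ge\frac{1}{\max_{0\le n\le 2N}(m_nb^{2n}n!)}\sum_{l=r+1}^{N+1}\sigma_l(H_{N,b}(\psi))^2 .
\end{equation*}
Then $F(\psi,\phi)\le 1-\varepsilon_0$, where $\varepsilon_0$ is the bracketed quantity in \cref{eq:genlow2}. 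So any $\varepsilon<\varepsilon_0$ forces $\kappa_\varepsilon(\psi)>r$.

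Since this holds for every pair $(b,N)$, we may take the supremum over $b>0$ and $N$. If the supremum is not attained, we use the strict inequality: any $\varepsilon$ below the supremum lies below the bracket for some particular $(b,N)$. For $N<r$ the sum is empty and the bracket is $0$, so those terms are harmless.

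\emph{Rank step.} We have $b^n\sqrt{n!}\,\phi_n=\sum_k d_k(b\alpha_k)^n$ with $d_k=c_ke^{-|\alpha_k|^2/2}$. This is again a sum of $r$ exponentials, so by \cref{eq:HVDV} we get $H_{N,b}(\phi)=V(b\bm\alpha)^TD(\bm d)V(b\bm\alpha)$, which has rank at most $r$. The Young--Eckart--Mirsky theorem (\cref{thm:YEM}) then gives
\begin{equation*}
\|H_{N,b}(\psi)-H_{N,b}(\phi)\|_F^2\ge\sum_{l>r}\sigma_l(H_{N,b}(\psi))^2 .
\end{equation*}

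\emph{Norm step.} This is the analogue of \cref{lemma:norm_H}, and it is the only place where care is needed. By linearity, $H_{N,b}(\psi)-H_{N,b}(\phi)=H_{N,b}(\psi-\phi)$. For any vector $\chi$, grouping the entries by anti-diagonal $n=i+j$ gives exactly
\begin{equation*}
\|H_{N,b}(\chi)\|_F^2=\sum_{n=0}^{2N}m_nb^{2n}n!\,|\chi_n|^2\le\max_{0\le n\le2N}(m_nb^{2n}n!)\,\|\chi\|_2^2 .
\end{equation*}
Here $m_n$ counts the pairs $(i,j)$ with $i+j=n$ and $0\le i,j\le N$, as in \cref{eq:defmn}. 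Unlike in \cref{lemma:norm_H}, we do not bound $m_n$ and $n!$ separately by $N+1$ and $(2N)!$. Instead we keep their product together with $b^{2n}$ inside one maximum; this is what makes the bound sharper. Applying this with $\chi=\psi-\phi$ and combining with the rank step yields the displayed inequality.

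I expect no serious obstacle; the argument is a direct reweighting of the earlier proof. The points to check are the following. The Vandermonde factorization must survive the substitution $\alpha_k\mapsto b\alpha_k$; it does, since $b\neq0$ and the weight $b^{i+j}=b^ib^j$ factors across rows and columns. The passage from the maximum over $(b,N)$ to a strict inequality needs the supremum argument above. Finally, taking $b=1$ recovers a bound at least as strong as \cref{thm:lower_bound_appox_coherent_stellar}, because $\max_n m_nn!\le(N+1)(2N)!$.
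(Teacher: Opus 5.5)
Your proposal is correct and follows the paper's proof essentially step for step. Both use the reweighted Vandermonde factorization $H_{N,b}(\phi)=V(b\bm\alpha)^TD(\bm d)V(b\bm\alpha)$ with Young--Eckart--Mirsky, then the exact anti-diagonal identity $\|H_{N,b}(\chi)\|_F^2=\sum_{n}m_nb^{2n}n!\,|\chi_n|^2$ and \cref{eq:relnorm}; your remarks on the supremum over $(b,N)$ possibly not being attained and on the case $N<r$ are small points the paper leaves implicit.
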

    
\begin{proof}
    \label{proof:optimised_b}
The proof is a simple extension of that of \cref{thm:lower_bound_appox_coherent_stellar}.
Let
\begin{equation}
\ket{\phi} = \sum_{k=1}^r c_k \ket{\alpha_k}
\end{equation}
be an arbitrary superposition of $r$ distinct coherent states.
Since $b>0$ and $N\in\mathbb N$ are free parameters, taking the contrapositive in \cref{eq:genlow2}, we aim to show that the fidelity between the state $\ket\psi$ and any superposition of $r$ coherent states is upper bounded by
$1-\frac{1}{2\max_{n=0,\dots,2N}(m_nb^{2n}n!)}\sum_{l>r}\sigma_l(H_{N,b}(\psi))^2$, where $H_{N,b}(\psi)=(b^{i+j}\sqrt{(i+j)!}\,\psi_{i+j})_{0\le i,j\le N}$ is the Hankel matrix associated to the rescaled Fock amplitudes of the state $\ket\psi$. By \cref{eq:relnorm}, it is enough to show
\begin{equation}\label{eq:simplerbound2}
    \|\ket\psi-\ket\phi\|_2^2\ge\frac1{\max_{n=0,\dots,2N}(m_nb^{2n}n!)}\sum_{l=r+1}^{N+1}\sigma_l(H_{N,b}(\psi))^2.
\end{equation}
To prove this bound, we examine the representation of $\ket\phi$ in the Fock basis: we define
\begin{equation}
    \phi_n \coloneqq \braket{n|\phi} = \sum_{k=1}^r c_k e^{-|\alpha_k|^2/2} \frac{\alpha_k^n}{\sqrt{n!}}.
\end{equation}
Rescaling these coefficients by $b^n\sqrt{n!}$ yields 
\begin{equation}
    b^n\sqrt{n!}\,\phi_n=\sum_{k=1}^rd_k(b\alpha_k)^n,
\end{equation}
where we have set $d_k\coloneqq c_k e^{-|\alpha_k|^2/2}$. By \cref{eq:HVDV}, the associated Hankel matrix $H_{N,b}(\phi)=(b^{i+j}\sqrt{(i+j)!}\,\phi_{i+j})_{0\le i,j\le N}$ takes the form
\begin{equation}
    H_{N,b}(\phi)=V(b\bm\alpha)^TD(\bm d)V(b\bm\alpha),
\end{equation}
for all $N\ge r$, where $V(b\bm\alpha)$ is a $r\times(N+1)$ Vandermonde matrix and $D(\bm d)=\mathrm{Diag}(d_1,\dots,d_r)$. In particular, the rank of $H_{N,b}(\phi)$ is at most $r$, so by the Young--Eckart--Mirsky theorem (\cref{thm:YEM}),
\begin{equation}\label{eq:usingYEM2}
    \|H_{N,b}(\psi)-H_{N,b}(\phi)\|_F^2\ge\sum_{l=r+1}^{N+1}\sigma_l(H_{N,b}(\psi))^2,
\end{equation}
where $\|\cdot\|_F$ denotes the Frobenius norm.

Finally, we relate the Euclidean distance of states to the Frobenius distance of the associated Hankel matrices via the following technical result, whose proof is deferred to the end of the section:

\setcounter{lemma}{5}
\begin{lemma}
    Let $N\ge0$, let $b>0$, and let $\ket\psi= \sum_{n=0}^\infty \psi_n\ket{n}$ and $\ket\phi= \sum_{n=0}^\infty \phi_n\ket{n}$ be two single-mode pure states, with associated Hankel matrices $H_{N,b}(\psi)=(b^{i+j}\sqrt{(i+j)!}\,\psi_{i+j})_{0\le i,j\le N}$ and $H_{N,b}(\phi)=(b^{i+j}\sqrt{(i+j)!}\,\phi_{i+j})_{0\le i,j\le N}$, respectively. Then, 
    \begin{equation}
        \|\ket\psi-\ket\phi\|_2^2\ge\frac1{\max_{n=0,\dots,2N}(m_nb^{2n}n!)}\|H_{N,b}(\psi)-H_{N,b}(\phi)\|_F^2,
    \end{equation}
    where $m_n$ is defined in \cref{eq:defg}.
    \label{lemma:norm_H2}
\end{lemma} 

\noindent Combining \cref{lemma:norm_H2} with \cref{eq:usingYEM2} yields \cref{eq:simplerbound2}, and concludes the proof.
\end{proof}

We now give the proof of \cref{lemma:norm_H2}:

\begin{proof}
    Similar to the proof of \cref{lemma:norm_Happ} in \cref{appendix:proofs}, it is enough to show $\|\ket\psi\|_2^2\ge\frac1{\max_{n=0,\dots,2N}(m_nb^{2n}n!)}\|H_{N,b}(\psi)\|_F^2$ by linearity.
    We have
\begin{align}
    \| H_{N,b}(\psi) \|_F^2 &= \sum_{i,j=0}^N b^{2i+2j}(i+j)!|\psi_{i+j}|^2\\
    &=\sum_{n=0}^{2N}m_nb^{2n}n!|\psi_n|^2,
\end{align}   
where for $n \in \{0, \dots, 2N\}$, $m_n$ denotes the number of pairs $(i,j)$ such that $i + j = n$ and $0\leq i, j\leq N$ (see \cref{eq:defmn}).

Therefore,
 \begin{align}
    \| H_{N,b}(\psi) \|_F^2 &\le\max_{n=0,\dots,2N}(m_nb^{2n}n!)\sum_{n=0}^{2N}|\psi_n|^2\\
    &\le\max_{n=0,\dots,2N}(m_nb^{2n}n!)\,\|\ket\psi\|_2^2.
\end{align}  
\end{proof}

For Fock states $\ket N$, the Hankel matrix $H_{N,b}(\ket N)$ is anti-diagonal, with singular values equal to $b^N\sqrt{N!}$. Hence, taking $r=N$, \cref{thm:optimised_b} implies that $\kappa_\varepsilon(\ket N)=N+1$ whenever $\varepsilon<\frac{1}{2}\max_{b>0}\min_{n=0,\dots,2N}(\frac{b^{2N-2n}N!}{m_nn!})$, yielding a tighter bound than \cref{coro:coreFockACSR}.

\smallskip

We illustrate the tightness of the bounds from \cref{thm:lower_bound_appox_coherent_stellar} 
and its optimized version from \cref{thm:optimised_b} with two examples shown in 
\cref{fig:fock_bounds}. The left panel displays the infidelity between the superposition 
$\sqrt{1-\gamma}\ket{0}+\sqrt{\gamma}\ket{1}$, $\gamma\in[0,1]$, and the closest coherent state, alongside both bounds: this
shows how tight the bounds are at small Fock numbers. In particular, the optimized bound is closer to the exact value by one order of magnitude.
The right panel shows both bounds for
Fock states $\ket{n}$, $n=1,\dots,12$: the non-optimized bound decays super-exponentially 
due to the $(2N)!$ factor, dropping from $\mathcal{O}(10^{-1})$ at $n=1$ to 
$\mathcal{O}(10^{-17})$ at $n=12$, whereas the optimized bound mitigates this effect to remain at 
$\mathcal{O}(10^{-4})$ for $n=12$, showing significantly better scaling.

\begin{figure}[h!]
\centering
\includegraphics[width=0.48\textwidth]{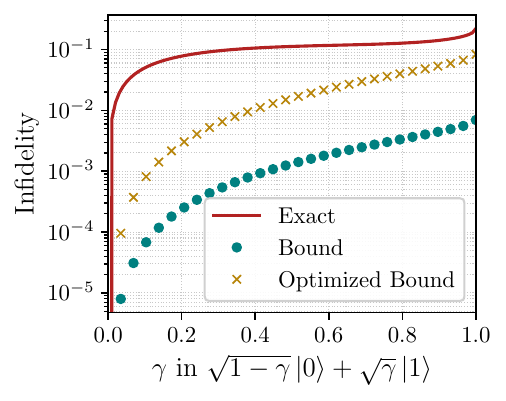}
\hfill
\includegraphics[width=0.48\textwidth]{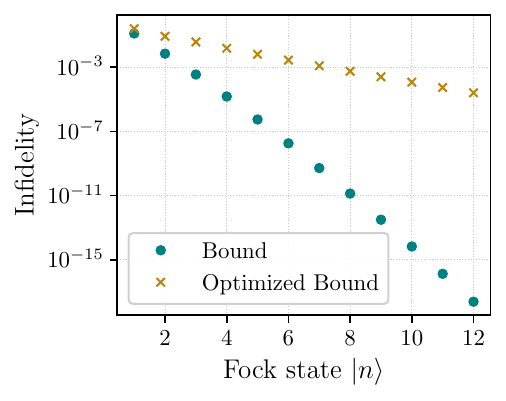}
\caption{%
    \label{fig:fock_bounds}
    Lower bounds on the infidelity between a target state $\ket\psi$ and the closest superposition of $\kappa(\psi)-1$ coherent states, showing the analytical bound from \cref{thm:lower_bound_appox_coherent_stellar} and the numerically optimized bound from \cref{thm:optimised_b}.
    \textit{Left:} Target state $\sqrt{1-\gamma}\ket{0}+\sqrt{\gamma}\ket{1}$, for which the exact value is also shown.
    \textit{Right:} Target state $\ket{n}$ for $n$ ranging from 1 to 12.
}
\end{figure}

\end{document}